\DeclareMathOperator*{\argmax}{arg\,max}
\newcommand{\wk}[1]{\textcolor{blue}{(wk: #1)}}
\theoremstyle{plain}
\newtheorem{theorem}{Theorem}[section]
\newtheorem{lemma}[theorem]{Lemma}
\newtheorem{proposition}[theorem]{Proposition}
\newtheorem{assumption}[theorem]{Assumption}
\theoremstyle{definition}
\theoremstyle{remark}
\newcommand{\proofstring}{Proof}
\renewenvironment{proof}[1][]{\noindent\textbf{\proofstring\ifthenelse{\equal{#1}{}}{:}{~(#1) :}}\xspace}{\hfill$\blacksquare$\medskip\par}
\newcommand{\R}{\mathds{R}}
\newcommand{\N}{\mathds{N}}
\newcommand{\E}{\mathds{E}}
\newcommand{\Prob}{\mathds{P}}
\newcommand{\Ind}{\mathds{1}}
\newcommand{\ind}{\Ind}
\renewcommand{\H}{\mathcal{H}} 
\newcommand{\tb}{\textbf} 
\title{A kernel test for quasi-independence}
\author{%
 Tamara Fern\'andez\\
  Gatsby Unit\\
 University College London\\
 \texttt{t.a.fernandez@ucl.ac.uk} \\
  \And 
  Wenkai Xu\\
  Gatsby Unit\\
 University College London\\
  \texttt{xwk4813@gmail.com} \\
  \And 
  Marc Ditzhaus  \\
Department of Statistics\\
TU Dortmund University\\
  \texttt{marc.ditzhaus@tu-dortmund.de} \\
\And 
  Arthur Gretton \\
  Gatsby Unit\\
  University College London\\
  \texttt{arthur.gretton@gmail.com } \\
}
\begin{document}
\maketitle

\begin{abstract}

We consider settings in which the data of interest correspond to pairs of ordered times, e.g, the birth times of the first and second child, the times at which a new user creates an account and makes the first purchase on a website, and the entry and survival times of patients in a clinical trial. In these settings, the two times are not independent (the second occurs after the first), yet it is still of interest to determine whether there exists significant dependence {\em beyond} their ordering in time. We refer to this notion as "quasi-(in)dependence".  For instance, in a clinical trial, to avoid biased selection, we might wish to verify that recruitment times are quasi-independent of survival times, where dependencies might arise due to seasonal effects. In this paper, we propose a nonparametric statistical test of quasi-independence. Our test considers a potentially infinite space of alternatives, making it suitable for complex data where the nature of the possible quasi-dependence is not known in advance.  Standard parametric approaches are recovered as special cases, such as the classical conditional Kendall's tau, and log-rank tests. The tests apply in the right-censored setting: an essential feature in clinical trials, where patients can withdraw from the study.  We provide an asymptotic analysis of our test-statistic, and demonstrate in experiments that our test obtains better power than existing approaches, while being more computationally efficient.
\end{abstract}

\section{Introduction}

Many practical scientific problems require the study of events which occur consecutively in time. We focus here on the setting where event-times, $X$ and $Y$, are only observed if they are in the ordered relationship $X\leq Y$. This type of data is commonly known as \emph{truncated data}, and, in particular, we say that $X$ is right-truncated by $Y$, or $Y$ is left-truncated by $X$. In clinical trails, for example, only patients still alive at the beginning  of the study can be recruited, hence the recruitment times $X$ and the survival times $Y$ are ordered. In the field of  insurance, a liability claim may be placed at a time $Y$ as a consequence of an incident at a time $X$. In e-commerce,  the time $Y$ of first purchase by a new user may only happen after the time $X$  when the user registers with the website. 

Our goal is to determine whether there exists an association between $X$ and $Y$ in the truncated data setting. 
Given that $X\leq Y$, the times $X$ and $Y$ clearly will not be independent (with the exception of trivial cases in which, for instance, $X$ and $Y$ have disjoint support). 
Thus, while it is not meaningful to test for statistical independence in the truncated setting, we can nevertheless still test for whether $X$ and $Y$ are uncoupled apart from the fact that $X\leq Y$, using the notion of {\em quasi-independence}. We will make this notion formal in Section \ref{Sec:Quasi}.  

Testing for an association between ordered $X$ and $Y$ may be important in making business/medical decisions. In the setting of clinical trials, it is important to ensure that survival times are as ``independent" as possible from recruitment times, in order to avoid bias in the recruitment process. 
In e-commerce, it may be of interest to test whether the purchase time for an item, such as a swimsuit, depends on the registration time, to determine seasonal effects on consumer behaviour and refine advertising strategies.
In statistical modelling, a common working assumption is that $X$ and $Y$ are independent, but can only be observed when $X\leq Y$ holds: see e.g, \cite{hyde1977testing, tsai1988estimation, Woodroofe1985}, and \cite[Chapter 9]{klein2006survival}.
The independence assumption can be weakened to quasi-independence, which is testable, and under which typical methods are still valid \cite{klein2006survival,LagakosBarrajDeGruttola1988,TsaiETAL1987,Turnbull1976,Wang1991,Woodroofe1985}. 

Our tests apply in the setting where $Y$ is right-censored. This is a very common scenario in real-world applications, particularly in clinical trials, where patients may withdraw from the study before their event of interest is observed. 
In the e-commerce example, there may be registered users that have not yet made a purchase when the study ends. Formally, the data corresponds to the triple $(X,T,\Delta)$, where $T=\min\{C,Y\}$ is the minimum between the survival time $Y$ of a given patient, and the time $C$ at which said patient leaves the study (or the study ends),  and $\Delta=\ind_{\{T=Y\}}$. Given the truncated data setting, we have further that $X\leq \min\{Y,C\}$. We emphasise that quasi-independence and right-censoring are very different data properties. Quasi-independence is a deterministic hard constraint $(X\leq Y)$, while right-censoring is a stochastic property of the data (incomplete observations). 

Quasi-independence has been widely studied in the statistics community, including for right-censored data: we provide a brief review below (more detailed descriptions of relevant concepts and methods will be provided in subsequent sections). 
 
In this work, we propose a non-parametric statistical test for quasi-independence, which applies under right censoring. Our test statistic is a nonparametric generalisation of the log-rank test proposed by \cite{EmuraWang2010}, where the departure from the null is characterised by functions in a reproducing kernel Hilbert space (RKHS).  Consequently, we are able to straightforwardly detect a very rich family of alternatives, including non-monotone alternatives. 
Our test generalises statistical tests of independence based on the Hilbert-Schmidt Independence Criterion \cite{gretton2008kernel}; which were adapted to the right-censoring setting in \cite{fernandez2019kernel,rindt2019nonparametric}. Due to the additional correlations present in the test statistic under quasi-independence, however, we will require new approaches in our analysis of the consistency and asymptotic behaviour of our test statistic, compared with these earlier works.  

\begin{figure}[t!]
  \centering
\includegraphics[width=1\textwidth]{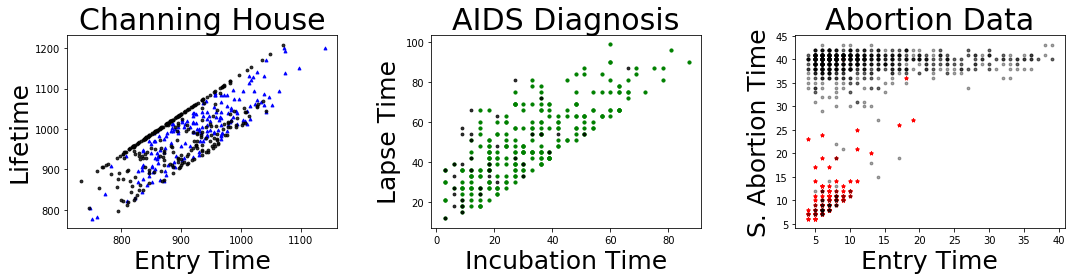}
\caption{Channing House dataset: the x-axis shows the entry time to the retirement center; and the y-axis shows the right-censored lifetimes . Events are censored by withdrawal from the center or study finishes at July 1, 1975. AIDS dataset: the x-axis shows the incubation time X; and the y-axis shows the censored lapse time Y, measured from infection to recruitment time. Events are censored by death or left the study.
Infected patients were recruited in the study only if they developed AIDS within the study period, therefore, in this dataset, the incubation time X does not exceed the lapse time Y. 
Abortion dataset: the x-axis shows the time to enter the study; and the y-axis shows the right-censored time for spontaneous abortion. Events are censored due to life birth and induced abortions. 
All censored times are marked in dark.}\label{Fig:1}
\vspace{-0.6cm}
\end{figure}
In Section \ref{Sec:Quasi}, we introduce the notion of quasi-independence. We next propose an RKHS statistic to detect this quasi-independence, and
its finite sample estimate from data.
We contrast the statistic for quasi-independence with the analogous RKHS statistic for independence, noting the additional sample dependencies on account of the left-truncation. Next, in Section \ref{sec:right-censoring}, we
generalise the quasi-independence statistics to account for the presence of 
right-censored observations. In Section \ref{Sec:asymptotic}, we provide our main theoretical results: an asymptotic analysis for our test statistic, and a guarantee of consistency under the alternative. In order to determine the test threshold in practice, we  introduce a Wild Bootstrap procedure to approximate the test threshold. In Section \ref{sec:Experiments}  we give a detailed empirical evaluation of our method.
We begin with challenging synthetic datasets exhibiting periodic quasi-dependence, as would be expected for example from seasonal or daily variations, where our approach strongly outperforms the alternatives. Additionally, we show our test is consistently the best test in data-scenarios in which the censoring percentage is relatively high, see Figure \ref{fig:censor_level}. 
Next, we apply our test statistic to three real-data scenarios, shown in Figure \ref{Fig:1}: a survival analysis study  for residents in the Channing House retirement community in Palo Alto, California \cite{hyde1977testing};  a study of transfusion-related AIDS \cite {LagakosBarrajDeGruttola1988}; and a spontaneous abortion study \cite{meister2008statistical}. For this last dataset, our general-purpose test is able to detect a mode of quasi-dependence discovered by a model that exploits domain-specific knowledge, but not found by alternative  general-purpose testing approaches. This was a particular challenge due to the large percentage of censored observations in the abortion dataset; see Figure \ref{fig:censor_level}. Proofs of all results are given in the Appendix.

\section{Quasi-independence}\label{Sec:Quasi}

Our goal is to infer the null hypothesis of quasi-independence between $X$ and $Y$. Formally, this null hypothesis is characterised as
\begin{align}
H_0: \pi(x,y)=\widetilde{F}_{X}(x)\widetilde{S}_{Y}(y),\quad \text{for all }x\leq y\label{eqn:q1},
\end{align}
where $\pi(x,y)=\Prob(X\leq x,Y\geq y)$, and  $\tilde{F}_X(x)$ and $\tilde{S}_Y(y)$ are functions that only depend on $x$ and $y$, respectively. In case of independent $X$ and $Y$, $\widetilde{F}_X(x)$ and $\widetilde{S}_Y(y)$ coincide with $F_{X}(x)=\Prob(X\leq x)$ and $S_Y(y)=\Prob(Y\geq y)$, but in general they may differ. For simplicity, $X$ and $Y$ are assumed continuously distributed on $\R_+$, and $f_{XY}$, $f_X$ and $f_Y$ denote the joint density and the corresponding marginals, and $f_{Y|X=x}$ denotes the conditional density of $Y$ given $X=x$. 
 
To simplify the notation, we suppose throughout that $X\leq Y$ always holds, and thus write $\pi(x,y)=\Prob(X\leq x,Y\geq y)$ instead of $\pi(x,y)=\Prob(X\leq x,Y\geq y|X\leq Y)$, as $\Prob(X\leq Y)=1$. We remark, however, that the ordering $X \leq Y$ can be ensured by considering a conditional probability space given $X \leq Y,$ and restricting calculations of probabilities, expectation etc. to this space; see \cite{chiou2018permutation,EmuraWang2010,Tsai1990}.

The notion of quasi-independence must not be confused with the notion of independent increments, i.e., $X\perp (Y-X)$. For instance, generate $X$ and $Y$ such that $X\leq Y$ by sampling i.i.d. uniform random variables, say $(U_1,U_2)$, in the interval $(0,1)$, and make $X=U_1$ and $Y=U_2$ for the first pair $(U_1,U_2)$ such that $U_1\leq U_2$. It can be verified that this construction leads to quasi-independent random variables $(X,Y)$, but $X$ and $Y-X$ are not independent as the distribution of $Y-X$ is constrained by how large the original value of $X$ was. The larger $X$ is, the smaller is the value of $Y-X$.

In \cite{EmuraWang2010}, the authors propose to measure quasi-independence by using a log-rank-type test-statistic which estimates $\int_{x \leq y}\omega(x,y)\rho(x,y)dxdy$, where 
\begin{align}\label{eqn:rho}
\rho(x,y)&=-\pi(x,y)\frac{\partial^2\pi(x,y)}{\partial x\partial y }+\frac{\partial\pi(x,y)}{\partial x}\frac{\partial\pi(x,y)}{\partial y}, \,x \leq y.
\end{align}
The function $\rho$ is originally inspired by the odds ratio proposed by \cite{chaieb2006estimating} (notwithstanding that $\rho$ is here a difference, rather than a ratio). Under the assumption of quasi-independence, $\rho=0$, and thus $\int_{x \leq y}\omega(x,y)\rho(x,y)dxdy=0$.  Nevertheless, it may be that  $\int_{x \leq y}\omega(x,y)\rho(x,y)dxdy=0$ even if the quasi-independence assumption is not satisfied, since the quantity depends on the function $\omega$: for instance, it is trivially zero when $\omega=0$. To avoid choosing a specific weight function $\omega$, we optimise over a class of weight functions, taking an RKHS approach,
\begin{align}\label{eq:kqic}
\Psi&=\sup_{\omega\in B_1(\mathcal{H})} \int_{x\leq y}\omega(x,y)\rho(x,y)dxdy,
\end{align}
where $B_1(\mathcal{H})$ is the unit ball of a reproducing kernel Hilbert space $\mathcal{H}$ with bounded measurable kernel given by $\mathfrak{K}:\R_+^2\times\R_+^2 \to \R$. We refer to the measure $\Psi^2$ as \emph{Kernel Quasi-Independent Criterion (KQIC)}. It can  easily be verified that $\Psi\geq 0$; and, if $X$ and $Y$ are quasi-independent, then $\Psi=0$. For $c_0$-universal kernels \cite{sriperumbudur2011universality}, we  have that $\Psi=0$ if and only if $X$ and $Y$ are quasi-independent: see Theorem \ref{theo:consistency}.

Given the i.i.d. sample $((X_i,Y_i))_{i\in[n]}$, we can estimate $\Psi$ via $\Psi_n$, defined as
\begin{align}
\Psi_n&=\sup_{\omega\in B_1(\mathcal{H})}\left(\frac{1}{n}\sum_{i=1}^n\omega(X_i,Y_i)\widehat\pi(X_i,Y_i)-\frac{1}{n^2}\sum_{i=1}^n\sum_{k=1}^n\omega(X_i,Y_k)\ind_{\{X_k\leq X_i<Y_k\leq Y_i\}}\right)\label{eqn:Test2}
\end{align}
where $\widehat{\pi}(x,y)=\frac{1}{n}\sum_{m=1}^n\ind_{\{X_m\leq x,Y_m\geq y\}}$, and notice that $\widehat\pi(x,y)$ estimates $\pi(x,y)$. Using a reproducing kernel $\mathfrak{K}$ that factorises, we obtain a simple expression for $\Psi_n^2$:

\begin{proposition}\label{Prop:evaluationQuasiAlte}
 Consider $\mathfrak{K}((x,y),(x',y'))=K(x,x')L(y,y')$. Then
\begin{align*}
\Psi_n^2
&=\frac{1}{n^2}\text{trace}(\boldsymbol{K}\boldsymbol{\widehat{\pi}}\boldsymbol{L}\boldsymbol{\widehat{\pi}}-2{\boldsymbol K}\boldsymbol{\widehat{\pi}}{\boldsymbol L}\boldsymbol{{A}}^\intercal+\boldsymbol{K}\boldsymbol{ A}\boldsymbol{L}\boldsymbol{{A}}^\intercal)
\end{align*}
where $\boldsymbol K$, $\boldsymbol L$, and $\boldsymbol A$ are $n\times n$-matrices with entries given by $\boldsymbol{K}_{ik}=K(X_i,X_k)$, $\boldsymbol{L}_{ik}=L(Y_i,Y_k)$ and $\boldsymbol{{A}}_{ik}=\ind_{\{X_k\leq X_i< Y_k\leq Y_i\}}/n$, and $\boldsymbol{\widehat{\pi}}$ is a diagonal matrix with entries $\boldsymbol{\widehat{\pi}}_{ii}=\widehat{\pi}(X_i,Y_i)$.
\end{proposition}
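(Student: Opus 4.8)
The plan is to recognise the bracketed quantity in \eqref{eqn:Test2} as a bounded linear functional of $\omega$, evaluate the supremum over $B_1(\mathcal H)$ in closed form by Cauchy--Schwarz, and then expand the resulting squared RKHS norm using the factorised kernel. Concretely, using the reproducing property $\omega(x,y)=\langle\omega,\mathfrak{K}((x,y),\cdot)\rangle_{\mathcal H}$ together with linearity of the inner product, for each fixed sample the argument of the supremum in \eqref{eqn:Test2} equals $\langle\omega,\widehat\mu\rangle_{\mathcal H}$, where
\[
\widehat\mu=\frac1n\sum_{i=1}^n\widehat\pi(X_i,Y_i)\,\mathfrak{K}\big((X_i,Y_i),\cdot\big)-\frac1{n^2}\sum_{i=1}^n\sum_{k=1}^n\ind_{\{X_k\leq X_i<Y_k\leq Y_i\}}\,\mathfrak{K}\big((X_i,Y_k),\cdot\big)\in\mathcal H
\]
is a well-defined element of $\mathcal H$ because $\mathfrak{K}$ is bounded and the sum is finite. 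Taking the supremum over the unit ball, Cauchy--Schwarz (with equality at $\omega=\widehat\mu/\|\widehat\mu\|_{\mathcal H}$, or trivially if $\widehat\mu=0$) yields $\Psi_n=\|\widehat\mu\|_{\mathcal H}$, hence $\Psi_n^2=\langle\widehat\mu,\widehat\mu\rangle_{\mathcal H}$.

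Next I would expand $\langle\widehat\mu,\widehat\mu\rangle_{\mathcal H}$ bilinearly into three groups of terms and evaluate the feature inner products via $\langle\mathfrak{K}((x,y),\cdot),\mathfrak{K}((x',y'),\cdot)\rangle_{\mathcal H}=\mathfrak{K}((x,y),(x',y'))=K(x,x')L(y,y')$, which is where the factorisation hypothesis enters. Writing $\widehat\pi_i=\widehat\pi(X_i,Y_i)$, $K_{ij}=K(X_i,X_j)$, $L_{ij}=L(Y_i,Y_j)$ and $\ind_{ik}=\ind_{\{X_k\leq X_i<Y_k\leq Y_i\}}$, this produces
\[
\Psi_n^2=\frac1{n^2}\sum_{i,j}\widehat\pi_i\widehat\pi_j K_{ij}L_{ij}-\frac2{n^3}\sum_{i,j,k}\widehat\pi_i\,\ind_{jk}\,K_{ij}L_{ik}+\frac1{n^4}\sum_{i,j,k,l}\ind_{ij}\ind_{kl}\,K_{ik}L_{jl}.
\]

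Finally I would rewrite each of these multiple sums as the trace of a matrix product. Using the symmetry of $\boldsymbol K$, $\boldsymbol L$ and the diagonal matrix $\boldsymbol{\widehat\pi}$, together with invariance of the trace under cyclic permutation and transposition, the first sum becomes $\text{trace}(\boldsymbol K\boldsymbol{\widehat\pi}\boldsymbol L\boldsymbol{\widehat\pi})$; recalling $\boldsymbol A_{ik}=\ind_{ik}/n$, the cross term becomes $\text{trace}(\boldsymbol K\boldsymbol{\widehat\pi}\boldsymbol L\boldsymbol A^\intercal)$ and the last term $\text{trace}(\boldsymbol K\boldsymbol A\boldsymbol L\boldsymbol A^\intercal)$, and collecting the three recovers the stated identity. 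The only delicate point is this last bookkeeping step: because $\omega$ is evaluated at the ``mismatched'' pairs $(X_i,Y_k)$, a given summation index occupies an $X$-slot of one kernel factor and a $Y$-slot of the other (or neither, in the $\widehat\pi$ term), so one must carefully track which argument each index carries and, since $\boldsymbol A$ is not symmetric, where the transposes $\boldsymbol A^\intercal$ appear; everything else is a routine rearrangement.
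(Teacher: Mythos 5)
Your proposal is correct and follows essentially the same route as the paper: apply the reproducing property to write the bracketed quantity as $\langle\omega,\widehat\mu\rangle_{\mathcal H}$, identify the supremum over $B_1(\mathcal H)$ with $\|\widehat\mu\|_{\mathcal H}$, expand the squared norm using the factorised kernel, and rearrange the resulting double, triple and quadruple sums into the stated traces. The only cosmetic difference is that the paper writes the mean element in the already-factorised form $K(X_i,\cdot)\bigl(L(Y_i,\cdot)\boldsymbol{\widehat\pi}_{ii}-\sum_k L(Y_k,\cdot)\boldsymbol A_{ik}\bigr)$ whereas you factorise only when evaluating the inner products, and your index bookkeeping for the trace identities matches the paper's.
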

We remark that the previous expression is similar in form to the Hilbert Schmidt Independence Criterion \cite{GreBouSmoSch05}. In particular, for empirical distributions, $\text{HSIC}(\widehat F_{XY},\widehat F_{X}\widehat F_Y)=\frac{1}{n^2}\text{trace}(\boldsymbol{K} \boldsymbol H^\intercal \boldsymbol L \boldsymbol H)$ with $\boldsymbol H=\boldsymbol I_n-\frac{1}{n}\boldsymbol 1_n\boldsymbol 1_n^\intercal$, whereas our test-statistic can be rewritten as $\Psi_n^2=\frac{1}{n^2}\text{trace}(\boldsymbol K \boldsymbol {\tilde H}^\intercal \boldsymbol L \boldsymbol{\tilde H})$ with $\boldsymbol{\tilde H}=(\boldsymbol{\widehat \pi}-\boldsymbol{A}^\intercal)$. Note that $\boldsymbol{\tilde H}$ is much more complex than $\boldsymbol H$, being a random matrix where each entry depends on all the data points. As we will see, this issue makes the asymptotic analysis in our case much more challenging; by contrast, the asymptotic distribution for HSIC can be readily obtained using standard  results on U-statistics  \cite{gretton2008kernel,ChwGre14}.

Our test can be understood as a generalisation of the log-rank test proposed by \cite{EmuraWang2010}, where instead of considering a single log-rank test with a specific weight function, we consider the supremum over a collection of log-rank tests with weights in $B_1(\mathcal{H})$. By choosing
a sufficiently rich RKHS, for example the RKHS induced by the exponentiated quadratic kernels, we are able to ensure power against a broad family of alternatives. Conversely, simple kernels can recover classical  parametric tests such as the aforementioned log-rank tests. As explained by \cite[Equation 7]{EmuraWang2010}, the simplest possible (constant) function space recovers the well-known conditional Kendall's tau:

\begin{proposition}[Recovering conditional Kendall's tau]
Consider $\mathfrak{K}=1$,  then $\Psi_n^2=K_a^2/n^2$, where $
K_a=\sum_{i<k}\ind_{\{X_i\vee X_k \leq Y_i\wedge Y_k\}}\textnormal{sign} \left((X_i-X_k)(Y_i-Y_k)\right)$ is an empirical estimator of the conditional Kendall's tau.
\end{proposition}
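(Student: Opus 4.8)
The plan is to use that the constant kernel makes the RKHS one–dimensional, so the supremum in \eqref{eqn:Test2} reduces to an absolute value, and then to recognise the resulting scalar as the conditional Kendall's tau statistic $K_a$ by an elementary count over pairs. One could equally well start from the closed form of Proposition~\ref{Prop:evaluationQuasiAlte} with $\boldsymbol K=\boldsymbol L=\boldsymbol 1_n\boldsymbol 1_n^{\intercal}$ and collapse the rank-one traces; the combinatorial heart is the same.

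First I would observe that with $\mathfrak K\equiv 1$ the RKHS $\mathcal H$ is one–dimensional, spanned by the constant function $\mathbf 1$ with $\|\mathbf 1\|_{\mathcal H}=1$, so $B_1(\mathcal H)=\{c\,\mathbf 1:|c|\le 1\}$. Since the bracket in \eqref{eqn:Test2} is linear in $\omega$, its supremum over $B_1(\mathcal H)$ is the absolute value of the bracket at $\omega\equiv 1$. Expanding $\widehat\pi(X_i,Y_i)=\tfrac1n\sum_m\ind_{\{X_m\le X_i,\,Y_m\ge Y_i\}}$, this shows $\Psi_n$ is proportional to $|S-T|$, where
\begin{align*}
S=\sum_{i,m}\ind_{\{X_m\le X_i,\,Y_m\ge Y_i\}},\qquad T=\sum_{i,k}\ind_{\{X_k\le X_i<Y_k\le Y_i\}}.
\end{align*}
As $(X,Y)$ is continuous with $X<Y$ almost surely there are no ties, so the diagonal terms $i=m$ and $i=k$ each equal $1$ and cancel in $S-T$, while off the diagonal the weak inequalities may be taken strict.

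The core of the argument is a pair count. For a pair $\{a,b\}$, labelled so that $X_a<X_b$, call it \emph{comparable} when $X_a\vee X_b\le Y_a\wedge Y_b$, and \emph{concordant} or \emph{discordant} according to the sign of $(X_a-X_b)(Y_a-Y_b)$. I would establish: (a) $\sum_{i\ne m}\ind_{\{X_m<X_i,\,Y_m>Y_i\}}$ equals the number $D$ of discordant pairs, every one of which is automatically comparable (since then $X_b<Y_b<Y_a$); (b) $\sum_{i\ne k}\ind_{\{X_k<X_i<Y_k<Y_i\}}$ equals the number $C$ of pairs with $X_a<X_b<Y_a<Y_b$, which are precisely the comparable concordant pairs. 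Hence $S-T=D-C$. On the other hand, in $K_a$ each comparable concordant pair contributes $+1$, each comparable discordant pair $-1$, and the remaining pairs, which are necessarily concordant, contribute $0$; so $K_a=C-D=-(S-T)$. Substituting back then gives $\Psi_n^2=K_a^2/n^2$.

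I expect the main obstacle to be the pair count (a)--(b). The delicate points are: that every discordant pair is comparable (equivalently, every non-comparable pair is concordant), which is what lets the off-diagonal part of $S$ — a priori merely a count of discordant pairs — be matched with the comparable discordant pairs that occur in $K_a$; and that the chain $X_a<X_b<Y_a<Y_b$ is exactly the event ``comparable and concordant'', the comparability condition $X_a\vee X_b\le Y_a\wedge Y_b$ collapsing to the single inequality $X_b<Y_a$ under these orderings. The reduction to a scalar and the handling of ties are routine by comparison.
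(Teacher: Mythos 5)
The paper never actually writes out a proof of this proposition --- it is justified only by the pointer to Equation 7 of Emura and Wang, and the appendix contains no argument for it --- so your pair-counting derivation is precisely the verification the paper omits, and its combinatorial core is sound. The reduction of the supremum to an absolute value for the one-dimensional RKHS, the almost-sure cancellation of the diagonal terms, the observation that the truncation constraint $X_j\leq Y_j$ forces every discordant pair to be comparable (equivalently, every non-comparable pair is concordant), and the identification of the chain $X_a<X_b<Y_a<Y_b$ with ``comparable and concordant'' are all correct, and together they give $S-T=D-C=-K_a$, i.e. $|S-T|=|K_a|$.

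The gap is the final normalisation, which is exactly the step you wave through with ``substituting back''. With the scalings in Equation \eqref{eqn:Test2}, namely $\widehat\pi(x,y)=\frac{1}{n}\sum_m\ind_{\{X_m\leq x,\,Y_m\geq y\}}$ and the prefactors $\frac1n$ and $\frac1{n^2}$, the bracket at $\omega\equiv 1$ equals $\frac{1}{n^2}(S-T)$, so your own argument yields $\Psi_n=\frac{1}{n^2}|K_a|$ and hence $\Psi_n^2=K_a^2/n^4$, not $K_a^2/n^2$, for the unnormalised $K_a=\sum_{i<k}\ind_{\{X_i\vee X_k\leq Y_i\wedge Y_k\}}\operatorname{sign}((X_i-X_k)(Y_i-Y_k))$ as defined in the statement. (A quick sanity check: under an alternative $\Psi_n^2$ is $O_p(1)$ while $K_a^2/n^2$ is $O_p(n^2)$.) The stated identity would hold if $K_a$ were rescaled by $1/n$, or equivalently if one works with the weight $n\widehat\pi$ as in the log-rank statistic $L_W$; the same factor-$n^2$ discrepancy appears in the paper's own claim $\Psi_{c,n}^2=\frac{1}{n^2}L_W^2$, which with the paper's definitions should likewise read $\frac{1}{n^4}L_W^2$. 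So the defect is one of bookkeeping rather than of ideas, but a complete proof must compute the proportionality constant explicitly and either correct the exponent or state the rescaling of $K_a$ under which the displayed identity is exact; asserting the target formula without that constant leaves the one point where the claim, as literally written, fails.
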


\section{Right-censoring}\label{sec:right-censoring}
In clinical trails, for example, patients might withdraw from the study before observing the time $Y$ of interest leading to so-called right-censored data. To model this kind of data, we introduce additionally the random censoring time $C$. The  data correspond now to i.i.d. samples $((X_i,T_i,\Delta_i))_{i\in[n]}$, where $T_i=\min\{Y_i,C_i\}$ is the observation time, and $\Delta_i=\ind_{\{T_i=Y_i\}}$ is the corresponding censoring status. In particular, if $\Delta_i=0$, we only observe the censoring time $T_i=C_i$, and not the time of interest $Y_i$. Throughout, we assume that $X_i<T_i$ always holds, to reflect the natural ordering of the times, i.e. first recruitment and second the event of interest or the withdrawal from the study. As for the uncensored setting, $X$, $Y$ and $C$ are supposed to be continuously distributed on $\R_+$. Our results are valid under the standard non-informative censoring assumption:
\begin{assumption}\label{Assu:1}
The censoring times are independent of the survival times given the entry times, i.e., $C_i\perp Y_i|X_i$.
\end{assumption}
Standard notation for marginal, joint and conditional densities will be used: for instance, $f_C$, $f_{XT}$ and $f_{Y|X=x}$, are the marginal density of $C$, the joint density of $X$ and $T$, and the conditional density of $Y$ given $X=x$, respectively. Moreover, $S_Y$ denotes the survival function of $Y$, defined as $S_{Y}(y)=\Prob(Y\geq y)$ and $S_{C|X=x}(y)=\Prob(Y\geq y|X=x)$ is the conditional survival function of $Y$ given $X=x$. Under Assumption \ref{Assu:1} we have $S_{T|X=x}(y)=S_{Y|X=x}(y)S_{C|X=x}(y)$.

The null hypothesis of \emph{quasi-independence} is formulated, for the right-censored setting, as 
\begin{align}\label{eqn:factorize_den}
H_0:f_{XY}(x,y)=\tilde f_X(x)\tilde f_Y(y),\quad \text{for all }x \leq y, \text{ s.t. }S_{T|X=x}(y)>0.
\end{align}
As with the uncensored case, $\tilde{f}_X$ and $\tilde f_Y$ are not necessarily equal to the marginal densities $f_X$ and $f_Y$. The additional condition $S_{T|X=x}(y)>0$ ensures that the pair $(x,y)$ is actually observable despite the censoring. The statistic $\Psi$ from Equation \eqref{eq:kqic} is then extended to the  censored setting,
\begin{align*}
\Psi_c &=\sup_{\omega\in B_1(\mathcal{H})}\int_{x\leq y}\omega(x,y)\rho^c(x,y)dxdy \geq 0, \\
\text{where }\,\rho^c(x,y) & =-\pi^c(x,y)\frac{\partial^2}{\partial x\partial y}\pi^c_1(x,y)+\frac{\partial\pi^c(x,y)}{\partial x} \frac{\partial\pi_1^c(x,y)}{\partial y},
\end{align*}
and $\pi^{c}_1(x,y)=\Prob(X\leq x, T\geq y,\Delta=1)$ and $\pi^{c}(x,y)=\Prob(X\leq x, T\geq y)$ for $x \leq y$. 

\begin{proposition}\label{prop:psi_c=0}
We have $\Psi_c= 0$ if the null hypothesis $H_0$ of quasi-independence is fulfilled.
\end{proposition}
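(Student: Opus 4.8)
\noindent The plan is to reduce the claim to showing that the integrand $\rho^c$ vanishes (almost everywhere on $\{x\le y\}$) under $H_0$: once $\rho^c\equiv 0$, we have $\int_{x\le y}\omega(x,y)\rho^c(x,y)\,dx\,dy=0$ for every $\omega\in B_1(\mathcal{H})$, hence $\Psi_c=0$. This mirrors the uncensored computation behind the assertion that $\Psi=0$ under quasi-independence, but with the extra censoring bookkeeping.

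First I would express the partial derivatives appearing in $\rho^c$ in terms of densities, using Assumption~\ref{Assu:1}. Writing $\pi^c(x,y)=\int_0^x f_X(w)S_{T|X=w}(y)\,dw$ with $S_{T|X=w}(y)=S_{Y|X=w}(y)S_{C|X=w}(y)$, and $\pi^c_1(x,y)=\int_0^x\!\int_y^\infty f_{XY}(u,v)S_{C|X=u}(v)\,dv\,du$ (the latter using $C\perp Y\mid X$), one gets $\partial_x\pi^c(x,y)=f_X(x)S_{T|X=x}(y)$, $\partial_x\partial_y\pi^c_1(x,y)=-f_{XY}(x,y)S_{C|X=x}(y)$ and $\partial_y\pi^c_1(x,y)=-\int_0^x f_{XY}(u,y)S_{C|X=u}(y)\,du$, so that
\begin{align*}
\rho^c(x,y)=\pi^c(x,y)\,f_{XY}(x,y)\,S_{C|X=x}(y)-f_X(x)S_{T|X=x}(y)\int_0^x f_{XY}(u,y)\,S_{C|X=u}(y)\,du .
\end{align*}

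Next I would substitute $H_0$. Fix $(x,y)$ with $x\le y$ and $S_{T|X=x}(y)>0$, and set $\tilde S_Y(y)=\int_y^\infty\tilde f_Y(v)\,dv$. Integrating the factorization $f_{XY}=\tilde f_X\tilde f_Y$ over $v\ge y$ gives $f_X(w)S_{Y|X=w}(y)=\int_y^\infty f_{XY}(w,v)\,dv=\tilde f_X(w)\tilde S_Y(y)$, whence $\pi^c(x,y)=\tilde S_Y(y)\int_0^x\tilde f_X(w)S_{C|X=w}(y)\,dw$ and $f_X(x)S_{T|X=x}(y)=\tilde f_X(x)\tilde S_Y(y)S_{C|X=x}(y)$. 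Plugging these, together with $f_{XY}(x,y)=\tilde f_X(x)\tilde f_Y(y)$ and $f_{XY}(u,y)=\tilde f_X(u)\tilde f_Y(y)$, into the displayed formula shows that both summands equal $\tilde f_X(x)\tilde f_Y(y)\,\tilde S_Y(y)\,S_{C|X=x}(y)\int_0^x\tilde f_X(w)S_{C|X=w}(y)\,dw$, so $\rho^c(x,y)=0$. On the complement $\{S_{T|X=x}(y)=0\}$ the factor $\partial_x\pi^c(x,y)=f_X(x)S_{T|X=x}(y)=0$ kills the second summand, while $S_{Y|X=x}(y)S_{C|X=x}(y)=0$ forces either $S_{C|X=x}(y)=0$ or $f_{XY}(x,y)=0$ (the latter for almost every such $(x,y)$, since $S_{Y|X=x}(y)=0$ makes $f_{Y|X=x}$ vanish a.e.\ beyond $y$), which kills the first; hence $\rho^c=0$ a.e.\ there as well, and $\Psi_c=0$ follows.

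The main obstacle is the step that moves the factorization $f_{XY}=\tilde f_X\tilde f_Y$, which $H_0$ only guarantees on the observable region $\{S_{T|X=x}(y)>0\}$, \emph{inside} the integrals defining $\pi^c$ and $f_X(x)S_{Y|X=x}(y)$: the tail integral $\int_y^\infty f_{XY}(w,v)\,dv$ sees values $v$ that may fail $S_{T|X=w}(v)>0$ (e.g.\ under administrative censoring, where the conditional censoring support is bounded). Closing this requires either a mild dominance assumption on the conditional censoring support, restricting the integral in $\Psi_c$ to the observable region, or checking that the excess contributions live exactly where the relevant densities vanish; this is the only genuinely delicate point, the remainder being the elementary differentiation and cancellation above.
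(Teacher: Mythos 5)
Your proposal is correct and follows essentially the same route as the paper's own proof: write $\pi^c$ and $\pi^c_1$ as density integrals using Assumption~\ref{Assu:1}, differentiate, substitute the factorisation $f_{XY}=\tilde f_X\tilde f_Y$, and observe that the two terms of $\rho^c$ cancel, so every weighted integral---and hence the supremum $\Psi_c$---vanishes. The ``delicate point'' you flag (pushing the factorisation, which $H_0$ guarantees only where $S_{T|X=x}(y)>0$, inside the tail integrals $\int_y^\infty f_{XY}(\cdot,v)\,dv$ that define $\pi^c$ and $\pi^c_1$) is not resolved in the paper either, whose proof substitutes $\tilde f_X\tilde f_Y$ into those integrals without comment; your explicit handling of the non-observable region and acknowledgment of this step make your write-up, if anything, more careful than the original.
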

The (updated) estimator for the (new) Kernel Quasi Independent Criterion $\Psi_c$ is defined by
\begin{align}
\Psi_{c,n}&= \sup_{\omega\in B_1(\mathcal{H})}\left(\frac{1}{n}\sum_{i=1}^n\Delta_i\omega(X_i,T_i)\widehat\pi^c(X_i,T_i)-\frac{1}{n^2}\sum_{i=1}^n\sum_{k=1}^n\Delta_k\omega(X_i,T_k)\ind_{\{X_k\leq X_i<T_k\leq T_i\}}\right),\label{eqn:Test2cen}
\end{align}
where $\widehat{\pi}^c(x,y)=\frac{1}{n}\sum_{m=1}^n\ind_{\{X_m\leq x,T_m\geq y\}}$ is the natural estimator for $\pi^c(x,y)$. In the uncensored case, i.e. $\Delta=1$ with probability 1, the new KQIC $\Psi_c$ and its estimator $\Psi_{c,n}$ collapse to the respective quantities $\Psi$ and $\Psi_n$ from Section~\ref{Sec:Quasi}. Moreover, the estimator $\Psi_{c,n}$ can be simplified for factorising kernels:

\begin{proposition}\label{Prop:Cen-evaluationQuasi}
Consider $\mathfrak{K}((x,y),(x',y'))=K(x,x')L(y,y')$, then
\begin{align}\label{eqn:psi_2}
 \Psi_{c,n}^2
&=\frac{1}{n^2}\text{trace}(\boldsymbol{K}\boldsymbol{\widehat{\pi}}^c\boldsymbol{\tilde L}\boldsymbol{\widehat{\pi}}^c-2{\boldsymbol K}\boldsymbol{\widehat{\pi}}^c{\boldsymbol{\tilde L}}\boldsymbol{{B}}^\intercal+\boldsymbol{K}\boldsymbol{ B}\boldsymbol{{ L}}\boldsymbol{{B}}^\intercal)
\end{align}
where $\boldsymbol{K}_{ik}=K(X_i,X_k)$, $\boldsymbol{\tilde L}_{ik}=\Delta_i\Delta_k L(T_i,T_k)$, $\boldsymbol{{B}}_{ik}=\ind_{\{X_k\leq X_i< T_k\leq T_i\}}/n$, and $\boldsymbol\pi^c$ is a diagonal matrix where $\boldsymbol{\widehat{\pi}}^c_{ii}=\widehat{\pi}(X_i,T_i)$.
\end{proposition}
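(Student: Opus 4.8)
The plan is the standard reproducing-kernel ``witness'' manipulation, adapted to the extra censoring factors. By the reproducing property, $\omega(X_i,T_i)=\langle\omega,\mathfrak K((X_i,T_i),\cdot)\rangle_{\mathcal H}$ and $\omega(X_i,T_k)=\langle\omega,\mathfrak K((X_i,T_k),\cdot)\rangle_{\mathcal H}$ for every $\omega\in\mathcal H$, so the bracketed quantity in \eqref{eqn:Test2cen} equals $\langle\omega,g_{c,n}\rangle_{\mathcal H}$, where
\[
g_{c,n}=\frac1n\sum_{i=1}^n\Delta_i\,\widehat\pi^c(X_i,T_i)\,\mathfrak K((X_i,T_i),\cdot)-\frac1{n^2}\sum_{i=1}^n\sum_{k=1}^n\Delta_k\,\ind_{\{X_k\le X_i<T_k\le T_i\}}\,\mathfrak K((X_i,T_k),\cdot)
\]
is a finite linear combination of canonical features, hence an element of $\mathcal H$. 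By Cauchy--Schwarz the supremum over $\omega\in B_1(\mathcal H)$ of $\langle\omega,g_{c,n}\rangle_{\mathcal H}$ equals $\|g_{c,n}\|_{\mathcal H}$ (attained at $\omega=g_{c,n}/\|g_{c,n}\|_{\mathcal H}$, and at an arbitrary $\omega$ when $g_{c,n}=0$). Thus $\Psi_{c,n}=\|g_{c,n}\|_{\mathcal H}\ge 0$ and, crucially, $\Psi_{c,n}^2=\langle g_{c,n},g_{c,n}\rangle_{\mathcal H}$.

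Next I would expand $\langle g_{c,n},g_{c,n}\rangle_{\mathcal H}$ into the three groups of terms coming from the two pieces of $g_{c,n}$: (first$\,\cdot\,$first), $-2\times$(first$\,\cdot\,$second), and (second$\,\cdot\,$second). In each term I substitute $\langle\mathfrak K((x,y),\cdot),\mathfrak K((x',y'),\cdot)\rangle_{\mathcal H}=\mathfrak K((x,y),(x',y'))=K(x,x')L(y,y')$, so each group becomes a multiple sum over data indices of products of entries of $\boldsymbol K$, of $L$ evaluated at pairs $(T_\bullet,T_\bullet)$ each carrying its attached $\Delta_\bullet$ (these aggregate into $\boldsymbol{\tilde L}_{ik}=\Delta_i\Delta_k L(T_i,T_k)$), of the diagonal matrix $\boldsymbol{\widehat\pi}^c$, and — after extracting the $1/n$ from each indicator — of $\boldsymbol B_{ik}=\ind_{\{X_k\le X_i<T_k\le T_i\}}/n$. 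Each such sum is a cyclic chain of contractions, hence the trace of the corresponding matrix product (with the diagonal $\boldsymbol{\widehat\pi}^c$ inserted at the matching index); using cyclicity of the trace, invariance under transposition, and symmetry of $\boldsymbol K,\boldsymbol{\tilde L},\boldsymbol{\widehat\pi}^c$, the three groups become $\tfrac1{n^2}\text{trace}(\boldsymbol K\boldsymbol{\widehat\pi}^c\boldsymbol{\tilde L}\boldsymbol{\widehat\pi}^c)$, $-\tfrac2{n^2}\text{trace}(\boldsymbol K\boldsymbol{\widehat\pi}^c\boldsymbol{\tilde L}\boldsymbol B^\intercal)$ and $\tfrac1{n^2}\text{trace}(\boldsymbol K\boldsymbol B\boldsymbol L\boldsymbol B^\intercal)$, whose sum is \eqref{eqn:psi_2}.

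There is no real conceptual obstacle here; the work is entirely the bookkeeping in the last step. The points to watch are (i) the powers of $1/n$, coming both from the two prefactors in $g_{c,n}$ and from the $1/n$ built into $\boldsymbol B$; (ii) how the censoring indicators $\Delta_\bullet$ distribute across the arguments of $L$ and into $\boldsymbol B$, so that they collect correctly into $\boldsymbol{\tilde L}$ (and into a plain $\boldsymbol L$ in the last group, where they are already absorbed by the summation); and (iii) the exact placement of $\boldsymbol{\widehat\pi}^c$ and of $\boldsymbol B$ versus $\boldsymbol B^\intercal$, which is pinned down by taking transposes inside the traces. Two sanity checks confirm the identity: taking $\Delta\equiv1$ collapses $\boldsymbol{\tilde L}\to\boldsymbol L$ and $\boldsymbol B\to\boldsymbol A$, recovering Proposition~\ref{Prop:evaluationQuasiAlte}; and taking $\mathfrak K\equiv1$ (so $K=L=1$) reduces it to the conditional Kendall's $\tau$ expression.
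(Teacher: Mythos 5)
Your route is exactly the paper's: use the reproducing property to write the bracketed quantity in \eqref{eqn:Test2cen} as $\langle\omega,g_{c,n}\rangle_{\mathcal H}$, identify $\Psi_{c,n}$ with $\|g_{c,n}\|_{\mathcal H}$ via Cauchy--Schwarz, and expand the squared norm with $\mathfrak K((x,y),(x',y'))=K(x,x')L(y,y')$ into three trace terms (the paper does this by reducing to the computation in Proposition \ref{Prop:evaluationQuasiAlte}).

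One step of your bookkeeping is wrong, however: the parenthetical claim that in the third group the censoring indicators are ``already absorbed by the summation'' into a plain $\boldsymbol L$. Since $\boldsymbol B_{ik}=\ind_{\{X_k\leq X_i< T_k\leq T_i\}}/n$ does \emph{not} contain $\Delta_k$ (the indicator can be nonzero when $T_k$ is a censoring time), the second-piece-squared term is
\begin{align*}
\frac{1}{n^2}\sum_{i,j,k,l}\boldsymbol K_{ij}\,\boldsymbol B_{ik}\,\boldsymbol B_{jl}\,\Delta_k\Delta_l L(T_k,T_l)
=\frac{1}{n^2}\,\text{trace}\bigl(\boldsymbol K\boldsymbol B\boldsymbol{\tilde L}\boldsymbol B^\intercal\bigr),
\end{align*}
i.e.\ it carries $\boldsymbol{\tilde L}$, not $\boldsymbol L$; the two differ whenever some censored index $k$ has $\boldsymbol B_{ik}\neq 0$. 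This is in fact what the paper's own appendix proof produces, and what the wild bootstrap statistic in Section \ref{Sec:asymptotic} uses; the plain $\boldsymbol L$ in the displayed statement \eqref{eqn:psi_2} is best read as a typo rather than something your argument should (or can) justify. If you simply carry the $\Delta_k,\Delta_l$ factors through the expansion, your proof is complete and coincides with the paper's; your two sanity checks ($\Delta\equiv 1$ recovering Proposition \ref{Prop:evaluationQuasiAlte}, and $\mathfrak K\equiv 1$ recovering the Kendall-type statistic) are then unaffected.
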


\section{Asymptotic analysis and wild bootstrap}\label{Sec:asymptotic}

We now present our main two theoretical results. First, we establish the asymptotic null distribution of our statistic $n\Psi^2_{c,n}$.
\begin{theorem}\label{thm:null}
Assume $\mathfrak K$ is bounded. Then, under the null hypothesis, $n\Psi_{c,n}^2\overset{\mathcal D}{\to}\mu+\mathcal Y$, where $\mu$ is a positive constant, $\mathcal Y=\sum_{i=1}^\infty\lambda_i(\xi^2_i-1)$, $\xi_1,\xi_2,\ldots$ are independent standard normal random variables, and $\lambda_1,\lambda_2,\ldots$ are non-negative constants depending on the distribution of the random variables $(X,Y,C)$ and the kernel $\mathfrak{K}$. 
\end{theorem}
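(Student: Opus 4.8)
My plan is to rewrite $n\Psi_{c,n}^2$ as a degenerate $V$-statistic plus lower-order terms and then invoke the classical spectral limit theorem for degenerate $V$-statistics. The starting point is the factorised expression from Proposition~\ref{Prop:Cen-evaluationQuasi}: $\Psi_{c,n}^2 = \frac{1}{n^2}\mathrm{trace}\big(\boldsymbol{K}\tilde{\boldsymbol H}^\top\boldsymbol L\tilde{\boldsymbol H}\big)$ with $\tilde{\boldsymbol H}=\widehat{\boldsymbol\pi}^c - \boldsymbol B^\top$. The key structural observation is that although each entry of $\tilde{\boldsymbol H}$ depends on all sample points through the empirical quantities $\widehat\pi^c$ and $\boldsymbol B$, we may replace these empirical objects by their population counterparts $\pi^c$ and the corresponding kernel/indicator expectations, incurring errors that are $O_{\Prob}(n^{-1/2})$ uniformly. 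Concretely, I would expand $\widehat\pi^c(X_i,T_i)=\frac1n\sum_m \ind_{\{X_m\le X_i,\,T_m\ge T_i\}}$ and $\boldsymbol B_{ik}=\frac1n\ind_{\{X_k\le X_i< T_k\le T_i\}}$ so that $\tilde{\boldsymbol H}$ becomes an average over a third index $m$, and then $n\Psi_{c,n}^2$ is (up to constants) a $V$-statistic of order $3$ in the i.i.d.\ triples $(X_i,T_i,\Delta_i)$ with a symmetric bounded kernel built from $K$, $L$ and the indicator functions. Boundedness of $\mathfrak K$ is used here to ensure all moments exist.

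The second step is to identify the Hoeffding decomposition of this $V$-statistic. Under $H_0$ (quasi-independence), Proposition~\ref{prop:psi_c=0} tells us $\Psi_c=0$, which is precisely the statement that the first-order projection of the kernel vanishes; hence the statistic is \emph{degenerate} at first order under the null. This is the crucial point that pins down the limiting form: a non-degenerate $V$-statistic would scale like $\sqrt n$, whereas a first-order-degenerate one scales like $n^0$ after multiplying by $n$, producing the announced $\chi^2$-type limit. I would verify degeneracy by directly computing $\E[h_3((X_1,T_1,\Delta_1),\cdot,\cdot)\mid (X_1,T_1,\Delta_1)]$ and showing it equals a constant (the $\mu$ term) after the projection onto the second-order part is isolated; the constant $\mu$ arises from the non-vanishing diagonal/"variance" contribution that does not cancel, exactly as the $+\mu$ in HSIC-type statistics.

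The third step is to apply the standard spectral decomposition for degenerate $V$-statistics (see e.g.\ the U-statistic limit theory cited for HSIC, \cite{gretton2008kernel,ChwGre14}, or Serfling): writing the second-order degenerate kernel $h_2$ in its Mercer expansion $h_2(u,v)=\sum_i \lambda_i \phi_i(u)\phi_i(v)$ with respect to the law of $(X,T,\Delta)$ — where $\lambda_i\ge 0$ because $h_2$ is a positive-semidefinite kernel, being of the Gram form $\mathrm{trace}(\boldsymbol K \tilde{\boldsymbol H}^\top \boldsymbol L\tilde{\boldsymbol H})$ — one gets $n\Psi_{c,n}^2 \overset{\mathcal D}{\to} \mu + \sum_i \lambda_i(\xi_i^2-1)$ with $\xi_i$ i.i.d.\ standard normal, as claimed. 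I would also need to control the $V$-statistic "diagonal bias" terms and the error terms from replacing empirical by population quantities, showing each is $o_{\Prob}(1)$ at the $n\Psi_{c,n}^2$ scale; a Slutsky argument then completes the proof.

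The main obstacle I anticipate is \textbf{the degeneracy verification and the handling of the extra correlations} flagged in the text after Proposition~\ref{Prop:evaluationQuasiAlte}: because $\tilde{\boldsymbol H}$ is not the simple centering matrix $\boldsymbol H$ but a data-dependent object, the reduction to a clean order-$3$ $V$-statistic requires carefully tracking which index-coincidence patterns (e.g.\ $i=k$, $i=m$, overlaps among the three empirical averages) contribute to $\mu$, which contribute to $h_2$, and which are genuinely negligible. In particular one must check that the first-order term vanishes \emph{under the right-censored null} \eqref{eqn:factorize_den}, using Assumption~\ref{Assu:1} and the factorisation $S_{T|X=x}(y)=S_{Y|X=x}(y)S_{C|X=x}(y)$ to relate $\rho^c$ to the factorised density; establishing that the population first-order projection coincides with $\Psi_c=0$ is exactly where the analysis departs from the classical HSIC argument and will require the most care.
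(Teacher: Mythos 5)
Your overall architecture (rewrite $n\Psi_{c,n}^2$ as a degenerate $V$-statistic, invoke the spectral limit theorem, collect the diagonal bias into $\mu$) is a legitimate alternative to the paper's route, but as written it has a genuine gap in the linearization step. You propose to replace $\widehat\pi^c$ and the entries of $\boldsymbol B$ by their population counterparts ``incurring errors that are $O_{\Prob}(n^{-1/2})$ uniformly.'' That rate is not enough: the statistic is studied at scale $n$, i.e.\ $n\Psi_{c,n}^2=\bigl\|\sqrt{n}\,\widehat\phi_n\bigr\|_{\mathcal H}^2$ for the relevant empirical embedding, so an $O_{\Prob}(n^{-1/2})$ error in the embedding contributes $O_{\Prob}(1)$ after scaling and squaring, not $o_{\Prob}(1)$. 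What is actually needed is that the replacement error is $o_{\Prob}(n^{-1/2})$ in the Hilbert norm (equivalently $o_{\Prob}(1)$ after multiplication by $\sqrt n$), and this does not follow from a crude uniform rate for $\widehat\pi^c-\pi^c$; it holds because the error terms are themselves degenerate. This is precisely what the paper's Lemmas \ref{lemma:approx} and \ref{lemma:approx2} prove, by writing the errors as double integrals against the martingales $M_i$, killing the off-diagonal (cross) terms via the martingale property of double stochastic integrals (\cite{FerRiv2020}) and the optional stopping theorem, and controlling the diagonal terms with Proposition \ref{prop:sumAsmall}. If instead you avoid the replacement entirely and treat the full expression as a $V$-statistic with a fixed kernel, note it is of order four, not three (each of the three trace terms carries four indices once $\widehat\pi^c$ and $\boldsymbol B$ are expanded), and then the entire technical content is the degeneracy verification and the bookkeeping of index-coincidence patterns under the censored null --- exactly the part your plan defers to ``the most care.'' Your degeneracy heuristic ($\Psi_c=0$ forces the first-order projection to vanish, since the projections pair against the zero mean embedding) is sound in principle, but it presupposes the identification of the population limit of the empirical embedding with the embedding of the $\rho^c$-measure (the $\nu_0$ versus $\nu_1$ computation carried out in Appendix \ref{sec:proofconsistency}), which must be established, not assumed.

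For contrast, the paper does not go through a Hoeffding decomposition of the raw fourth-order statistic at all. It first rewrites $\sqrt n\,\Psi_{c,n}$ as a sum of integrals of predictable processes against the martingales $M_i(t)=N_i(t)-\int\ind_{\{X_i\le s\}}Y_i(s)\tilde\lambda_Y(s)ds$ (Proposition \ref{prop:martingale representation}), linearizes via the martingale lemmas described above (Lemma \ref{lemma:approxFinal}), and only then obtains an order-two $V$-statistic whose kernel $J$ is a double martingale integral; its degeneracy follows from optional stopping, and Koroljuk's theorem delivers $\mu+\sum_i\lambda_i(\xi_i^2-1)$ with $\mu=\E\bigl(J((T_1,\Delta_1,X_1),(T_1,\Delta_1,X_1))\bigr)$. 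The martingale route buys the degeneracy and the negligibility of the coupling terms essentially for free from the compensator structure under $H_0$, which is exactly where your direct combinatorial route would have to work hardest. So: the approach is salvageable, but you must either (i) upgrade the replacement step to an $o_{\Prob}(n^{-1/2})$ statement with a genuine second-order cancellation argument, or (ii) carry out the full order-four Hoeffding analysis, including a proof that the population mean embedding vanishes under the right-censored null using Assumption \ref{Assu:1}.
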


To verify Theorem \ref{thm:null}, we show that the scaled version of our statistic, $n\Psi_{c,n}^2$, can be expressed under the null hypothesis as the sum of a certain V-statistic and an asymptotically vanishing term. To find this representation, we write our test-statistic as a double integral with respect to a martingale, and use martingale techniques, and the results introduced in \cite{FerRiv2020}, to show that the error incurred by replacing certain quantities by their population versions vanishes as the number of data points grows to infinity.  The full proof is provided in Appendix \ref{sec:proof null}. We next establish conditions for consistency of the test under the alternative.
\begin{theorem}\label{theo:consistency}
    Let $\mathfrak K$ be a bounded, $c_0$-universal kernel \cite{sriperumbudur2011universality}. Then $\Psi_{c,n}^2 \to \Psi_{c}^2$ in probability. Moreover,  whenever the null hypothesis is violated, $\Psi_c^2$ is positive, implying that $n\Psi_{c,n}^2\to \infty$ in probability,. 
\end{theorem}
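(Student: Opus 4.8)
The plan is to pass to reproducing-kernel Hilbert space (RKHS) embeddings. By the reproducing property and Cauchy--Schwarz, $\Psi_{c,n}=\sup_{\omega\in B_1(\mathcal H)}\langle\omega,\widehat\mu_n\rangle_{\mathcal H}=\|\widehat\mu_n\|_{\mathcal H}$ and $\Psi_c=\|\mu\|_{\mathcal H}$, where
\begin{align*}
\widehat\mu_n=\frac1{n^2}\sum_{i,m=1}^n\Delta_i\,\ind_{\{X_m\le X_i,\,T_m\ge T_i\}}\,\mathfrak K\big((X_i,T_i),\cdot\big)-\frac1{n^2}\sum_{i,k=1}^n\Delta_k\,\ind_{\{X_k\le X_i<T_k\le T_i\}}\,\mathfrak K\big((X_i,T_k),\cdot\big),
\end{align*}
so that $\|\widehat\mu_n\|_{\mathcal H}^2$ reproduces the trace formula of Proposition~\ref{Prop:Cen-evaluationQuasi}, and $\mu\in\mathcal H$ is the Riesz representer of the bounded functional $\omega\mapsto\int_{x\le y}\omega(x,y)\rho^c(x,y)\,dx\,dy$. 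Under the standing continuity/regularity assumptions $\rho^c$ is integrable on $\{x\le y\}$, so $\mu=\int_{x\le y}\rho^c(x,y)\mathfrak K((x,y),\cdot)\,dx\,dy$ is a well-defined Bochner integral and $\langle\omega,\mu\rangle_{\mathcal H}=\int\omega\,d\nu$ for the finite signed measure $\nu(dx\,dy)=\rho^c(x,y)\ind_{\{x\le y\}}\,dx\,dy$.

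For the first claim, $\widehat\mu_n$ is a sum of two $\mathcal H$-valued V-statistics of order two whose kernels have $\mathcal H$-norm bounded by $\sqrt{\|\mathfrak K\|_\infty}$ (since $\|\mathfrak K(z,\cdot)\|_{\mathcal H}^2=\mathfrak K(z,z)\le\|\mathfrak K\|_\infty$). A Hoeffding decomposition gives $\E\|\widehat\mu_n-\E\widehat\mu_n\|_{\mathcal H}^2=O(1/n)$; separating the diagonal ($i=m$, resp.\ $i=k$) contributions, which have $O(1/n)$ norm, identifies $\E\widehat\mu_n$ up to an $O(1/n)$ term with the expectation of the underlying U-statistic kernel, and by the integration-by-parts identity relating $\rho^c$ to $\pi^c$ and $\pi_1^c$ (the one used to derive the estimator~\eqref{eqn:Test2cen} from $\Psi_c$) this expectation equals $\mu$; hence $\|\E\widehat\mu_n-\mu\|_{\mathcal H}=O(1/n)$ and $\E\|\widehat\mu_n-\mu\|_{\mathcal H}^2=O(1/n)$. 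Markov's inequality then gives $\|\widehat\mu_n-\mu\|_{\mathcal H}\to0$ in probability, so $|\Psi_{c,n}-\Psi_c|\le\|\widehat\mu_n-\mu\|_{\mathcal H}\to0$ in probability, and since $\Psi_{c,n}\le2\sqrt{\|\mathfrak K\|_\infty}$ the continuous mapping theorem yields $\Psi_{c,n}^2\to\Psi_c^2$ in probability.

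For the second claim, $c_0$-universality of $\mathfrak K$ means exactly that $\nu\mapsto\int\mathfrak K(z,\cdot)\,d\nu(z)$ is injective on finite signed Borel measures \cite{sriperumbudur2011universality}; since $\mu$ is the image of $\nu$ under this map, $\Psi_c=\|\mu\|_{\mathcal H}=0$ forces $\nu=0$, i.e.\ $\rho^c\equiv0$ a.e.\ on $\{x\le y\}$, in particular on the observable region $\{(x,y):x\le y,\ S_{T|X=x}(y)>0\}$. It then remains to show $\rho^c\equiv0$ on this region implies $H_0$: rewriting $\rho^c=0$ as $\partial_x\log(-\partial_y\pi_1^c)=\partial_x\log\pi^c$ shows $-\partial_y\pi_1^c(x,y)/\pi^c(x,y)$ is free of $x$ there; substituting $\pi^c(x,y)=\int_0^x S_{T|X=u}(y)f_X(u)\,du$ and $-\partial_y\pi_1^c(x,y)=\int_0^x S_{C|X=u}(y)f_{XY}(u,y)\,du$ (both valid under Assumption~\ref{Assu:1}) and differentiating once more in $x$ cancels the factor $S_{C|X=x}(y)>0$ and forces the conditional hazard $f_{Y|X=x}(y)/S_{Y|X=x}(y)$ to be free of $x$; integrating this hazard in $y$ with $S_{Y|X=x}(x)=1$ yields $f_{XY}(x,y)=\tilde f_X(x)\tilde f_Y(y)$ on the observable region, which is $H_0$. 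Contrapositively, if $H_0$ fails then $\Psi_c^2>0$; combined with $\Psi_{c,n}^2\to\Psi_c^2$ in probability this gives $\Prob(\Psi_{c,n}^2>\tfrac12\Psi_c^2)\to1$, hence $n\Psi_{c,n}^2\ge\tfrac12 n\Psi_c^2$ with probability tending to one, i.e.\ $n\Psi_{c,n}^2\to\infty$ in probability.

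The main obstacle is the implication $\rho^c\equiv0\Rightarrow H_0$: unlike the easy converse (Proposition~\ref{prop:psi_c=0}), it requires differentiating through the censoring integrals, carefully restricting attention to the observable region (outside which $\rho^c$ is uninformative and need not factorise), and invoking Assumption~\ref{Assu:1} to strip off the conditional censoring survival function and recover a statement about $f_{XY}$ itself. A secondary technical point is verifying integrability of $\rho^c$ on $\{x\le y\}$, so that $\nu$ is a genuine finite signed measure and the $c_0$-universality argument applies; this follows from mild regularity of the densities of $X$, $Y$, $C$ assumed throughout.
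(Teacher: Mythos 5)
Your proposal is correct, and its architecture coincides with the paper's: write $\Psi_{c,n}$ and $\Psi_c$ as RKHS norms of embeddings of the empirical and population measures induced by $\rho^c$, invoke $c_0$-universality to conclude $\Psi_c=0$ iff $\rho^c=0$ a.e.\ on $\{x\le y\}$, and then show $\rho^c\equiv 0$ forces the conditional hazard of $Y$ given $X=x$ to be free of $x$ (your step of differentiating the $x$-free ratio $-\partial_y\pi_1^c/\pi^c$ in $x$ and cancelling $S_{C|X=x}(y)$ is exactly the paper's argument in Lemma~\ref{lem:conv_rho=0_iff}, where that ratio appears as $-1/Q(y)$), after which integrating the hazard with $S_{Y|X=x}(x)=1$ yields the factorisation. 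The one place you genuinely diverge is the consistency step corresponding to Lemma~\ref{lem:conv_embeddings}: the paper first replaces $\widehat\pi^c$ by $\pi^c$ via Proposition~\ref{prop:sumAsmall} and then applies the strong law of large numbers to the three scalar $V$-statistics $V_{0,0}$, $V_{0,1}$, $V_{1,1}$ of orders $2$, $3$, $4$ obtained by expanding $\Psi_{c,n}^2$, whereas you treat the embedding $\widehat\mu_n$ itself as a sum of two $\mathcal H$-valued $V$-statistics of order two and bound $\E\|\widehat\mu_n-\mu\|_{\mathcal H}^2=O(1/n)$ directly. Your variant is sound (the diagonal terms are $O(1/n)$ in norm, off-diagonal disjoint index pairs contribute zero after centring, and boundedness of $\mathfrak K$ gives Bochner integrability), and it buys an explicit rate for the embedding error while dispensing with the auxiliary proposition on $\widehat\pi^c$; the paper's scalar route avoids Hilbert-space-valued moment calculations and leans only on the standard SLLN for $V$-statistics. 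Both your hand-waved points (finiteness of the signed measure with density $\rho^c\ind_{\{x\le y\}}$, and the restriction to the observable region where $\pi^c>0$ and $-\partial_y\pi_1^c>0$ so the logarithmic rewriting is legitimate) are genuine but minor, and are handled in the paper in the same spirit.
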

We remark that the factorised kernel $\mathfrak K ((x,y),(x',y'))=K(x,x')L(y,y')$ must be $c_0$-universal in the product space, which is true for instance when $K$ and $L$ are exponentiated quadratic kernels \cite{FukGreSunSch08}.  In the case of independence testing, a simpler condition on the kernel can be used, where kernels are required to be individually characteristic to their respective domains \cite{Gretton15}. Whether this simple condition can be generalised to the quasi-independence setting remains a topic for future work.

The consistency result in Theorem \ref{theo:consistency} relies on the interpretation of the test statistic $\Psi_{c,n}$ and the KQIC $\Psi_c$, as the Hilbert space distances of the embeddings of certain positive measures. These distances measure the degree of (quasi)-dependence. Under the $c_0$-universality assumption, the embedding of finite signed measures are injective  \cite{sriperumbudur2011universality}, which, in our case,  implies $\rho^c(x,y)=0$ for almost all $x\leq y$.  It remains to prove that quasi-independence holds. To show this, we first note that  $\rho^c(x,y)=0$ implies
\begin{align}
\frac{\partial^2\pi^c_1(x,y)}{\partial x\partial y}=\frac{1}{\pi^c(x,y)}\frac{\partial\pi^c(x,y)}{\partial x} \frac{\partial\pi_1^c(x,y)}{\partial y},\label{eqn:id}
\end{align}
and that $\frac{\partial^2\pi^c_1(x,y)}{\partial x\partial y} =S_{C|X=x}(y)f_{XY}(x,y)$. By carefully analysing Equation \eqref{eqn:id} we find an explicit decomposition of $f_{XY}(x,y)$ into the product of two functions only depending on $x$ and $y$, respectively, from which quasi-independence follows. A detailed proof is provided in Appendix \ref{sec:proofconsistency}.

As noted above, the eigenvalues $\lambda_i$ in Theorem \ref{thm:null}  --- and thus, the limit distribution of our test statistic under the null hypothesis --- depend on the unknown distribution of  $(X,Y,C)$. For this reason, we propose to approximate the limit null distribution and its $(1-\alpha)$-quantile $q_{\alpha}$ of $\mu + \mathcal Y$ using a  wild bootstrap approach. This strategy is well-established for $V$- and $U$-statistics \cite{dehling1994random}, and has  successfully been applied in scenarios, similar to the present one, where the test statistic  behaves asymptotically as a $V$-statistic \cite{fernandez2019kernel, fernandez2019reproducing}. 

To introduce the wild bootstrap counterpart $\Psi_{c,n}^{\text{WB}}$ of our statistic $\Psi_{c,n}$, let $W_1,\ldots, W_n$ be  independent and identically distributed Rademacher random variables, and define the $n\times n$ matrix $\boldsymbol{K}^W$ with entries $\boldsymbol{K}^W_{ik}=W_iW_kK(X_i,X_k)$. Then, 
\begin{align*}
(\Psi_{c,n}^{\text{WB}})^2=\frac{1}{n^2}\text{trace}(\boldsymbol{K}^W\boldsymbol{\widehat{\pi}}^c\boldsymbol{\tilde L}\boldsymbol{\widehat{\pi}}^c-2{\boldsymbol K}^W\boldsymbol{\widehat{\pi}}^c{\boldsymbol {\tilde L}}\boldsymbol{{B}}^\intercal+ \boldsymbol{K}^W\boldsymbol{ B}\boldsymbol{\tilde L}\boldsymbol{{B}}^\intercal).
\end{align*}
We propose the test $\varphi_n^{\text{WB}} = \ind\{\Psi_{c,n}^2 > q_\alpha^{\text{WB}}\}$ to infer $H_0$, where $q_\alpha^{\text{WB}}$ denotes the $(1-\alpha)$-quantile of $ \Psi_{c,n}^{\text{WB}})^2$ given the observations $((X_i,\Delta_i,T_i))_{i\in[n]}$.

\section{Experiments}\label{sec:Experiments}
We perform synthetic experiments followed by real data applications. In the first set of synthetic examples, we replicate the settings studied in \cite{chiou2018permutation}, where  Gaussian copula models were used to create dependencies between $X$ and $Y$. In the second synthetic experiment, we investigate distribution functions $f_{Y|X=x}$ that have a periodic dependence on $x$. We then apply our tests to real-data scenarios such as those studied in \cite{EmuraWang2010} and \cite{meister2008statistical}.

{\bf Methods}
We implement the proposed quasi-independence test based on the test-statistic \textbf{KQIC} given in Equation \eqref{eqn:psi_2}. The kernels are chosen to be Gaussian with bandwidth optimised by using approximate test power \cite{gretton2009fast,jitkrittum2017linear}. See  Appendix \ref{app:experiments} for details. Competing approaches include: \textbf{WLR}, the weighted log-rank test proposed in \cite{EmuraWang2010}, with weight function chosen equal to $n\widehat\pi^c(x,y)$;\footnote{Our test-statistic recovers, as a particular case, the squared of this log-rank test by choosing $\mathfrak K=1$}  \textbf{WLR\_SC}, the weighted log-rank test proposed in \cite{EmuraWang2010}, with weight function chosen as suggested by the authors, i.e,  $W(x,y)=\int_0^{x} \widehat{S}_{C_R}((y-u)-)^{-1}\widehat\pi^c(du,y)$, where $\widehat S_{C_R}$ is the Kaplan-Meier estimator associated to the data $((C_i-X_i,1-\Delta_i))_{i=1}^n$; \textbf{M\&B}, the conditional Kendall's tau statistic modified to incorporate censoring as proposed in \cite{MartinBetensky2005}; and \textbf{MinP1} and \textbf{MinP2}, the ``minimal p-value selection" tests proposed in \cite{chiou2018permutation}, which rely on permutations of the observed pairs. A review of these approaches can be found in Appendix \ref{sec:method_review}. For the synthetic experiments, we recorded the rejection rate over 200 trials. The wild-bootstrap size for \textbf{KQIC} and the permutation size for \textbf{MinP1}, \textbf{MinP2} are set to be 500.

{\bf Monotonic Dependency}
The first synthetic example from \cite{chiou2018permutation} is generated as follows: $X \sim \textnormal{Exp}(5)$ and $Y \sim \textnormal{Weibull}(3, 8.5)$; $(X, Y)$ are then coupled via a 2-dimensional Gaussian copula model with correlation parameter $\rho$. The censoring variable is set to be exponentially distributed and truncation applies. With the copula construction, the magnitude of the correlation parameter $\rho$ is a fair indicator of the degree of dependence, with $\rho = 0$ denoting independence. Rejection rates are reported in Table \ref{tab:syn_exp1_monotone}. At $\rho = 0$, the null hypothesis holds, and the rejection rates refer to the Type-I error. All the tests achieve a correct Type-I error around a test level $\alpha = 0.05$. For $\rho \neq 0$, the alternative holds, and the rejection rates correspond to test power (the higher the better). The highest value is in bold. 
Test results w.r.t. different censoring rates can be found in the  Appendix. Overall, our method outperforms all competing approaches.  

\begin{table}[ht]
    \centering
\begin{tabular}{lrrrrr}
\toprule
{\quad $\rho$} &  -0.4 &  -0.2 &  0.0 &   0.2 &   0.4 \\
\midrule
KQIC   &  \tb{0.93} &  \tb{0.46} &  0.06 &  \tb{0.42} &  \tb{0.86} \\
WLR     &  0.80 &  0.33 &  0.10 &  0.18 &  0.66 \\
WLR\_SC &  0.85 &  {0.42}  &  0.03 &  0.24 & 0.74 \\
M\&B &  0.64 &  0.22 &  0.02 &  0.16 &  0.74 \\
MinP1   &  0.58 &  0.12 &  0.03 &  0.17 &  0.62 \\
MinP2   &  0.33 &  0.04 &  0.06 &  0.10 &  0.28 \\
\bottomrule
\end{tabular}
\begin{tabular}{lrrrrr}
\toprule
{} &  -0.4 &  -0.2 &  0.0 &   0.2 &   0.4 \\
\midrule
{} &  \tb{0.99} &  \tb{0.67} &  0.05 &  \tb{0.63} &   \tb{1.00} \\
{} &  0.94 &  0.52 &  0.06 &  0.32 &  0.94 \\
{} &  0.93 &  0.53 &  0.06 &  0.43 &  0.99\\
{} &  0.94 &  0.28 &  0.03 &  0.42 &  0.92 \\
{} &  0.84 &  0.12 &  0.10 &  0.34 &  0.84 \\
{} &  0.56 &  0.08 &  0.08 &  0.28 &  0.52 \\
\bottomrule
\end{tabular}
\vspace{0.3cm}
\caption{ Rejection rates for monotonic dependency models based on Gaussian copula, with $n=100$ on the left; $n=200$ on the right; $\alpha = 0.05$; censoring rate: $50\%$.
}\label{tab:syn_exp1_monotone}
\end{table}

{\bf V-shaped Dependency}
A synthetic example \cite{chiou2018permutation}, in which the authors compare the behaviour of their tests against the conditional Kendall's tau test of \cite{MartinBetensky2005} in detecting non-monotonic dependencies. The following V-shaped dependency structure applies: $X \sim \textnormal{Weibull}(0.5, 4)$; $Y \sim \textnormal{Uniform}[0,1]$; $(X, |Y-0.5|)$ is coupled via the 2-dimensional Gaussian copula with correlation coefficient $\rho$ as above. Exponential censoring and truncation apply. Rejection rates are plotted against the perturbation of correlation coefficient $\rho$ in Figure \ref{fig:vshape_gaussian_copula_res}, where  {KQIC} outperforms competing methods.

\begin{figure}[t]
\centering
\includegraphics[scale=0.313]{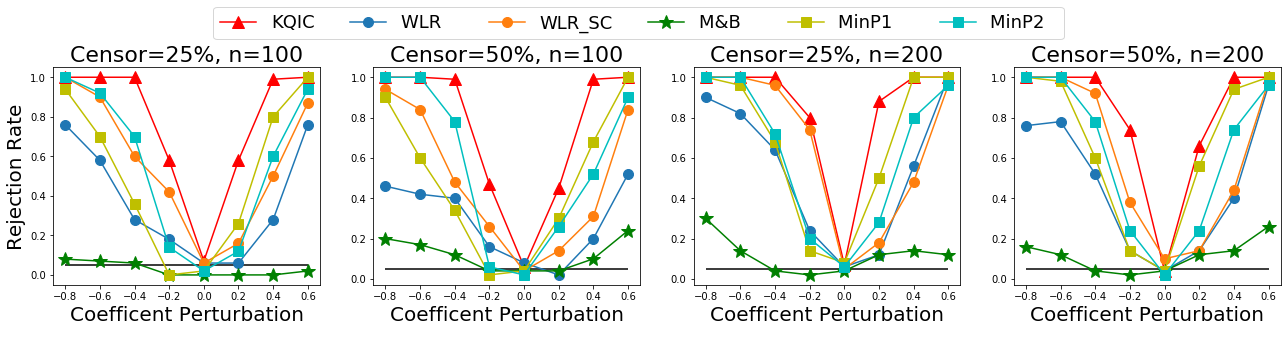}
    \caption{Rejection rate for V-shape Gaussian copula model}
    \label{fig:vshape_gaussian_copula_res}
\end{figure}

\begin{figure}[t!]
\centering
    \includegraphics[scale=0.313]{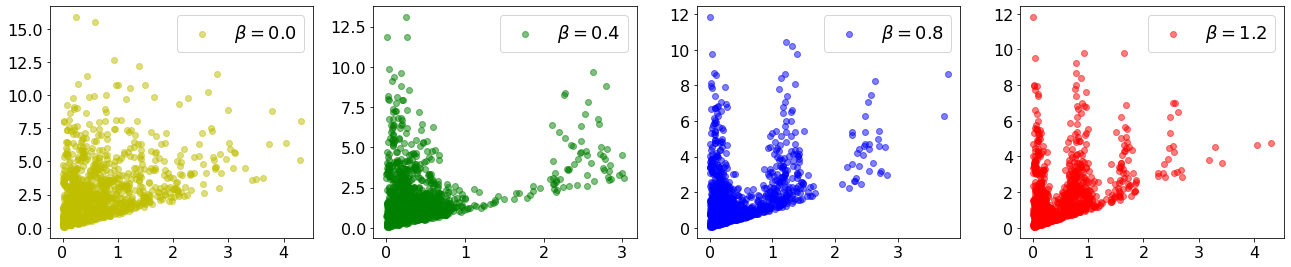}
    \caption{Samples from Periodic Dependency Model w.r.t. Frequency Coefficient $\beta$}
    \label{fig:periodic_sample}
\end{figure}
\begin{figure}[t!]
\centering
\includegraphics[scale=0.313]{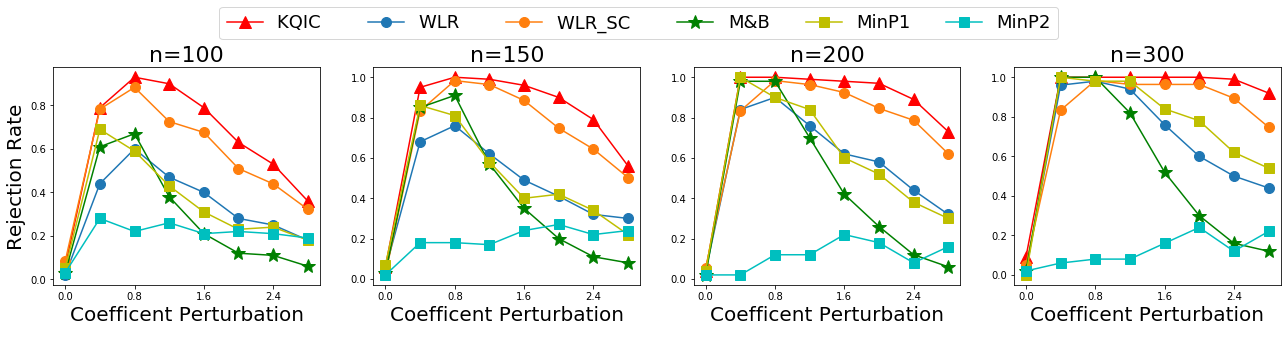}
    \caption{Rejection Rate for Periodic Dependency Model with $25\%$ data censored.}
    \label{fig:periodic_res}
\end{figure}

{\bf Periodic Dependency}
Apart from the V-shaped dependencies studied in \cite{chiou2018permutation}, we investigate more complicated non-monotonic dependencies structures. The data are generated with a periodic dependency structure, $X \sim \textnormal{Exp}(1)$; $Y|X \sim \textnormal{Exp}(e^{\cos (2 \pi \beta  X})$. The coefficient $\beta$ controls the frequency of the dependence. A set of examples with different parameters $\beta$ is shown in Figure \ref{fig:periodic_sample}, with $\beta = 0$ implying independence. Further details are discussed in Appendix \ref{app:experiments_period}.

Examining the results in Figure \ref{fig:periodic_res}, we see that our method outperforms competing approaches.
Unlike the correlation coefficient $\rho$ in Gaussian copula models, the coefficient $\beta$ does not directly imply the ``amount'' of dependence; rather, a higher $\beta$ indicates a more ``difficult'' problem. Thus, as anticipated, power  
drops for large values of $\beta$, and the effect is more apparent at low sample sizes. Note in particular that the permutation based tests \cite{chiou2018permutation} are more affected by an increase in frequency at which dependence occurs, while our test shows a more robust behaviour.

{\bf High Frequency Dependency}
In the period dependency problem above, the parameter $\beta$ controls the frequency of sinusoidal dependence. At a given  sample size, the dependence  becomes harder to detect as the frequency $\beta$ increases. We visually show this in Appendix \ref{app:experiments_period}. For problems with high frequency dependence, a larger sample size is required. 

When the sample size increases, KQIC is able to successfully reject the null at relatively high frequencies (large $\beta$), as shown in Figure \ref{fig:periodic_res_supp}. At lower frequencies $\beta = 3.0$, WLR\_SC has similar test power as KQIC. As the problem gets  harder with larger $\beta$, KQIC outperforms WLR\_SC. The IMQ kernel has similar test power as the Gaussian kernel on this example.
We report the Type-I error that is well controlled in Appendix \ref{app:experiments_period} 
Table \ref{tab:periodic_h0}.

\begin{figure}[t!]
\centering  
    \includegraphics[width=1.0\textwidth]{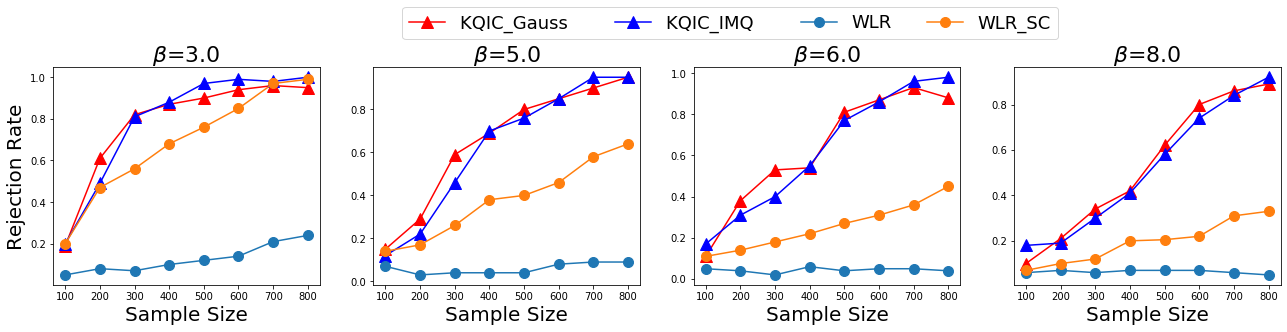}
    \caption{Rejection rate for high frequency dependency, with  $\alpha=0.05$, $40\%$ data censored}\label{fig:periodic_res_supp}
\end{figure}
\begin{figure}[t!]
    \centering
    \includegraphics[width=1.0\textwidth]{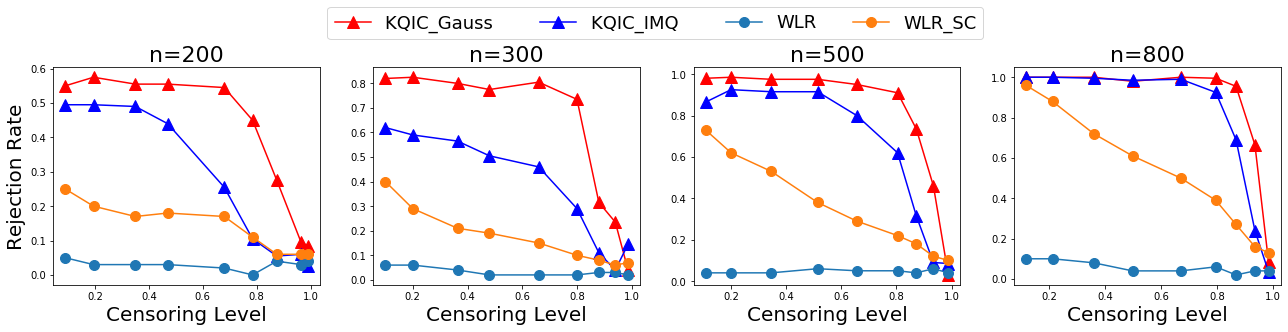}
    \caption{Rejection rate for periodic dependencies ($\beta=5.0$), with $\alpha=0.05$ and $200$ trials.}
    \label{fig:censor_level}
\end{figure}

{\bf Censoring level}
We investigate how our test is affected by the censoring level, in particular when the censoring percentage increases. We analyse performance under both the null and alternative hypotheses. The Type-I error is well controlled for KQIC and details are reported in Appendix \ref{app:censor_level}.

Under the alternative hypothesis, in Figure \ref{fig:censor_level}, we show the rejection rate w.r.t. different censoring percentages and fixed sample size. This is done in our periodic dependency setting. From the plot, we see that KQIC with Gaussian and IMQ kernels is more robust to censoring, with test power starting to drop at $85\%$ of censoring for sample size = 800.  WLR\_SC is strongly affected by censoring. WLR is not capable of detecting $H_1$ in this hard problem with high frequency. 

In addition, we study the test behaviour with dependent censoring, since in Assumption \ref{Assu:1}, only conditional independence $Y\perp C |X$ is required \cite{EmuraWang2010}. Detailed results are reported in Appendix \ref{app:dependent censoring}.

{\bf Computational cost}
Our proposed test, implemented as described in Appendix \ref{sec:efficient}, has a significantly lower runtime when compared with the competing permutation approaches. M\&B implements the conditional Kendall's tau statistic, which has a closed-form expression for the null distribution, therefore its runtime is lowest of all. See  Appendix \ref{app:experiments} for details.

{\bf Real Data Experiment} We consider three real data scenarios: \tb{Channing House} \cite{hyde1977testing}: contains  the recorded entry times and lifetimes of 461 patients ($97$ men and $364$ women). Among them, $268$ subjects withdrew from the retirement center, yielding to a censoring proportion of $0.62$. The data are naturally left truncated, as only patients who entered the center are observed; \tb{AIDS} \cite{LagakosBarrajDeGruttola1988}: the data contain the incubation time and lapse time, measured from infection to recruitment time,  for $295$ subjects. A censoring of proportion of $0.125$ occurs due to death or withdrawal from the study. Left truncation applies since only patients that developed AIDS within the study period were recruited, thus only patients with incubation time not exceeding the lapse time were observed; and \tb{Abortion} \cite{meister2008statistical}: contains the entry time and the spontaneous abortion time for $1186$ women ($197$ control group and $989$ treatment group exposed to Coumarin derivatives).  A censoring proportion of $0.906$ occurs due to live birth or induced abortions.  Delayed entry to the study is substantial in this dataset:  $50\%$ of the control cohort entered the study in week 9 or later, while in the treatment group this occurs for $25\%$ of the cohort. 
\tb{Implementation:} For our test we used both Gaussian kernels KQIC\_Gauss and IMQ kernels KQIC\_IMQ. For competing approaches, the implementation is as discussed at the beginning of this section. \tb{Results}:
For the Channing house dataset, in Table \ref{tab:real_data},  we observe that all tests agree in not rejecting the null hypothesis for the combined and female groups at a level $\alpha=0.05$. For the male group, all tests but MinP2 and M\&B reject the null hypothesis at $\alpha=0,05$. Our results agree with \cite{EmuraWang2010}. For the AIDS dataset, all tests reach a consensus of rejecting the null, which is consistent with \cite{EmuraWang2010}, except for {MinP2} marked in blue.  For the abortion dataset, our test rejects the null hypothesis, suggesting dependency between the entry time $X$ and the spontaneous abortion time $Y$ in both the treatment group and the combined case (in red). This finding is in accordance with domain knowledge \cite{meister2008statistical}, where the presence of this dependence was indicated to be due to the study design. The competing tests were unable to detect the dependence; however, did not reject the null hypothesis.

\begin{table}[]
    \centering
\begin{tabular}{l|ccc|c|ccc}
\toprule
\multirow{2}{2cm}{\quad (p-value)}&
    \multicolumn{3}{c|}{Channing House}&\multicolumn{1}{c|}{AIDS}&\multicolumn{3}{c}{Abortion Times}
    \\
{} &  Combined &   Male  &  Female &   {} &  Combined &  Control &  Treatment \\
\midrule
KQIC\_Gauss &     0.072 &  0.012 &   0.566 &  0.030 &  \color{red}0.014 &    0.440 &      \color{red}0.028 \\
KQIC\_IMQ   &     0.078 &  0.022 &   0.414 &  0.010 &  \color{red}0.032 &    0.158 &      \color{red}  0.048 \\
WLR        &     0.058 &  0.016 &   0.444 &  0.035 &    0.408 &    0.868 &       0.748 \\
WLR\_SC &  0.086 &  0.020 &  0.422 &  0.030 &  0.511 &  0.674 &  0.450 \\

MinP1 &  0.084 &  0.036 &  0.396 &  0.012 &  0.584 &  0.584 &  0.452 \\
MinP2 &  0.198 &  \color{blue}0.426 &  0.118 & \color{blue}  0.406 &  0.694 &  0.572 &  0.346 \\
M\&B &  0.178 &  \color{blue}0.199 &  0.495 &  0.010 &  0.712 &  0.693 &  0.752 \\
\midrule
$\%$ Events &  0.379 &  0.474 &  0.354 &  0.875 &  0.094 &  0.069 &  0.098 \\
\bottomrule
\end{tabular}
\vspace{0.3cm}
    \caption{Real data, with marked results $\color{blue}{\text{contradicting}}$ and $\color{red}{\text{supporting}}$ the scientific literature.}
    \label{tab:real_data}
\end{table}

\section{Conclusions}
We address the problem of testing for quasi-independence in the presence of left-truncation, as occurs in real-world examples where events are ordered. The test is nonparametric and general-purpose, can detect a broad class of departures from the null, and applies even where right-censoring is present.  In experiments on challenging synthetic data, our method strongly outperforms the alternatives. On real-life datasets, our method yields consistent results to classical approaches where these apply; however, it also detects quasi-dependence in a case where  competing  general-purpose approaches fail, and where models based on domain knowledge were needed in establishing the result.
Our tests are a first step towards the wider challenge of testing quasi-independence in the presence of general  physical or causal constraints on the variables, which themselves induces a ``baseline'' level of dependence. Many real-world settings, apart from left-truncation, do not have the advantage of a ``pure'' null scenario of perfect independence, and tests must be designed in light of these constraints. These directions are an exciting topic for future study.

\newpage
\section{Broader Impact}
\paragraph{Potential benefits to society}
Finding dependencies is a key tool in a broad variety of scientific domains, including clinical treatments, demography, business strategy development, and public policy formulation, with applications spanning the natural and social sciences. Our work addresses these questions by studying the dependence relationships between observed data, where the data already have an intrinsic dependence due to natural order. Moreover, the dependence need not be monotonic, but can take a variety of forms.
Detecting the dependence of variables in this setting, which corresponds to many real-life scenarios,  will allow  scientists or policymakers to better understand their data and research problems, and guide the better design of future research questions.
The dependence detection strategy may be used to detect bias in data collection procedures. Such bias could be avoided by verifying the  absence of inadvertent dependency relationships in collected data.

\paragraph{Potential risks to society}
There are a number of ways in which statistical tests can be mis-applied in the wider scientific community, and these must be guarded against. As one example, p-value hacking/failure to correct for multiple testing can result in false positives. In the event that these false positives are surprising or controversial, they can gain considerable traction in the media. In some cases, peoples' health can be at risk.
A second risk, specific to tests of dependence, is for correlation and causation to be confused. Our tests detect correlation, however, a misunderstanding of such tests might result in false conclusions of cause and effect. There have been especially pernicious instances when using statistics in domains such as crime prediction. 

\section*{Acknowledgement}
TF, WX, and AG thank the Gatsby Charitable Foundation for the financial support. 
MD gratefully acknowledge support from the Deutsche Forschungsgemeinschaft (grant no. PA-2409 5-1).

\bibliographystyle{plain}
\bibliography{ref}

\begin{thebibliography}{10}

\bibitem{chaieb2006estimating}
L.L. Chaieb, L.-P. Rivest, and B.~Abdous.
\newblock Estimating survival under a dependent truncation.
\newblock {\em Biometrika}, 93(3):655--669, 2006.

\bibitem{chiou2018permutation}
S.H. Chiou, J.~Qian, E.~Mormino, and R.A. Betensky.
\newblock Permutation tests for general dependent truncation.
\newblock {\em Computational Statistics \& Data Analysis}, 128:308 -- 324,
  2018.

\bibitem{ChwGre14}
K.P. Chwialkowski and A.~Gretton.
\newblock A kernel independence test for random processes.
\newblock In {\em International Conference on Machine Learning}, 2014.

\bibitem{chwialkowski2014wild}
K.P. Chwialkowski, D.~Sejdinovic, and A.~Gretton.
\newblock A wild bootstrap for degenerate kernel tests.
\newblock In {\em Advances in Neural Information Processing Systems}, pages
  3608--3616, 2014.

\bibitem{dehling1994random}
H.~Dehling and T.~Mikosch.
\newblock Random quadratic forms and the bootstrap for {U}-statistics.
\newblock {\em Journal of Multivariate Analysis}, 51(2):392--413, 1994.

\bibitem{EmuraWang2010}
T.~Emura and W.~Wang.
\newblock Testing quasi-independence for truncation data.
\newblock {\em Journal of Multivariate Analysis}, 101:223--239, 2010.

\bibitem{fernandez2019kernel}
T.~Fernandez, A.~Gretton, D.~Rindt, and D.~Sejdinovic.
\newblock A kernel log-rank test of independence for right-censored data.
\newblock {\em arXiv preprint arXiv:1912.03784}, 2019.

\bibitem{fernandez2019reproducing}
T.~Fernandez and N.~Rivera.
\newblock A reproducing kernel {H}ilbert space log-rank test for the two-sample
  problem.
\newblock {\em arXiv preprint arXiv:1904.05187}, 2019.

\bibitem{FerRiv2020}
T.~Fern\'andez and N.~Rivera.
\newblock Kaplan-{M}eier {V}- and {U}-statistics.
\newblock {\em Electronic Journal of Statistics}, 14(1):1872--1916, 2020.

\bibitem{FukGreSunSch08}
K.~Fukumizu, A.~Gretton, X.~Sun, and B.~Sch\"olkopf.
\newblock Kernel measures of conditional dependence.
\newblock In {\em NeurIPS}, pages 489--496, 2008.

\bibitem{gorham2017measuring}
J.~Gorham and L.~Mackey.
\newblock Measuring sample quality with kernels.
\newblock In {\em Proceedings of the 34th International Conference on Machine
  Learning-Volume 70}, pages 1292--1301. JMLR. org, 2017.

\bibitem{Gretton15}
A.~Gretton.
\newblock A simpler condition for consistency of a kernel independence test.
\newblock {\em arXiv preprint arXiv:1501.06103}, 2015.

\bibitem{GreBouSmoSch05}
A.~Gretton, O.~Bousquet, A.~J. Smola, and B.~Sch\"{o}lkopf.
\newblock Measuring statistical dependence with {H}ilbert-{S}chmidt norms.
\newblock In {\em ALT}, pages 63--78, 2005.

\bibitem{gretton2009fast}
A.~Gretton, K.~Fukumizu, Z.~Harchaoui, and B.K. Sriperumbudur.
\newblock A fast, consistent kernel two-sample test.
\newblock In {\em Advances in Neural Information Processing Systems}, pages
  673--681, 2009.

\bibitem{gretton2008kernel}
A.~Gretton, K.~Fukumizu, C.H. Teo, L.~Song, B.~Sch{\"o}lkopf, and A.J. Smola.
\newblock A kernel statistical test of independence.
\newblock In {\em Advances in Neural Information Processing Systems}, pages
  585--592, 2008.

\bibitem{gretton2012optimal}
A.~Gretton, D.~Sejdinovic, H.~Strathmann, S.~Balakrishnan, M.~Pontil,
  K.~Fukumizu, and B.K. Sriperumbudur.
\newblock Optimal kernel choice for large-scale two-sample tests.
\newblock In {\em Advances in Neural Information Processing Systems}, pages
  1205--1213, 2012.

\bibitem{hyde1977testing}
J.~Hyde.
\newblock Testing survival under right censoring and left truncation.
\newblock {\em Biometrika}, 64(2):225--230, 1977.

\bibitem{jitkrittum2018informative}
W.~Jitkrittum, H.~Kanagawa, P.~Sangkloy, J.~Hays, B.~Sch{\"o}lkopf, and
  A.~Gretton.
\newblock Informative features for model comparison.
\newblock In {\em Advances in Neural Information Processing Systems}, pages
  808--819, 2018.

\bibitem{jitkrittum2016interpretable}
W.~Jitkrittum, Z.~Szab{\'o}, K.P. Chwialkowski, and A.~Gretton.
\newblock Interpretable distribution features with maximum testing power.
\newblock In {\em Advances in Neural Information Processing Systems}, pages
  181--189, 2016.

\bibitem{jitkrittum2017linear}
W.~Jitkrittum, W.~Xu, Z.~Szab{\'o}, K.~Fukumizu, and A.~Gretton.
\newblock A linear-time kernel goodness-of-fit test.
\newblock In {\em Advances in Neural Information Processing Systems}, pages
  262--271, 2017.

\bibitem{klein2006survival}
J.P. Klein and M.L. Moeschberger.
\newblock {\em Survival analysis: techniques for censored and truncated data}.
\newblock Springer Science \& Business Media, 2006.

\bibitem{Koroljuk94}
V.~S. Koroljuk and Yu.~V. Borovskich.
\newblock {\em Theory of {$U$}-statistics}, volume 273 of {\em Mathematics and
  its Applications}.
\newblock Kluwer Academic Publishers Group, Dordrecht, 1994.

\bibitem{LagakosBarrajDeGruttola1988}
S.~Lagakos, L.~Barraj, and V.~De~Gruttola.
\newblock Nonparametric analysis of truncated survival data, with application
  to {AIDS}.
\newblock {\em Biometrika}, 75:515--523, 1988.

\bibitem{MartinBetensky2005}
E.C. Martin and R.A Betensky.
\newblock Testing quasi-independence of failure and truncation via conditional
  {K}endall’s tau.
\newblock {\em Journal of the American Statistical Association}, 100:484--492,
  2005.

\bibitem{meister2008statistical}
R.~Meister and C.~Schaefer.
\newblock Statistical methods for estimating the probability of spontaneous
  abortion in observational studies—analyzing pregnancies exposed to coumarin
  derivatives.
\newblock {\em Reproductive Toxicology}, 26(1):31--35, 2008.

\bibitem{meynaoui2019adaptive}
A.~Meynaoui, M.~Albert, B.~Laurent, and A.~Marrel.
\newblock Adaptive test of independence based on {HSIC} measures.
\newblock {\em arXiv preprint arXiv:1902.06441}, 2019.

\bibitem{rindt2019nonparametric}
D.~Rindt, D.~Sejdinovic, and D.~Steinsaltz.
\newblock Nonparametric independence testing for right-censored data using
  optimal transport.
\newblock {\em arXiv preprint arXiv:1906.03866}, 2019.

\bibitem{sejdinovic2013kernel}
D.~Sejdinovic, A.~Gretton, and W.~Bergsma.
\newblock A kernel test for three-variable interactions.
\newblock In {\em Advances in Neural Information Processing Systems}, pages
  1124--1132, 2013.

\bibitem{sriperumbudur2011universality}
B.K. Sriperumbudur, K.~Fukumizu, and G.R.G. Lanckriet.
\newblock Universality, characteristic kernels and {RKHS} embedding of
  measures.
\newblock {\em Journal of Machine Learning Research}, 12(Jul):2389--2410, 2011.

\bibitem{sriperumbudur2010hilbert}
B.K. Sriperumbudur, A.~Gretton, K.~Fukumizu, B.~Sch{\"o}lkopf, and G.R.G.
  Lanckriet.
\newblock Hilbert space embeddings and metrics on probability measures.
\newblock {\em Journal of Machine Learning Research}, 11(Apr):1517--1561, 2010.

\bibitem{sutherland2016generative}
D.J. Sutherland, H.-Y. Tung, H.~Strathmann, S.~De, A.~Ramdas, A.~Smola, and
  A.~Gretton.
\newblock Generative models and model criticism via optimized maximum mean
  discrepancy.
\newblock {\em arXiv preprint arXiv:1611.04488}, 2016.

\bibitem{tsai1988estimation}
W.-Y. Tsai.
\newblock Estimation of the survival function with increasing failure rate
  based on left truncated and right censored data.
\newblock {\em Biometrika}, 75(2):319--324, 1988.

\bibitem{Tsai1990}
W.-Y. Tsai.
\newblock Testing the assumption of independence of truncation time and failure
  time.
\newblock {\em Biometrika}, 77:169--177, 1990.

\bibitem{TsaiETAL1987}
W.-Y. Tsai, N.P. Jewell, and M.C. Wang.
\newblock A note on the product-limit estimator under right censoring and left
  truncation.
\newblock {\em Biometrika}, 74:883--886, 1987.

\bibitem{Turnbull1976}
B.W. Turnbull.
\newblock The empirical distribution function with arbitrarily grouped,
  censored and truncated data.
\newblock {\em Journal of the Royal Statistical Society: Series B (Statistical
  Methodology)}, 38:290--295, 1976.

\bibitem{Wang1991}
M.C. Wang.
\newblock Nonparametric estimation from cross-sectional survival data.
\newblock {\em Journal of the American Statistical Association}, 86:130--143,
  1991.

\bibitem{Woodroofe1985}
M.~Woodroofe.
\newblock Estimating a distribution function with truncated data.
\newblock {\em The Annals of Statistics}, 13:163--177, 1985.

\end{thebibliography}

\newpage

\appendix
\paragraph{\large{Appendix: A kernel test for quasi-independence}}

\begin{itemize}
    \item[]\textbf{Appendix A:} Preliminary results
    \item[]\textbf{Appendix B:} Proofs of sections \ref{Sec:Quasi} and \ref{sec:right-censoring}
    \item[]\textbf{Appendix C:} Proof of Theorem \ref{thm:null} (null distribution)
    \item[]\textbf{Appendix D:} Proof of Theorem \ref{theo:consistency} (consistency under alternatives)
    \item[]\textbf{Appendix E:} Efficient wild bootstrap implementation
    \item[]\textbf{Appendix F:} Review of related quasi-independence tests
    \item[]\textbf{Appendix G:} Additional experiments and discussion
\end{itemize}

\section{Preliminary results}
The following Proposition is an intermediate result, which is need to prove Lemmas \ref{lemma:approx} and \ref{lem:conv_embeddings}.
\begin{proposition}\label{prop:sumAsmall}
	Define
	\begin{align}
	A_n=\frac{1}{n}\sum_{i=1}^n\Delta_i \ind_{\{X_i< T_i\}}(\widehat{\pi}^c(X_i,T_i)- {\pi}^c(X_i,T_i))^2.
	\end{align}
    Then, the following results hold: i) $A_n\to 0$ almost surely as $n$ grows to infinity and ii) $|A_n|\leq c$, for some constant $c$, for all large $n$.
\end{proposition}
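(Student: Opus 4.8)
The plan is to reduce both claims to a uniform (Glivenko--Cantelli type) convergence of $\widehat\pi^c$ to $\pi^c$, after which everything follows with essentially no computation.

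\emph{Claim (ii) is immediate.} Both $\widehat\pi^c(x,y)$ and $\pi^c(x,y)$ take values in $[0,1]$, so each summand $\Delta_i\ind_{\{X_i<T_i\}}(\widehat\pi^c(X_i,T_i)-\pi^c(X_i,T_i))^2$ is bounded by $1$; dividing by $n$ the sum of $n$ such terms gives $|A_n|\leq 1$ for every $n$, and one may take $c=1$. (No largeness of $n$ is even needed.)

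\emph{Claim (i): the key step is to pass to a data-index-free bound.} Using $\Delta_i\ind_{\{X_i<T_i\}}\leq 1$ and again that there are $n$ terms divided by $n$,
\begin{align*}
A_n\leq \sup_{x,y\in\R_+}\bigl(\widehat\pi^c(x,y)-\pi^c(x,y)\bigr)^2 .
\end{align*}
It therefore suffices to prove that $\sup_{x,y}|\widehat\pi^c(x,y)-\pi^c(x,y)|\to 0$ almost surely. This is a two-dimensional Glivenko--Cantelli statement: the class $\mathcal C=\{\{(u,v):u\leq x,\,v\geq y\}:x,y\in\R_+\}$ consists of axis-parallel (half-infinite) rectangles, hence is a VC class of finite VC dimension, and $\widehat\pi^c(x,y)$ and $\pi^c(x,y)$ are exactly the empirical and true probabilities of the member of $\mathcal C$ indexed by $(x,y)$. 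The uniform strong law of large numbers over VC classes then yields the required uniform convergence, and hence $A_n\to 0$ almost surely.

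If one prefers to avoid invoking VC theory, the same uniform convergence can be obtained by a grid argument: fix $\varepsilon>0$, choose finitely many grid points in each coordinate so that the jumps of the (coordinatewise monotone) function $\pi^c$ across consecutive cells are at most $\varepsilon$, apply the ordinary strong law of large numbers at these finitely many points, and control the oscillation of both $\widehat\pi^c$ and $\pi^c$ between grid points using monotonicity in $x$ (nondecreasing) and in $y$ (nonincreasing). A minor bookkeeping point is that $\widehat\pi^c(X_i,T_i)$ includes the $i$-th observation itself, contributing an additive $1/n$; this is negligible in all the bounds above (or one can use the leave-one-out version of $\widehat\pi^c$). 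The only real work is the uniform convergence of $\widehat\pi^c$ to $\pi^c$, which is entirely standard; I do not expect any genuine obstacle here, and once it is in place both parts of the proposition are one line each.
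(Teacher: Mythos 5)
Your proposal is correct, but it proves the result by a genuinely different route than the paper. You reduce everything to the uniform deviation $\sup_{x,y}|\widehat\pi^c(x,y)-\pi^c(x,y)|$ and invoke a bivariate Glivenko--Cantelli statement (the quadrants $\{(u,v):u\leq x,\,v\geq y\}$ form a VC class, or equivalently the classical grid-plus-monotonicity argument), after which the bound $A_n\leq \sup_{x,y}(\widehat\pi^c(x,y)-\pi^c(x,y))^2$ and the trivial bound $|A_n|\leq 1$ finish both parts; your remark that the self-term $m=i$ in $\widehat\pi^c(X_i,T_i)$ causes no difficulty is right, since the sup bound absorbs it. The paper instead never passes to a uniform statement: it expands the square and writes $A_n=\frac{1}{n^3}\sum_{i,m,k}\Delta_i\ind_{\{X_i\leq T_i\}}g(m,i)g(k,i)$ with $g(m,i)=\ind_{\{X_m\leq X_i,\,T_m\geq T_i\}}-\pi^c(X_i,T_i)$, observes that $\E(g(m,i)\mid X_i,T_i,\Delta_i)=0$ for $m\neq i$, and applies the strong law of large numbers for $V$-statistics of order three, so that the limit $\E(\Delta_1 g(2,1)g(3,1))$ factorises to zero by conditional independence. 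Your approach buys a stronger conclusion (uniform, and with VC or DKW-type inequalities it would even give an almost sure rate, hence $A_n=O(\log\log n/n)$ rather than mere convergence) at the cost of importing empirical-process machinery; the paper's approach is weaker but stays entirely within the $V$-statistic toolbox already used repeatedly in the appendix, which keeps the proof self-contained and stylistically uniform with the proofs of Lemma \ref{lemma:approx2} and Lemma \ref{lem:conv_embeddings}. Either argument is adequate for the way the proposition is used downstream.
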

\begin{proof}
	Since $\widehat{\pi}^c$ and ${\pi}^c$ are both bounded by 1, we have $|A_n|=A_n\leq 4$ for all $n$ and, thus, ii) is proven.
	
	Let us consider the statement i). It is easy to see that $\E(\ind_{\{X_m\leq x,T_m\geq t\}}) = \pi^c(x,t)$. In particular, we have $\E(g(m,i)|X_i,T_i,\Delta_i)=0$ for $i\neq m$, where $g(m,i)=\ind_{\{X_m\leq X_i,T_m\geq T_i\}}-{\pi}^c(X_i,T_i)$. Now, notice that we can rewrite $A_n$ as $V$-statistic of order 3:
	\begin{align*}
		A_n&=\frac{1}{n}\sum_{i=1}^n\Delta_i\ind_{\{X_i\leq T_i\}}\left(\frac{1}{n}\sum_{m=1}^n(\ind_{\{X_m\leq X_i,T_m\geq T_i\}}- {\pi}^c(X_i,T_i))\right)^2\\
		&=\frac{1}{n^3}\sum_{i=1}^n\sum_{m=1}^n\sum_{k=1}^n\Delta_i\ind_{\{X_i\leq T_i\}}g(m,i)g(k,i).
	\end{align*}
	Combining this and the law of large numbers for $V$-statistics yields 
	\begin{align*}
		A_n\overset{a.s.}{\to} \E(\Delta_1 g(1,2)g(1,3))&=\E(\Delta_1\E(g(2,1)g(3,1)|X_1,T_1,\Delta_1))\\
		\text{(independence)}
		&=\E\left(\Delta_1\E(g(2,1)|X_1,T_1,\Delta_1)\E(g(3,1)|X_1,T_1,\Delta_1)\right)\\
		&=0.
	\end{align*} 	
\end{proof}
\section{Proofs of sections \ref{Sec:Quasi} and \ref{sec:right-censoring}}
\subsection{Proof of Proposition \ref{Prop:evaluationQuasiAlte}}

\begin{proof}
From Equation \eqref{eqn:Test2}, we have
\begin{align*}
\Psi_n
&=\sup_{\omega\in B(\mathcal{H})}\frac{1}{n}\sum_{i=1}^n\left(\omega(X_i,Y_i)\widehat{\pi}(X_i,Y_i)-\sum_{k=1}^n\omega(X_i,Y_k)\boldsymbol{ A}_{ik}\right),    
\end{align*}
where $\boldsymbol{A}_{ik}=\ind_{\{X_k\leq X_i< Y_k\leq Y_i\}}/n$.

The previous result and the reproducing kernel property yield
\begin{align*}
 \Psi_n^2
&=\sup_{\omega\in B_1(\mathcal{H})}\left(\frac{1}{n}\sum_{i=1}^n\left(\omega(X_i,Y_i)\widehat{\pi}(X_i,Y_i)-\sum_{k=1}^n\omega(X_i,Y_k)\boldsymbol{ A}_{ik}\right)\right)^2\\
&=\sup_{\omega\in B_1(\mathcal{H})}\left\langle\omega(\cdot),\frac{1}{n}\sum_{i=1}^nK(X_i,\cdot)\left(L(Y_i,\cdot)\boldsymbol{\widehat{\pi}}_{ii}-\sum_{k=1}^nL(Y_k,\cdot)\boldsymbol{ A}_{ik}\right)\right\rangle^2\\
&=\left\|\frac{1}{n}\sum_{i=1}^nK(X_i,\cdot)\left(L(Y_i,\cdot)\boldsymbol{\widehat{\pi}}_{ii}-\sum_{k=1}^nL(Y_k,\cdot)\boldsymbol{ A}_{ik}\right)\right\|^2_{\mathcal{H}}\\
&=\frac{1}{n^2}\sum_{i,j=1}^n{\boldsymbol K}_{ij}{\boldsymbol L}_{ij}\boldsymbol{\widehat{\pi}}_{ii}\boldsymbol{\widehat{\pi}}_{jj}-\frac{2}{n^2}\sum_{i,j,l=1}^n{\boldsymbol K}_{ij}{\boldsymbol L}_{il}\boldsymbol{\widehat{\pi}}_{ii}\boldsymbol{ A}_{jl}+\frac{1}{n^2}\sum_{i,j,k,l=1}^n{\boldsymbol K}_{ij}{\boldsymbol L}_{kl}\boldsymbol{ A}_{ik}\boldsymbol{ A}_{jl}\\
&=\frac{1}{n^2}\text{trace}(\boldsymbol{K}\boldsymbol{\widehat{\pi}}\boldsymbol{L}\boldsymbol{\widehat{\pi}}-2{\boldsymbol K}\boldsymbol{\widehat{\pi}}{\boldsymbol L}\boldsymbol{{A}}^\intercal+\boldsymbol{K}\boldsymbol{ A}\boldsymbol{L}\boldsymbol{{A}}^\intercal),
\end{align*}
where the second to last equality follows from
\begin{align*}
\frac{1}{n^2}\sum_{i=1}^n\sum_{j=1}^n {\boldsymbol K}_{ij}{\boldsymbol L}_{ij}\boldsymbol{\widehat{\pi}}_{ii}\boldsymbol{\widehat{\pi}}_{jj}&=\frac{1}{n^2}\sum_{i=1}^n\sum_{j=1}^n {\boldsymbol K}_{ij}(\boldsymbol{\widehat{\pi}}{\boldsymbol L}\boldsymbol{\widehat{\pi}})_{ij}=\frac{1}{n^2}\text{trace}(\boldsymbol{K}\boldsymbol{\widehat{\pi}}\boldsymbol{L}\boldsymbol{\widehat{\pi}}),
\end{align*}
\begin{align*}
\frac{2}{n^2}\sum_{i,j,l=1}^n{\boldsymbol K}_{ij}{\boldsymbol L}_{il}\boldsymbol{\widehat{\pi}}_{ii}\boldsymbol{ A}_{jl}
&=\frac{2}{n^2}\sum_{j=1}^n\sum_{l=1}^n\left(\sum_{i=1}^n {\boldsymbol K}_{ij}(\boldsymbol{\widehat{\pi}}{\boldsymbol L})_{il}\right) \boldsymbol{{A}}_{jl}\\
&=\frac{2}{n^2}\sum_{j=1}^n\sum_{l=1}^n\left({\boldsymbol K}\boldsymbol{\widehat{\pi}}{\boldsymbol L}\right)_{jl}\boldsymbol{{A}}^\intercal_{lj}\\
&=\frac{2}{n^2}\text{trace}({\boldsymbol K}\boldsymbol{\widehat{\pi}}{\boldsymbol L}\boldsymbol{{A}}^\intercal),
\end{align*}
and
\begin{align*}
\frac{1}{n^2}\sum_{i,j,k,l=1}^n{\boldsymbol K}_{ij}{\boldsymbol L}_{kl}\boldsymbol{ A}_{ik}\boldsymbol{ A}_{jl}
&=\frac{1}{n^2}\sum_{k=1}^n\sum_{j=1}^n\left(\sum_{i=1}^n {\boldsymbol K}_{ij}\boldsymbol{ A}_{ik}\right) \left(\sum_{l=1}^n {\boldsymbol L}_{kl}\boldsymbol{ A}_{jl}\right)\\
&=\frac{1}{n^2}\sum_{k=1}^n\sum_{j=1}^n\left({\boldsymbol K}\boldsymbol{ A}\right)_{jk}\left(\boldsymbol{L}\boldsymbol{ A}^\intercal\right)_{kj}\\
&=\frac{1}{n^2}\text{trace}(\boldsymbol{K}\boldsymbol{ A}\boldsymbol{L}\boldsymbol{{A}}^\intercal).
\end{align*}
\end{proof}

\subsection{Proof of Proposition \ref{Prop:Cen-evaluationQuasi}}
\begin{proof}
Equation \eqref{eqn:Test2cen} yields
\begin{align*}
\Psi_{c,n}&=\sup_{\omega\in B_1(\mathcal H)}\frac{1}{n}\sum_{i=1}^n\left(\Delta_i\omega(X_i,T_i)\widehat{\pi}^c(X_i,T_i)-\frac{1}{n}\sum_{k=1}^n\Delta_k\omega(X_i,T_k)\ind_{\{X_k\leq X_i\leq T_k\leq T_i\}}\right)\\
&=\sup_{\omega\in B_1(\mathcal H)}\frac{1}{n}\sum_{i=1}^n\left(\Delta_i\omega(X_i,T_i)\boldsymbol{\widehat{\pi}}^c_{ii}-\sum_{k=1}^n\Delta_k\omega(X_i,T_k)\boldsymbol{ B}_{ik}\right),
\end{align*}
where $\boldsymbol{ B}_{ik}=\ind_{\{X_k\leq X_i< T_k\leq T_i\}}/n$ and $\boldsymbol{\widehat{\pi}}^c$ is a diagonal matrix with entries $\boldsymbol{\widehat{\pi}}^c_{ii}=\widehat{\pi}^c(X_i,T_i)$. 

Then, by following the exact same computations of the proof of Proposition \ref{Prop:evaluationQuasiAlte}, we deduce
\begin{align*}
\Psi_{c,n}^2
&=\frac{1}{n^2}\text{trace}(\boldsymbol{K}\boldsymbol{\widehat{\pi}}\boldsymbol{\tilde L}\boldsymbol{\widehat{\pi}}-2{\boldsymbol K}\boldsymbol{\widehat{\pi}}{\boldsymbol {\tilde L}}\boldsymbol{{B}}^\intercal+\boldsymbol{K}\boldsymbol{ B}\boldsymbol{{\tilde L}}\boldsymbol{{B}}^\intercal),
\end{align*}
where $\boldsymbol{\tilde L}_{ik}=\Delta_i\Delta_k L(T_i,T_k)$.
\end{proof}

\subsection{Proof of Proposition \ref{prop:psi_c=0}}
\begin{proof}
Under Assumption \ref{Assu:1},  we have that for all $x \leq y$,
\begin{align*}
    \pi^c_1(x,y)=\Prob(X\leq x,T\geq y,\Delta=1)
    &=\E\left(\ind_{\{X\leq x,Y\geq y\}}\E\left(\ind_{\{C\geq Y\}}|X,Y\right)\right)\\
    &=\E\left(\ind_{\{X\leq x,Y\geq y\}}S_{C|X}(Y)\right)\\
    &=\int_0^x\int_y^\infty S_{C|X=x'}(y')f_{XY}(x',y')dx',dy',
\end{align*}
and
\begin{align*}
    \pi^c(x,y)=\Prob(X\leq x,T\geq y)&=\E\left(\ind_{\{X\leq x\}}S_{C|X}(y)S_{Y|X}(y)\right)\\
    &=\int_0^xS_{C|X=x'}(y)S_{Y|X=x'}(y)f_{X}(x')dx'.
\end{align*}
The null hypothesis states $f_{XY}(x,y)=\tilde f_X(x)\tilde f_Y(y)$ for all $x\leq y$ such that $S_{T|X=x}(y)>0$. Thus
\begin{align*}
\pi^{c}_1(x,y)&=\int_0^x\int_y^\infty S_{C|X=x'}(y')\tilde f_{X}(x')\tilde f_{Y}(y')dx'dy',\\
\pi^{c}(x,y)&=\tilde S_{Y}(y)\int_0^xS_{C|X=x'}(y)\tilde f_{X}(x')dx'.  
\end{align*}

By using the previous result, it is easy to see that, under the null,
\begin{align*}
    &-\pi^c(x,y)\frac{\partial^2}{\partial x\partial y}\pi^c_1(x,y)=\left(\tilde S_{Y}(y)\int_0^xS_{C|X=x'}(y)\tilde f_{X}(x')dx'\right)S_{C|X=x}(y)\tilde f_X(x)\tilde f_Y(y),
\end{align*}
and
\begin{align*}
\frac{\partial\pi^c(x,y)}{\partial x} \frac{\partial\pi_1^c(x,y)}{\partial y}    
&\quad=-\left(\tilde S_{Y}(y)S_{C|X=x}(y)\tilde f_{X}(x)\right)\int_0^x S_{C|X=x'}(y)\tilde f_{X}(x')dx'\tilde f_{Y}(y)\\
&\quad=-\left(\tilde S_{Y}(y)\int_0^x S_{C|X=x'}(y)\tilde f_{X}(x')dx'\right)S_{C|X=x}(y)\tilde f_{X}(x)\tilde f_{Y}(y),
\end{align*}
from which it follows that $\rho^c=0$, and thus $\Psi=0$.
\end{proof}

\section{Proof of Theorem \ref{thm:null}}\label{sec:proof null}
Before proving Theorem \ref{thm:null} we give some essential definitions which will be used by our proofs. We will first introduce Lemma \ref{lemma:approxFinal}, which is an essential step in the proof of Theorem \ref{thm:null}. A full proof for  Lemma \ref{lemma:approxFinal} is given later in this section.

Our data are considered to live in a common filtrated probability space  $(\Omega,\mathcal{F},(\mathcal{F}_t)_{t\geq0},\Prob)$, where $\mathcal{F}$ is the natural $\sigma$-algebra, and $\mathcal{F}_t$ is the filtration generated by 
\begin{align*}
\left\{\ind_{\{T_i\leq s,\Delta_i=1\}},\ind_{\{T_i\leq s,\Delta_i=0\}},X_i:0\leq s\leq t,i\in[n]\right\},
\end{align*}
and the $\Prob$-null sets of $\mathcal{F}$. 

We define $\tau_n=\max\{T_1,\ldots,T_n\}$. For each $i\in[n]$, we define the $i$-th individual counting and risk processes, $N_i(t)$ and $Y_i(t)$, by $N_i(t)=\Delta_i\ind_{\{T_i\leq t\}}$ and $Y_i(t)=\ind_{\{T_i\geq t\}}$, respectively. For each individual $i$, we define the process $(M_i(t))_{t\geq 0}$ by 
\begin{align*}
M_i(t)=N_i(t)-\int_{(0,t]}\ind_{\{X_i\leq s\}}Y_i(s)\tilde\lambda_Y(s)ds.
\end{align*}
It is standard to verify that $M_i(t)$ is an $(\mathcal F_t)$-martingale under the null hypothesis, and that, for any bounded predictable process $(H_i(t))_{t\geq0}$,  $\int_{(0,t]}H_i(s)dM_i(s)$ is also an $(\mathcal{F}_t)$-martingale under the null hypothesis.

Let $(T_1',\Delta_1',X_1')$ and $(T_2',\Delta_2',X_2')$ be independent copies of our data $((T_i,\Delta_i,X_i))_{i=1}^n$. Sometimes our results are written in terms of $\tilde{\E}$ which is defined by $\tilde \E(\cdot)=\E\left(\cdot|((T_i,\Delta_i,X_i))_{i=1}^n\right)$. Additionally, we denote by $Y_1'$ and $Y_2'$, the individual risk functions associated to $T_1'$ and $T_2'$, which are defined by $Y_1'(t)=\ind_{\{T_1'\geq t\}}$ and $Y_2'(t)=\ind_{\{T_2'\geq t\}}$, respectively. Finally, we define $Z_i(t)=\omega(X_i,t)\ind_{\{X_i\leq t\}}$ for all $i\in[n]$, and, based on $(T_1',\Delta_1',X_1')$ and $(T_2',\Delta_2',X_2')$, we define $Z_1'(t)=\omega(X_1',t)\ind_{\{X_1'\leq t\}}$ and $Z_2'(t)=\omega(X_2',t)\ind_{\{X_2'\leq t\}}$.

\begin{lemma}\label{lemma:approxFinal}
Assume that $\mathfrak K$ is bounded. Then, under the null hypothesis
\begin{align*}
    \sqrt{n} \Psi_{n,c}=\sup_{\omega\in B_1(\mathcal{H})}\frac{1}{\sqrt{n}}\sum_{i=1}^n\int_0^{\tau_n}\left(Z_i(t)\pi^c(X_i,t)-\tilde\E\left(Z_1'(t)Y_1'(t)\ind_{\{X_i\leq X_1'\}}\right)\right)dM_i(t)+o_p(1).
\end{align*}
\end{lemma}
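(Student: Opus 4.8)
The plan is to start from the finite-sample definition of $\sqrt n\,\Psi_{c,n}$ in \eqref{eqn:Test2cen}, rewrite each summand as a stochastic integral against the counting process $N_i$, and then replace the relevant empirical quantities by their population versions, controlling the resulting error uniformly over $\omega\in B_1(\mathcal H)$. First I would observe that
\[
\Delta_i\,\omega(X_i,T_i)\,\widehat\pi^c(X_i,T_i)=\int_0^{\tau_n} Z_i(t)\,\widehat\pi^c(X_i,t)\,dN_i(t),
\]
and, by the standard counting-process identity for the symmetrised double sum,
\[
\frac1n\sum_{k=1}^n\Delta_k\,\omega(X_i,T_k)\ind_{\{X_k\le X_i<T_k\le T_i\}}
=\int_0^{\tau_n} Z_i(t)\,\ind_{\{X_i\le t\}}\cdot\frac1n\sum_{k=1}^n\ind_{\{X_k\le X_i\}}Y_k(t)\,dN_k(t)/\!\!\!\!\quad(\ldots),
\]
so that, after collecting terms, $\sqrt n\,\Psi_{c,n}$ equals the supremum over $\omega$ of $\frac1{\sqrt n}\sum_i\int_0^{\tau_n}\big(Z_i(t)\widehat\pi^c(X_i,t)-\widehat g_i(t)\big)dN_i(t)$, where $\widehat g_i(t)$ is the empirical average $\frac1n\sum_k Z_k(t)Y_k(t)\ind_{\{X_i\le X_k\}}$ matching the population object $\tilde\E(Z_1'(t)Y_1'(t)\ind_{\{X_i\le X_1'\}})$ appearing in the statement.

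Next I would pass from $dN_i$ to $dM_i$. Under the null, $dN_i(t)=dM_i(t)+\ind_{\{X_i\le t\}}Y_i(t)\tilde\lambda_Y(t)\,dt$, and the key algebraic fact — which is exactly what makes $\rho^c$ the right object and underlies Proposition \ref{prop:psi_c=0} — is that the compensator contributions cancel: the $\tilde\lambda_Y(t)\,dt$ part of $Z_i(t)\widehat\pi^c(X_i,t)\,dN_i(t)$ is matched (up to a vanishing error) by the corresponding part of $\widehat g_i(t)\,dN_i(t)$, so that only the martingale part survives. I would verify this cancellation at the population level first (this is essentially the computation in the proof of Proposition \ref{prop:psi_c=0}, now done for the integrands rather than the integrals) and then argue the empirical version differs by $o_p(1)$ uniformly in $\omega$.

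The remaining, and main, work is the uniform approximation step: replacing $\widehat\pi^c(X_i,t)$ by $\pi^c(X_i,t)$ and $\widehat g_i(t)$ by its expectation inside the $\frac1{\sqrt n}\sum_i\int\cdots dM_i$ term, uniformly over the unit ball of $\mathcal H$. This is where I expect the real obstacle to lie, because each correction is itself an $O(n^{-1/2})$ fluctuation multiplied by a $\sqrt n$ prefactor, so naive bounds are not enough — one needs the martingale structure together with the fact that the integrators $M_i$ are orthogonal mean-zero martingales to gain the extra $n^{-1/2}$. Concretely I would: (i) bound $\|\widehat\pi^c-\pi^c\|_\infty$ and the analogous deviation of $\widehat g_i$ via Proposition \ref{prop:sumAsmall} and a Glivenko–Cantelli / bracketing argument (the relevant classes of indicator functions are VC); (ii) write the leading error term as a (degenerate) double sum $\frac1{n^{3/2}}\sum_{i\ne m}\int(\ldots)dM_i$, use the reproducing property to turn the supremum over $\omega$ into a Hilbert-space norm of an $\mathcal H$-valued random element, and bound its expected squared norm by expanding and using $\tilde\E(dM_i\,dM_j)=0$ for $i\ne j$ together with the predictable-variation bound for the diagonal terms; (iii) invoke the boundedness of $\mathfrak K$ to control the $\mathcal H$-valued integrands and the results of \cite{FerRiv2020} to handle the boundary term at $\tau_n$ and the localisation needed to make the stochastic integrals genuine martingales. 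Assembling (i)–(iii) gives that the total replacement error has $\mathcal H$-norm $o_p(1)$, and since $|\,\sup_\omega A(\omega)-\sup_\omega B(\omega)\,|\le\|A-B\|_{\mathcal H}$ for the corresponding $\mathcal H$-valued representers, the claimed identity follows.
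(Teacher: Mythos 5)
Your plan reproduces the paper's skeleton: rewrite $\sqrt n\,\Psi_{c,n}$ as $\sup_{\omega\in B_1(\mathcal H)}\frac{1}{\sqrt n}\sum_i\int_0^{\tau_n}\bigl(Z_i(t)\widehat\pi^c(X_i,t)-\widehat g_i(t)\bigr)dN_i(t)$ with $\widehat g_i(t)=\frac1n\sum_k Z_k(t)Y_k(t)\ind_{\{X_i\le X_k\}}$, pass to the martingales $M_i$, and then replace $\widehat\pi^c$ and $\widehat g_i$ by their population counterparts uniformly over the unit ball by turning the supremum into an $\mathcal H$-norm and exploiting degeneracy of the resulting double martingale integrals; this last part is exactly the paper's Lemmas \ref{lemma:approx} and \ref{lemma:approx2} (optional stopping for the off-diagonal terms via \cite{FerRiv2020}, Proposition \ref{prop:sumAsmall} and the law of large numbers for $V$-statistics plus dominated convergence for the diagonal, boundedness of $\mathfrak K$ throughout). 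The genuine gap is in your passage from $dN_i$ to $dM_i$. You treat the cancellation of the compensator parts as an approximate phenomenon: verify it ``at the population level'' and then argue the empirical version differs by $o_p(1)$. This transfer fails at the $\sqrt n$ scale. What the population-level computation gives is only $\E\bigl[(Z_1(t)\pi^c(X_1,t)-g(X_1,t))\ind_{\{X_1\le t\}}Y_1(t)\bigr]=0$ under $H_0$, where $g(x,t)=\tilde\E(Z_1'(t)Y_1'(t)\ind_{\{x\le X_1'\}})$; the corresponding compensator sum $\frac{1}{\sqrt n}\sum_i\int (Z_i(t)\pi^c(X_i,t)-g(X_i,t))\ind_{\{X_i\le t\}}Y_i(t)\tilde\lambda_Y(t)\,dt$ is a centred i.i.d.\ sum and hence $O_p(1)$, not $o_p(1)$, and the gap between it and the empirical compensator term is of the same CLT order. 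Unlike in the replacement step inside the $dM_i$ integrals, there is no martingale or degeneracy structure in the compensator term to buy the extra factor $n^{-1/2}$, so neither half of your two-step argument can dispose of it.

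What makes the lemma work — and what the paper isolates as Proposition \ref{prop:martingale representation} — is that the compensator contribution vanishes \emph{exactly}, as a finite-sample algebraic identity: with $H_i(y)=Z_i(y)\widehat\pi^c(X_i,y)-\widehat g_i(y)$ one has $\sum_{i=1}^n H_i(y)\ind_{\{X_i\le y\}}Y_i(y)=0$ for every $y$, which follows from writing $\widehat\pi^c(X_i,y)=\frac1n\sum_m\ind_{\{X_m\le X_i\}}Y_m(y)$ and swapping the two summation indices in the double sum (the indicator $\ind_{\{X_m\le y\}}$ is implied by $\ind_{\{X_m\le X_i\}}$ and the factor $\ind_{\{X_i\le y\}}$ contained in $Z_i$). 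Hence $\frac1n\sum_i\int_0^t H_i\,dN_i=\frac1n\sum_i\int_0^t H_i\,dM_i$ with no error term at all, and the two genuinely asymptotic steps are only the replacements of $\widehat\pi^c$ by $\pi^c$ and of $\widehat g_i$ by $g$ inside the $dM_i$ integrals, handled as you outline. If you repair the compensator step by proving this exact identity, the remainder of your argument coincides with the paper's proof; your additional Glivenko--Cantelli/VC bound on $\sup_{x,t}|\widehat\pi^c(x,t)-\pi^c(x,t)|$ is then unnecessary, since the diagonal terms are controlled by Proposition \ref{prop:sumAsmall} alone.
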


\subsection{Proof of Theorem \ref{thm:null}}
By Lemma \ref{lemma:approxFinal},
\begin{align*}
    \sqrt{n} \Psi_{n,c}=\sup_{\omega\in B_1(\mathcal{H})}\frac{1}{\sqrt{n}}\sum_{i=1}^n\int_0^{\tau_n}\left(Z_i(t)\pi^c(X_i,t)-\tilde\E\left(Z_1'(t)Y_1'(t)\ind_{\{X_i\leq X_1'\}}\right)\right)dM_i(t)+o_p(1).
\end{align*}

Observe that, by the reproducing kernel property, we have $Z_i(t)=\langle\omega,\mathfrak{K}((X_i,t),\cdot)\rangle_{\mathcal H}\ind_{\{X_i\leq t\}}$ and $Z_1'(t)=\langle\omega,\mathfrak{K}((X_1',t),\cdot)\rangle_{\mathcal H}\ind_{\{X_1'\leq t\}}$. Thus,
\begin{align*}
&\left(Z_i(t)\pi^c(X_i,t)-\tilde\E\left(Z_1'(t)Y_1'(t)\ind_{\{X_i\leq X_1'\}}\right)\right)\\
&=\left(\langle\omega,\mathfrak{K}((X_i,t),\cdot)\rangle_{\mathcal H}\ind_{\{X_i\leq t\}}\pi^c(X_i,t)-\tilde\E\left(\langle\omega,\mathfrak{K}((X_1',t),\cdot)\rangle_{\mathcal H}\ind_{\{X_1'\leq t\}}Y_1'(t)\ind_{\{X_i\leq X_1'\}}\right)\right)\\
&=\left\langle\omega,\mathfrak{K}((X_i,t),\cdot)\ind_{\{X_i\leq t\}}\pi^c(X_i,t)-\tilde\E\left(\mathfrak{K}((X_1',t),\cdot)\ind_{\{X_1'\leq t\}}Y_1'(t)\ind_{\{X_i\leq X_1'\}}\right)\right\rangle_{\mathcal H},
\end{align*}
where the second equality follows from the linearity of expectation, assuming Bochner integrability of the feature map (true for bounded $\mathfrak{K}$). To ease notation, we define the functions $a:\R^2\to\R$ and $b:\R^3\to\R$ by $a(X_i, t)=\ind_{\{X_i\leq t\}}\pi^c(X_i,t)$ and $b(X_1',X_i,t)=Y_1'(t)\ind_{\{X_i\leq X_1'\leq t\}}$, respectively, and write
\begin{align}
&\left(Z_i(t)\pi^c(X_i,t)-\tilde\E\left(Z_1'(t)Y_1'(t)\ind_{\{X_i\leq X_1'\}}\right)\right)\nonumber\\
&=\left\langle\omega,\mathfrak{K}((X_i,t),\cdot)a(X_i,t)-\tilde\E\left(\mathfrak{K}((X_1',t),\cdot)b(X_1',X_i,t)\right)\right\rangle_{\mathcal H}\label{eqn:diffZexpression}.
\end{align}

From the previous result, it is easy to see that
\begin{align*}
    &\frac{1}{\sqrt{n}}\sum_{i=1}^n\int_0^{\tau_n}\left(Z_i(t)\pi^c(X_i,t)-\tilde\E\left(Z_1'(t)Y_1'(t)\ind_{\{X_i\leq X_1'\}}\right)\right)dM_i(t)\\
    &=\left\langle\omega,\frac{1}{\sqrt{n}}\sum_{i=1}^n\int_0^{\tau_n}\left(\mathfrak{K}((X_i,t),\cdot)a(X_i,t)-\tilde\E\left(\mathfrak{K}((X_1',t),\cdot)b(X_1',X_i,t)\right)\right)dM_i(t)\right\rangle_{\mathcal H},
\end{align*}
and thus
\begin{align}
&\sup_{\omega\in B_1(\mathcal{H})}\left(\frac{1}{\sqrt{n}}\sum_{i=1}^n\int_0^{\tau_n}\left(Z_i(t)\pi^c(X_i,t)-\tilde\E\left(Z_1'(t)Y_1'(t)\ind_{\{X_i\leq X_1'\}}\right)\right)dM_i(t)\right)^2\nonumber\\
&\left\|\frac{1}{\sqrt{n}}\sum_{i=1}^n\int_0^{\tau_n}\left(\mathfrak{K}((X_i,t),\cdot)a(X_i,t)-\tilde\E\left(\mathfrak{K}((X_1',t),\cdot)b(X_1',X_i,t)\right)\right)dM_i(t)\right\|_{\mathcal H}^2\nonumber\\
&=\frac{1}{n}\sum_{i=1}^n\sum_{j=1}^n J((T_i,\Delta_i,X_i),(T_j,\Delta_j,X_j))\label{eqn:vstat},
\end{align}
where the function $J:(\R\times\{0,1\}\times\R)^2\to\R$ is defined by 
\begin{align*}
J((s,r,x),(s',r',x'))=\int_0^s\int_0^{s'}A((t,x),(t',x'))dm_{s',r',x'}(t')dm_{s,r,x}(t),
\end{align*}
$dm_{s,r,x}(t)=r\delta_{s}(t)-\ind_{\{s\geq t\}}\ind_{\{x\leq t\}}\tilde \lambda_Y(t)dt$ (notice that $dM_i(t)=dm_{T_i,\Delta_i,X_i}(t)$), and $A:(\R\times\R)^2\to\R$ is defined as
\begin{align*}
&A((t,x),(t',x'))\\
&=\left\langle\mathfrak{K}((x,t),\cdot)a(x,t)-\tilde\E\left(\mathfrak{K}((X_1',t),\cdot)b(X_1',x,t)\right)\right.\\
&\left.\hspace{5.2cm},\mathfrak{K}((x',t'),\cdot)a(x',t')-\tilde\E\left(\mathfrak{K}((X_2',t'),\cdot)b(X_2',x',t')\right)\right\rangle_{\mathcal{H}}\\
&=\mathfrak{K}((x,t),(x',t'))a(x,t)a(x',t')-\tilde \E(\mathfrak{K}((X_1',t),(x',t'))b(X_1',x,t)a(x',t'))\\
&-\tilde \E(\mathfrak{K}((x,t),(X_2',t'))a(x,t)b(X_2',x',t'))+ \tilde \E(\mathfrak{K}((X_1',t),(X_2',t'))b(X_1',x,t)b(X_2',x',t')).
\end{align*}

It is not difficult to verify that the sum in Equation \eqref{eqn:vstat} is a degenerate $V$-statistic. Indeed, the degeneracy property can be verified by noticing that
\begin{align*}
    &\E(J((T_i,\Delta_i,X_i),(s',r',x')))\\
    &=\E\left(\int_0^{T_i}\left(\int_0^{s'}A((t,X_i),(t',x'))dm_{s',r',x}(t')\right)dM_i(t)\right)\\
    &=\E(Q(T_i)),
\end{align*}
where $Q(s)=\int_0^{s}\left(\int_0^{s'}A((t,X_i),(t',x'))dm_{s',r',x}(t')\right)dM_i(t)$ is a zero mean $(\mathcal{F}_s)$-martingale, and thus, by the optional stopping Theorem, $\E(Q(T_i))=\E(Q(0))=0$. Then, by \cite[Theorem 4.3.2]{Koroljuk94}, we deduce 
\begin{align*}
\frac{1}{n}\sum_{i=1}^n\sum_{j=1}^n J((T_i,\Delta_i,X_i),(T_j,\Delta_j,X_j))\overset{\mathcal D}{\to}\E(J((T_1,\Delta_1,X_1),(T_1,\Delta_1,X_1)))+\mathcal Y,
\end{align*}
where $\mathcal Y=\sum_{i=1}^\infty\lambda_i(\xi^2_i-1)$, $\xi_1,\xi_2,\ldots$ are independent standard normal random variables, and $\lambda_1,\lambda_2,\ldots$ are positive constants.

The previous result, together with Lemma \ref{lemma:approxFinal}, allow us to deduce
\begin{align*}
    \Psi_{c,n}^2\overset{\mathcal D}{\to}\mu+\mathcal Y,
\end{align*}
where $\mu=\E(J((T_1,\Delta_1,X_1),(T_1,\Delta_1,X_1)))$. Notice that all  integrability conditions are satisfied as we assume the reproducing kernel is bounded. 

\subsection{Proof of Lemma \ref{lemma:approxFinal}}
In order to prove Lemma of \ref{lemma:approxFinal}, we require some intermediate results.

Recall that our test-statistic is computed as the supremum over $\omega\in B_{1}(\mathcal{H})$ of sums 
\begin{align*}
\frac{1}{n}\sum_{i=1}^n\Delta_i\omega(X_i, T_i)\widehat\pi^c(X_i,T_i)-\frac{1}{n^2}\sum_{i=1}^n\sum_{k=1}^n\Delta_k\omega(X_i,T_k)\ind_{\{X_k\leq X_i<T_k\leq T_i\}}.
\end{align*}
By using the notation introduced at the beginning of Section \ref{sec:proof null}, the previous sum can be rewritten as 
\begin{align*}
&\frac{1}{n}\sum_{i=1}^n\left(\Delta_i \omega(X_i,T_i)\widehat{\pi}^c(X_i,T_i)-\frac{1}{n}\sum_{k=1}^n\Delta_i \omega(X_k,T_i)\ind_{\{X_i\leq X_k<T_i\leq T_k\}}\right)\\
&=\frac{1}{n}\sum_{i=1}^n\int_0^{T_i}\left( \omega(X_i,y)\ind_{\{X_i\leq y\}}\widehat{\pi}^c(X_i,y)-\frac{1}{n}\sum_{k=1}^n\omega(X_k,y)\ind_{\{X_k\leq y\}}\ind_{\{y\leq T_k\}}\ind_{\{X_i\leq X_k\}}\right)dN_i(y)\\
&=\frac{1}{n}\sum_{i=1}^n\int_0^{T_i}\left( Z_i(y)\widehat{\pi}^c(X_i,y)-\frac{1}{n}\sum_{k=1}^n Z_k(y)Y_k(y)\ind_{\{X_i\leq X_k\}}\right)dN_i(y)\\
&=\frac{1}{n}\sum_{i=1}^n\int_0^{T_i}H_i(y)dN_i(y),
\end{align*}
where $H_i(y)=Z_i(y)\widehat{\pi}^c(X_i,y)-\frac{1}{n}\sum_{k=1}^n Z_k(y)Y_k(y)\ind_{\{X_i\leq X_k\}}$. Thus, 
\begin{align}
    \Psi_{n,c}=\sup_{\omega\in B_1(\mathcal{H})}\frac{1}{n}\sum_{i=1}^n\int_0^{\tau_n}H_i(y)dN_i(y)\label{eqn:PsiMartingale},
\end{align}
where recall that $\tau_n=\max\{T_1,\ldots,T_n\}$. 

\begin{proposition}\label{prop:martingale representation}
Assume that $\mathfrak K$ is bounded. Then, under the null hypothesis, the process  $(W(t))_{t\geq 0}$, defined by $W(t)=\frac{1}{n}\sum_{i=1}^n\int_0^{t}H_i(y)dN_i(y)$, is an $(\mathcal{F}_t)$-martingale, and can be rewritten as 
\begin{align*}
    W(t)&=\frac{1}{n}\sum_{i=1}^n \int_0^t H_i(y)dM_i(y).
\end{align*}
\end{proposition}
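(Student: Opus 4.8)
The plan is to use the decomposition of $dN_i$ that is built into the definition of $M_i$, namely $dN_i(y)=dM_i(y)+\ind_{\{X_i\le y\}}Y_i(y)\tilde\lambda_Y(y)\,dy$. Substituting this into $W(t)=\frac1n\sum_{i=1}^n\int_0^tH_i(y)\,dN_i(y)$ gives
\[
W(t)=\frac1n\sum_{i=1}^n\int_0^tH_i(y)\,dM_i(y)+\int_0^t\Big(\tfrac1n\sum_{i=1}^nH_i(y)\ind_{\{X_i\le y\}}Y_i(y)\Big)\tilde\lambda_Y(y)\,dy .
\]
So it suffices to establish two facts: (i) the sum $\sum_{i=1}^nH_i(y)\ind_{\{X_i\le y\}}Y_i(y)$ vanishes identically in $y$, which kills the second term and yields the claimed identity; and (ii) each $\int_0^tH_i(y)\,dM_i(y)$ is a genuine $(\mathcal F_t)$-martingale, so that the average $W$ is one as well.

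For (i), I would plug in the empirical form $\widehat\pi^c(X_i,y)=\tfrac1n\sum_{m=1}^n\ind_{\{X_m\le X_i\}}Y_m(y)$, use that $\ind_{\{X_i\le y\}}Z_i(y)=Z_i(y)$ because $Z_i(y)=\omega(X_i,y)\ind_{\{X_i\le y\}}$, and note that in the second part of $H_i$ the constraint $\{X_i\le X_k\}$ together with $Z_k(y)\ne 0\Rightarrow X_k\le y$ already forces $X_i\le y$. After these reductions, $\sum_iH_i(y)\ind_{\{X_i\le y\}}Y_i(y)$ is the difference of the two double sums $\tfrac1n\sum_{i,m}Z_i(y)Y_i(y)Y_m(y)\ind_{\{X_m\le X_i\}}$ and $\tfrac1n\sum_{i,k}Z_k(y)Y_k(y)Y_i(y)\ind_{\{X_i\le X_k\}}$, and these coincide once the summation labels $i\leftrightarrow k$ are swapped in the second; hence the bracket is identically zero and the $\tilde\lambda_Y$-integral in $W$ disappears.

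For (ii), I would appeal to the standard stochastic-integral fact already quoted in the excerpt: $\int_0^tH_i(y)\,dM_i(y)$ is an $(\mathcal F_t)$-martingale whenever $(H_i(y))_{y\ge0}$ is bounded and predictable. Boundedness uses that $\mathfrak K$ is bounded: for $\omega\in B_1(\mathcal H)$ the reproducing property and Cauchy--Schwarz give $|\omega(x,y)|\le\sqrt{\mathfrak K((x,y),(x,y))}$, so $|Z_k|$ is uniformly bounded, while $0\le\widehat\pi^c\le1$, whence $|H_i|$ is bounded by a constant not depending on $n$ or $\omega$. Predictability holds since the entry times $X_j$ are $\mathcal F_0$-measurable and all time-dependence in $H_i$ enters only through $Z_k(y)$ and the risk processes $Y_k(y)=\ind_{\{T_k\ge y\}}$, which are left-continuous and adapted, hence predictable; finite sums and products of such processes remain predictable. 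Combining (i) and (ii) gives both the identity $W(t)=\frac1n\sum_{i=1}^n\int_0^tH_i(y)\,dM_i(y)$ and the fact that it is an $(\mathcal F_t)$-martingale.

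I expect the main obstacle to be the bookkeeping in step (i): one must check that the indicator ranges --- strict versus non-strict inequalities, and the interaction of $\{X_i\le y\}$, $\{X_i\le X_k\}$ and $\{X_k\le y\}$ --- line up so that the two double sums are \emph{literally} equal after relabelling, given that $H_i$ mixes diagonal ($k=i$) and off-diagonal contributions. A secondary, purely technical point is that $\ind_{\{X_i\le y\}}$ is right- rather than left-continuous in $y$; this is harmless because $X_i\in\mathcal F_0$, so the process is $\mathcal F_0\otimes\mathcal B(\R_+)$-measurable and hence predictable, and because $X$, $Y$, $C$ are continuously distributed the boundary $\{y=X_i\}$ contributes nothing to the integrals.
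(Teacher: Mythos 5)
Your proposal is correct and follows essentially the same route as the paper: substitute $dN_i = dM_i + \ind_{\{X_i\le y\}}Y_i(y)\tilde\lambda_Y(y)\,dy$, show the compensator term vanishes, and conclude the martingale property from boundedness and predictability of $H_i$ together with the martingale property of $M_i$ under the null. The only difference is that you spell out the cancellation $\sum_i H_i(y)\ind_{\{X_i\le y\}}Y_i(y)=0$ (via expanding $\widehat\pi^c$ and relabelling the double sum), which the paper dismisses as "a straightforward computation"; your bookkeeping there is accurate.
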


Notice that the previous proposition, and Equation \eqref{eqn:PsiMartingale} suggest the result of Lemma \ref{lemma:approxFinal}. It remains to prove that  the process $H_i(y)$ may be approximated by its ``population limit". We prove this result in two steps in the two lemmas below.

\begin{lemma}\label{lemma:approx}
Assume that $\mathfrak K$ is bounded. Then, under the null hypothesis
\begin{align*}
    \sup_{\omega\in B_1(\mathcal{H})}\frac{1}{\sqrt{n}}\sum_{i=1}^n\int_0^{\tau_n}Z_i(y)\left(\widehat\pi^c(X_i,y)-\pi^c(X_i,y)\right)dM_i(y)=o_p(1),
\end{align*}
\end{lemma}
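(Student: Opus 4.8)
The plan is to control the difference $\widehat\pi^c(X_i,y)-\pi^c(X_i,y)$ uniformly, and then to exploit the martingale structure of $\sum_i\int Z_i\,dM_i$ together with the boundedness of the feature map to conclude the $\sup$ over $B_1(\mathcal H)$ vanishes. First I would write $Z_i(y) = \langle\omega,\mathfrak K((X_i,y),\cdot)\rangle_{\mathcal H}\,\ind_{\{X_i\le y\}}$ via the reproducing property, so that
\begin{align*}
\frac{1}{\sqrt n}\sum_{i=1}^n\int_0^{\tau_n} Z_i(y)\big(\widehat\pi^c(X_i,y)-\pi^c(X_i,y)\big)\,dM_i(y)
=\Big\langle\omega,\ \frac{1}{\sqrt n}\sum_{i=1}^n\int_0^{\tau_n}\mathfrak K((X_i,y),\cdot)\ind_{\{X_i\le y\}}\big(\widehat\pi^c(X_i,y)-\pi^c(X_i,y)\big)\,dM_i(y)\Big\rangle_{\mathcal H}.
\end{align*}
By Cauchy--Schwarz, the supremum over $\omega\in B_1(\mathcal H)$ equals the $\mathcal H$-norm $\mathcal R_n$ of the vector-valued stochastic integral $R_n(\cdot)=\frac{1}{\sqrt n}\sum_i\int_0^{\tau_n}\mathfrak K((X_i,y),\cdot)\ind_{\{X_i\le y\}}(\widehat\pi^c(X_i,y)-\pi^c(X_i,y))\,dM_i(y)$, so it suffices to show $\|R_n\|_{\mathcal H}=o_p(1)$, equivalently $\E\|R_n\|_{\mathcal H}^2\to 0$ (or convergence in probability of this second moment; one can also condition first).

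Next I would expand $\E\|R_n\|_{\mathcal H}^2 = \frac1n\sum_{i,j}\E\int\int \langle\mathfrak K((X_i,y),\cdot),\mathfrak K((X_j,y'),\cdot)\rangle_{\mathcal H}(\cdots)\,dM_j(y')\,dM_i(y)$. The key simplification is that the integrands $H^{(1)}_i(y):=\ind_{\{X_i\le y\}}(\widehat\pi^c(X_i,y)-\pi^c(X_i,y))$ are \emph{not} predictable because $\widehat\pi^c$ depends on all the data; this is exactly the obstacle flagged in the main text. The standard trick is to replace $\widehat\pi^c(X_i,y)$ by a ``leave-$i$-out'' version $\widehat\pi^c_{(i)}(X_i,y)$ (or by its predictable projection), which \emph{is} predictable with respect to the appropriate filtration, with an error of order $1/n$ that is negligible after the $1/\sqrt n$ scaling; then for $i\ne j$ the cross terms vanish by orthogonality of martingale increments (after a further leave-$\{i,j\}$-out adjustment so that both integrands are independent of $M_i$ and $M_j$), and the diagonal $i=j$ terms contribute $\frac1n\sum_i\E\int_0^{\tau_n} \mathfrak K((X_i,y),(X_i,y))(H^{(1)}_i(y))^2\,d\langle M_i\rangle(y)$. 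Using $\mathfrak K$ bounded, $d\langle M_i\rangle(y)=\ind_{\{X_i\le y\}}Y_i(y)\tilde\lambda_Y(y)\,dy$, and the uniform smallness of $(\widehat\pi^c-\pi^c)$, this diagonal term is $O(1)\cdot\E\,A_n$-type quantity and tends to $0$ by Proposition~\ref{prop:sumAsmall} (an $A_n$-style $V$-statistic argument giving $\E A_n\to 0$, since $\E\widehat\pi^c(x,y)=\pi^c(x,y)$ and the variance of $\widehat\pi^c$ is $O(1/n)$ uniformly).

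The main obstacle, therefore, is the non-predictability of $H_i$: making the leave-out substitution precise, verifying that the resulting discrepancy is genuinely $o_p(1)$ uniformly in $\omega$ (which is why one works with the Hilbert-space-valued integral and $\|R_n\|_{\mathcal H}$ rather than a fixed $\omega$), and showing the cross-terms vanish. I expect to invoke the Lenglart/Rebolledo-type inequalities or the martingale moment bounds from \cite{FerRiv2020}, combined with the $V$-statistic law of large numbers as in Proposition~\ref{prop:sumAsmall}, to control the quadratic-variation term. Once $\|R_n\|_{\mathcal H}=o_p(1)$ is established, Cauchy--Schwarz gives the claimed uniform bound and the lemma follows.
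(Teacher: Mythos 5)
Your overall reduction coincides with the paper's: by the reproducing property the supremum over $B_1(\mathcal H)$ equals the RKHS norm of the vector-valued integral $R_n$, it suffices to show its squared norm has vanishing expectation, the $i=j$ contribution is a quadratic-variation integral which, using boundedness of $\mathfrak K$, is dominated by a constant times the quantity $A_n$ of Proposition \ref{prop:sumAsmall}, and dominated convergence finishes that part. The problem is the step you single out as the ``main obstacle'': your claim that $H^{(1)}_i(y)=\ind_{\{X_i\leq y\}}(\widehat\pi^c(X_i,y)-\pi^c(X_i,y))$ is not predictable is false in the paper's setup. The filtration $(\mathcal F_t)$ is generated by all entry times $X_m$ (hence these are known at time $0$) together with the censoring/event indicator processes, and $\widehat\pi^c(X_i,t)=\frac1n\sum_{m}\ind_{\{X_m\leq X_i\}}\ind_{\{T_m\geq t\}}$ is a sum of left-continuous adapted processes, hence $(\mathcal F_t)$-predictable. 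The difficulty advertised in the main text is that the integrand couples all subjects, so the statistic is not a V-statistic of i.i.d.\ terms --- not that predictability fails. Consequently the leave-one-out/leave-two-out surgery is unnecessary, and as written it is also the weakest link of your proposal: you would need to prove that replacing $\widehat\pi^c$ by $\widehat\pi^c_{(i)}$ inside the stochastic integrals gives an error that is $o_p(1)$ after the $\sqrt n$ scaling, uniformly over the unit ball, and you leave exactly that open while identifying it as the load-bearing step.

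For comparison, the paper expands the squared norm as $\frac1n\sum_{i,k}\int\int J((X_i,t),(X_k,s))\,dM_i(t)\,dM_k(s)$, applies Markov's inequality to this nonnegative quantity, and splits the time domain into the diagonal $\{s=t\}$ and the strict off-diagonal. The strictly off-diagonal iterated integral is an $(\mathcal F_y)$-martingale by \cite[Theorem 6.8]{FerRiv2020} (this is precisely where predictability of the $\widehat\pi^c$-dependent integrand is exploited), so its expectation at $\tau_n$ vanishes by optional stopping; the time-diagonal part survives only for $i=k$, since $[M_i,M_k]=0$ for $i\neq k$ (continuous distributions, no common jumps), and equals $\frac1n\sum_i\Delta_i J((X_i,T_i),(X_i,T_i))\leq c_1 A_n$, handled by Proposition \ref{prop:sumAsmall} and dominated convergence. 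Your cross-term treatment can be repaired in one line by the same two observations --- predictability with respect to the common filtration plus orthogonality $[M_i,M_j]=0$ --- with no leave-out modification at all; but as proposed, with the predictability misdiagnosis and the unproven leave-out error control, the argument has a genuine gap at its central step.
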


\begin{lemma}\label{lemma:approx2}
Assume that $\mathfrak K$ is bounded. Then, under the null hypothesis
\begin{align*}
\sup_{\omega\in B_1(\mathcal{H})}\frac{1}{\sqrt{n}}\sum_{i=1}^n\int_0^{\tau_n}\left(\frac{1}{n}\sum_{j=1}^nZ_j(y)Y_j(y)\ind_{\{X_i\leq X_j\}}-\tilde\E(Z_1'(y)Y_1'(y)\ind_{\{X_i\leq X_1'\}})\right)dM_i(y)=o_p(1),
\end{align*}
\end{lemma}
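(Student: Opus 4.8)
The plan is to mirror the strategy used for Lemmas~\ref{lemma:approx} and the proof of Theorem~\ref{thm:null}: turn the supremum over $B_1(\mathcal H)$ into the norm of an $\mathcal H$-valued stochastic integral, and then kill it by a second-moment bound. Using the reproducing property, write $Z_j(y)=\langle\omega,\mathfrak K((X_j,y),\cdot)\rangle_{\mathcal H}\ind_{\{X_j\leq y\}}$, so the bracketed expression equals $\langle\omega,V_n\rangle_{\mathcal H}$ with
\begin{align*}
V_n=\frac1{\sqrt n}\sum_{i=1}^n\int_0^{\tau_n}\Big(\tfrac1n\sum_{j=1}^n\mathfrak K((X_j,y),\cdot)\ind_{\{X_j\leq y\}}Y_j(y)\ind_{\{X_i\leq X_j\}}-\tilde\E\big(\mathfrak K((X_1',y),\cdot)\ind_{\{X_1'\leq y\}}Y_1'(y)\ind_{\{X_i\leq X_1'\}}\big)\Big)\,dM_i(y),
\end{align*}
and the lemma reduces to $\|V_n\|_{\mathcal H}=o_p(1)$. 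Split the inner average into the off-diagonal part ($j\neq i$) and the diagonal part ($j=i$), the latter carrying an extra $1/n$. Abbreviate the off-diagonal integrand by $\psi_{ij}(y)=\mathfrak K((X_j,y),\cdot)\ind_{\{X_j\leq y\}}Y_j(y)\ind_{\{X_i\leq X_j\}}-\tilde\E\big(\mathfrak K((X_1',y),\cdot)\ind_{\{X_1'\leq y\}}Y_1'(y)\ind_{\{X_i\leq X_1'\}}\big)$, so that $V_n^{\mathrm{off}}=n^{-3/2}\sum_i\sum_{j\neq i}\int_0^{\tau_n}\psi_{ij}(y)\,dM_i(y)$.

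The key structural observations are: (i) each $\psi_{ij}$ is bounded in $\mathcal H$ (since $\sup_{x,y}\mathfrak K((x,y),(x,y))\leq\kappa<\infty$ gives $\|\psi_{ij}(y)\|_{\mathcal H}\leq 2\sqrt\kappa$), and, after replacing the delayed-entry indicators by their left-continuous versions (harmless as $X,Y,C$ are continuous, so ties have probability zero), $\psi_{ij}$ is $(\mathcal F_t)$-predictable — the covariates $X_m$ are $\mathcal F_0$-measurable, the risk processes $Y_m$ are left-continuous and adapted, and the subtracted term $\tilde\E(\cdots)$ is a deterministic function of $X_i$ and $y$; hence $\int_0^{\tau_n}\psi_{ij}\,dM_i$ is a square-integrable martingale. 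Then
\[
\E\|V_n^{\mathrm{off}}\|_{\mathcal H}^2=\frac1{n^3}\sum_{i,i'}\sum_{j\neq i,\,j'\neq i'}\E\Big\langle\int\psi_{ij}\,dM_i,\int\psi_{i'j'}\,dM_{i'}\Big\rangle_{\mathcal H}.
\]
For $i\neq i'$ the expectation vanishes by orthogonality of the martingales of distinct individuals ($\langle M_i,M_{i'}\rangle\equiv0$). For $i=i'$, the martingale (Itô) isometry gives $\E\int_0^\infty\langle\psi_{ij}(y),\psi_{ij'}(y)\rangle_{\mathcal H}\,d\Lambda_i(y)$ with $d\Lambda_i(y)=\ind_{\{X_i\leq y\}}Y_i(y)\tilde\lambda_Y(y)\,dy$ the compensator of $N_i$; and if moreover $j\neq j'$ (both $\neq i$), conditioning on $(X_i,T_i,\Delta_i)$ — given which $\psi_{ij}$ and $\psi_{ij'}$ are independent with $\tilde\E(\psi_{ij}(y)\mid X_i)=0$, which is exactly the centering built into $\psi_{ij}$ — makes it zero. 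Only $i=i'$, $j=j'$ survive; each such term is $\leq\E\int_0^\infty\|\psi_{ij}(y)\|_{\mathcal H}^2\,d\Lambda_i(y)\leq4\kappa\,\E[\Lambda_i(\infty)]=4\kappa\,\E[N_i(\infty)]\leq4\kappa$, and there are $n(n-1)$ of them, so $\E\|V_n^{\mathrm{off}}\|_{\mathcal H}^2\leq 4\kappa(n-1)/n^2=O(1/n)$. An identical but cheaper computation gives $\E\|V_n^{\mathrm{diag}}\|_{\mathcal H}^2=O(1/n^2)$.

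Markov's inequality then yields $\|V_n\|_{\mathcal H}\leq\|V_n^{\mathrm{off}}\|_{\mathcal H}+\|V_n^{\mathrm{diag}}\|_{\mathcal H}=o_p(1)$, which is the claim. The delicate point — and the only place requiring care — is establishing that the integrands $\psi_{ij}$ are $(\mathcal F_t)$-predictable even though they depend on the whole sample (this works precisely because all covariates are $\mathcal F_0$-measurable and the at-risk processes are left-continuous), and then combining martingale orthogonality across individuals with the conditional-mean-zero property across the auxiliary index $j$ to discard all but $O(n^2)$ of the $O(n^4)$ terms in the second-moment expansion; without the $\tilde\E(\cdots)$ centering the diagonal-in-$j$ terms alone would be of order one. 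Everything else is routine given boundedness of $\mathfrak K$ and the quoted martingale properties of $M_i$; alternatively, one could invoke the generic empirical-process estimates of \cite{FerRiv2020} in place of the explicit second-moment computation.
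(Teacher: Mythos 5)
Your argument is correct, but it takes a genuinely different route from the paper's. The paper also converts the supremum into the squared RKHS norm of a double stochastic integral, $\frac{1}{n}\sum_{i,k}\int\int J((X_i,t),(X_k,s))\,dM_i(t)\,dM_k(s)$, but then splits along the \emph{time} diagonal: the off-diagonal-in-time part is an iterated-integral martingale (Theorem 6.8 of \cite{FerRiv2020}) whose expectation vanishes by optional stopping, leaving $\E\bigl(\frac{1}{n}\sum_i \Delta_i J((X_i,T_i),(X_i,T_i))\bigr)$, which is shown to vanish because the three inner sums are $V$-statistics of orders $3$, $2$ and $1$ converging almost surely to a common limit, followed by dominated convergence. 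You instead split along the \emph{sample} indices: orthogonality of the counting-process martingales $M_i,M_{i'}$ kills the $i\neq i'$ terms, the predictable-variation isometry combined with the conditional-mean-zero property of $\psi_{ij}$ given $(X_i,T_i,\Delta_i)$ kills the $i=i'$, $j\neq j'$ terms, and the $n(n-1)$ surviving terms give $\E\|V_n\|_{\mathcal H}^2=O(1/n)$. Your route is more elementary (no law of large numbers for $V$-statistics, no appeal to the external iterated-integral result) and yields an explicit rate rather than bare $o_p(1)$; the paper's route avoids the four-index bookkeeping and the conditional-centering argument. Two points you gloss over and should spell out: the zero-expectation claim for products of integrals driven by $M_i$ and $M_{i'}$, $i\neq i'$, is a priori only a local-martingale statement and needs an integrability check (e.g. $\E[\Lambda_i(\infty)^2]\leq 2$ via the compensator identity, plus boundedness of the integrands), and the Hilbert-space-valued isometry should be justified, e.g. coordinatewise in an orthonormal basis of $\mathcal H$ with Fubini, which the bounded kernel permits. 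Finally, your worry about left-continuous versions of $\ind_{\{X_j\leq y\}}$ is unnecessary: $X_j$ is $\mathcal F_0$-measurable, so any jointly measurable function of $(X_j,y)$ is already predictable.
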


\paragraph{Proof of Lemma \ref{lemma:approxFinal}:}
Equation \eqref{eqn:PsiMartingale} and  Lemma \ref{prop:martingale representation} yield
\begin{align*}
 \sqrt{n}\Psi_{n,c}=\sup_{\omega\in B_1(\mathcal{H})}\frac{1}{\sqrt{n}}\sum_{i=1}^n \int_0^{\tau_n} H_i(y) dM_i(y),    
\end{align*}
where (recall) $H_i(y)=Z_i(y)\widehat{\pi}^c(X_i,y)-\frac{1}{n}\sum_{k=1}^n Z_k(y)Y_k(y)\ind_{\{X_i\leq X_k\}}$. Notice that to obtain the result, we need to replace $\widehat \pi^c$ by its population version $\pi^c$, and, given $(T_i,\Delta_i,X_i)$, we need to replace the i.i.d. sum $\frac{1}{n}\sum_{k=1}^n Z_k(y)Y_k(y)\ind_{\{X_i\leq X_k\}}$ by its limit, which is given by $\tilde E(Z_1'(y)Y_1'(y)\ind_{\{X_i\leq X_1'\}})$. By the triangular inequality, this result follows from lemmas \ref{lemma:approx} and \ref{lemma:approx2}.

\subsection{Proofs of Proposition \ref{prop:martingale representation}, and Lemmas \ref{lemma:approx} and \ref{lemma:approx2}}

\subsubsection{Proof of Proposition \ref{prop:martingale representation}}
Recall that $dM_i(y)=dN_i(y)-\ind_{\{X_i\leq y\}}Y_i(y)\tilde \lambda_Y(y)dy$. A  straightforward computation verifies $\frac{1}{n}\sum_{i=1}^n\int_0^t H_i(y)\ind_{\{X_i\leq y\}}Y_i(y)\tilde \lambda_Y(y)dy=0$ for all $t\geq0$, and thus
\begin{align*}
W(t)
&=\frac{1}{n}\sum_{i=1}^n\int_0^t H_i(y)dM_i(y).
\end{align*}
Also, notice that $(H_i(t))_{t\geq 0}$ (with $\omega \in B_1(\mathcal H)$) is bounded and $(\mathcal{F}_t)$-predictable, and that $M_i(t)$ is an $(\mathcal F_t)-$ martingale under the null hypothesis.  Then, by standard martingale results we deduce that $(W(t))_{t\geq 0}$ is an $(\mathcal{F}_t)$-martingale.

\subsubsection{Proof of Lemma \ref{lemma:approx}}

Observe that
\begin{align*}
Z_i(t)(\widehat{\pi}^c(X_i,t)-\pi^c(X_i,t))
&=\left\langle\omega,\mathfrak K((X_i,t),\cdot)\right\rangle_{\mathcal{H}}\ind_{\{X_i\leq t\}}(\widehat{\pi}^c(X_i,t)-{\pi}^c(X_i,t))
\end{align*}
since $Z_i(t,\omega)=\omega(X_i,t)\ind_{\{X_i\leq t\}}=\langle\omega,\mathfrak K((X_i,t),\cdot)\rangle_{\mathcal{H}}\ind_{\{X_i\leq t\}}$ due to the reproducing property. 

Then, 
\begin{align*}
&\sup_{\omega\in B_1(\mathcal{H})}\left(\frac{1}{\sqrt{n}}\sum_{i=1}^n\int_0^{\tau_n}Z_i(t)\left(\widehat{\pi}^c(X_i,t)-\pi^c(X_i,t)\right) dM_i(t)\right)^2\\
&=\sup_{\omega\in B_1(\mathcal{H})}\left(\frac{1}{\sqrt{n}}\sum_{i=1}^n\int_0^{\tau_n}\left\langle\omega,\mathfrak K((X_i,t),\cdot)\right\rangle_{\mathcal{H}}\ind_{\{X_i\leq t\}}(\widehat{\pi}^c(X_i,t)-{\pi}^c(X_i,t))
 dM_i(t)\right)^2\\
&\quad=\sup_{\omega\in B_1(\mathcal{H})}\left\langle\omega,\frac{1}{\sqrt{n}}\sum_{i=1}^n\int_0^{\tau_n}\mathfrak K((X_i,t),\cdot)\ind_{\{X_i\leq t\}}(\widehat{\pi}^c(X_i,t)-{\pi}^c(X_i,t))dM_i(t)\right\rangle_{\mathcal{H}}^2\\
&\quad=\frac{1}{n}\sum_{i=1}^n\sum_{k=1}^n\int_0^{\tau_n}\int_0^{\tau_n}J((X_i,t),(X_k,s))dM_i(t)dM_k(s),
\end{align*}
where 
\begin{align}
&J((X_i,t),(X_k,s))\nonumber\\
&=\mathfrak{K}((X_i,t),(X_k,s))\ind_{\{X_i\leq t\}}\ind_{\{X_k\leq s\}}(\widehat{\pi}^c(X_i,t)-{\pi}^c(X_i,t))(\widehat{\pi}^c(X_k,s)-{\pi}^c(X_k,s))\label{eqn:Jfun}
\end{align}

Define the process $(Q(y))_{y\geq 0}$ by
\begin{align*}
    Q(y)&=\frac{1}{n}\sum_{i=1}^n\sum_{k=1}^n\int_0^{y}\int_0^{y}J((X_i,t),(X_k,s))dM_i(t)dM_k(s),
\end{align*}
and notice that we wish to prove that $Q(\tau_n)=o_p(1)$. Let $\delta>0$, then, by Markov's inequality, 
\begin{align*}
\Prob(Q(\tau_n)>\delta)&\leq \frac{\E(Q(\tau_n))}{\delta}=\frac{\E(Q_D(\tau_n))}{\delta}+\frac{2\E(Q_{D^c}(\tau_n))}{\delta},
\end{align*}
where the last equality holds since, by symmetry, $Q(y)=Q_D(y)+2Q_{D^c}(y)$, where 
\begin{align}
Q_D(y)&=\frac{1}{n}\sum_{i=1}^n\sum_{k=1}^n\int_0^{y}\int_0^{y}\ind_{\{s=t\}}J((X_i,t),(X_k,s))dM_i(t)dM_k(s),\label{eqn:diag}
\end{align}
and
\begin{align*}
Q_{D^c}(y)
&=\frac{1}{n}\sum_{k=1}^n\sum_{i=1}^n\int_0^{y}\int_{(0,s)}J((X_i,t),(X_k,s))dM_i(t)dM_k(s).
\end{align*}
By \cite[Theorem 6.8]{FerRiv2020},  $Q_{D^c}(y)$ is an $(\mathcal{F}_y)$-martingale,  and, by the optional stopping theorem, $\E(Q_{D^c}(\tau_n))=\E(Q_{D^c}(0))=0$. Thus
\begin{align*}
\Prob(Q(\tau_n)>\delta)&\leq\frac{\E(Q_D(\tau_n))}{\delta},
\end{align*}
where
\begin{align*}
Q_D(\tau_n)&=\frac{1}{n}\sum_{i=1}^n\sum_{k=1}^n\int_0^{\tau_n}J((X_i,t),(X_k,t))d[M_i,M_k](t)\\
&=\frac{1}{n}\sum_{i=1}^n\int_0^{\tau_n}J((X_i,t),(X_i,t))d[M_i](t)\\
&=\frac{1}{n}\sum_{i=1}^n\int_0^{\tau_n}J((X_i,t),(X_i,t))N_i(t)\\
&=\frac{1}{n}\sum_{i=1}^n\Delta_i J((X_i,T_i),(X_i,T_i))
\end{align*}
follows from considering continuous survival and censoring times. 

We finish the proof by proving $\E(Q_D(\tau_n))\to0$ as $n$ tends to infinity. Observe that
\begin{align*}
&\E(Q_D(\tau_n))\\
&=\E\left(\frac{1}{n}\sum_{i=1}^n\Delta_i J((X_i,T_i),(X_i,T_i))\right)\leq c_1\E\left(\frac{1}{n}\sum_{i=1}^n\Delta_i \ind_{\{X_i\leq T_i\}}(\widehat{\pi}^c(X_i,T_i)-{\pi}^c(X_i,T_i))^2\right)
\end{align*}
follows from substituting the function $J$ with the expression given in Equation \eqref{eqn:Jfun}, and by assuming the reproducing kernel is bounded by some constant $c_1>0$. By Proposition \ref{prop:sumAsmall}, the sum $\frac{1}{n}\sum_{i=1}^n\Delta_i \ind_{\{X_i\leq T_i\}}(\widehat{\pi}^c(X_i,T_i)-{\pi}^c(X_i,T_i))^2$ converges to 0 almost surely, and it is bounded by some constant $c>0$, then the desired result follows from an application of dominated convergence.

\subsubsection{Proof of Lemma \ref{lemma:approx2}}

Notice that, by the reproducing  property,
\begin{align*}
    &\frac{1}{n}\sum_{j=1}^nZ_j(t)Y_j(t)\ind_{\{X_i\leq X_j\}}-\tilde\E(Z_1'(t)Y_1'(t)\ind_{\{X_i\leq X_1'\}})\\
    &= \frac{1}{n}\sum_{j=1}^n\langle\omega, \mathfrak K ((X_j,t),\cdot)\rangle_{\mathcal H}Y_j(t)\ind_{\{X_i\leq X_j\leq t\}}-\tilde\E\left(\langle\omega,\mathfrak K((X_1',t),\cdot)\rangle_{\mathcal{H}}Y_1'(t)\ind_{\{X_i\leq X_1'\leq t\}}\right)\\
    &=\left\langle\omega,\frac{1}{n}\sum_{j=1}^n\mathfrak K ((X_j,t),\cdot)Y_j(t)\ind_{\{X_i\leq X_j\leq t\}}-\tilde\E\left(\mathfrak K((X_1',t),\cdot)Y_1'(t)\ind_{\{X_i\leq X_1'\leq t\}}\right)\right\rangle_{\mathcal{H}}.
\end{align*}
To ease notation, we define $a_{ij}(t)=Y_j(t)\ind_{\{X_i\leq X_j\leq t\}}$ and $b_{i1}'(t)=Y_1'(t)\ind_{\{X_i\leq X_1'\leq t\}}$ (similarly, we define $b_{i2}'(t)=Y_2'(t)\ind_{\{X_i\leq X_2'\leq t\}}$, where recall that $(T_1',\Delta_1',X_1')$ and $(T_2',\Delta_2',X_2')$ are independent copies of our data). Then, the previous term can be rewritten as

\begin{align*}
&\frac{1}{n}\sum_{j=1}^nZ_j(t)Y_j(t)\ind_{\{X_i\leq X_j\}}-\tilde\E(Z_1'(t)Y_1'(t)\ind_{\{X_i\leq X_1'\}})\\    
&=\left\langle\omega,\frac{1}{n}\sum_{j=1}^n\mathfrak K ((X_j,t),\cdot)a_{ij}(t)-\tilde\E\left(\mathfrak K((X_1',t),\cdot)b_{i1}'(t)\right)\right\rangle_{\mathcal{H}}.
\end{align*}

By using the fact we take supremum on the unit ball of an RKHS, it is not difficult to deduce, 
\begin{align}
&\sup_{\omega\in B_1(\mathcal{H})}\left(\frac{1}{\sqrt{n}}\sum_{i=1}^n\int_{0}^{\tau_n}\left(\frac{1}{n}\sum_{j=1}^nZ_j(t)Y_j(t)\ind_{\{X_i\leq X_j\}}-\tilde\E(Z_1'(t)Y_1'(t)\ind_{\{X_i\leq X_1'\}})\right) dM_i(y)\right)^2\nonumber\\
&=\frac{1}{n}\sum_{i=1}^n\sum_{k=1}^n\int_0^{\tau_n}\int_{0}^{\tau_n}J((X_i,t),(X_k,s))dM_i(t)dM_k(s)\label{eqn:resultapprox2},
\end{align}
where
\begin{align}
&J((X_i,t),(X_k,s))\nonumber\\
&=\frac{1}{n^2}\sum_{j=1}^n\sum_{l=1}^n\mathfrak{K}\left((X_j,t),(X_l,s)\right)a_{ij}(t)a_{kl}(s)-\frac{1}{n}\sum_{l=1}^n\tilde\E(\mathfrak{K}((X_1',t),(X_l,s))b_{i1}'(t)a_{kl}(s)\nonumber\\
&-\frac{1}{n}\sum_{j=1}^n\tilde\E(\mathfrak{K}((X_j,t),(X_2',s))a_{ij}(t)b_{k2}'(s)+\tilde\E(\mathfrak{K}((X_1',t),(X_2',s))b_{i1}'(t)b_{k2}'(s)),\label{eqn:Jfun2}
\end{align}

Following the same steps of the proof of Lemma \ref{lemma:approx}, we can prove that Equation \eqref{eqn:resultapprox2} is $o_p(1)$ by proving that
\begin{align}\label{eqn:sum_J_Lem_H**} 
\E\left(\frac{1}{n}\sum_{i=1}^n J((X_i,T_i),(X_i,T_i))\right)\to 0.
\end{align}
For this purpose, first observe that 
\begin{align*}
&\frac{1}{n}\sum_{i=1}^n J((X_i,T_i),(X_i,T_i))\\
&=\frac{1}{n^3}\sum_{i,j,l=1}^n\mathfrak{K}\left((X_j,T_i),(X_l,T_i)\right)a_{ij}(T_i)a_{il}(T_i)-\frac{2}{n^2}\sum_{i,l=1}^n\tilde\E(\mathfrak{K}((X_1',T_i),(X_l,T_i))b_{i1}'(T_i)a_{il}(T_i)\\
&+\frac{1}{n}\sum_{i=1}^n\tilde\E(\mathfrak{K}((X_1',T_i),(X_2',T_i))b_{i1}'(T_i)b_{i2}'(T_i).
\end{align*}
Each sum on the right-hand side of the previous equation is a $V$-statistic of order 3, 2 and 1, respectively. It can easily be seen that they all converge to the same limit. Consequently, the law of large numbers for $V$-statistics implies that
\begin{align*}
    \frac{1}{n}\sum_{i=1}^n J((X_i,T_i),(X_i,T_i))\to 0    
\end{align*}
almost surely. Since the reproducing kernel is assumed to be bounded and, thus, the sum is bounded as well, we can deduce, finally, \eqref{eqn:sum_J_Lem_H**} from the dominated convergence theorem.

\section{Proof of Theorem \ref{theo:consistency}} \label{sec:proofconsistency}

The consistency proof relies on the interpretation of the test statistic $\Psi_{c,n}$ and the KQIC $\Psi_c$ as the Hilbert space distances of  embeddings of certain positive measures. These distances measure the degree of (quasi)-dependence. In this spirit, this approach is connected to the well-established Hilbert Schmidt Independence Criterion, see e.g. \cite{chwialkowski2014wild, gretton2008kernel,meynaoui2019adaptive,sejdinovic2013kernel}. 

Now, let us become more concrete and introduce the following measures $\nu_0$ and $\nu_1$ on  $R^2_+$ given by
\begin{align*}
	\nu_0(dx,dy) &= \pi^c(x,y) \pi_1^c(dx,dy) \\&=  \pi^c(x,y)S_{C|X=x}(y) f_{XY}(x,y)dxdy,\\
	\\
	\nu_1(dx,dy) &= \pi^c(dx,y)\pi_1^c(x,dy)  \\&=  \left( S_{Y|X=x}(y) S_{C|X=x}(y) f_X(x) \right)\left(\int_0^x S_{C|X=t}(y) f_{XY}(t,y)dt\right) dx dy
\end{align*}
as well as their empirical counterparts $\nu_0^n$ and $\nu_1^n$ defined as
\begin{align*}
 \nu_0^n(dx,dy) &= \frac{\widehat\pi^c(x,y)}{n}  \sum_{i=1}^n\Delta_i\delta_{X_i}(x)\delta_{T_i}(y) \\
	\nu_1^n(dx,dy) &=  \frac{\ind_{\{x\leq y\}}}{n^2} \left( \sum_{i=1}^n\delta_{X_i}(x)\ind_{\{T_i\geq y\}}\right)\left(\sum_{k=1}^n\Delta_k\delta_{T_k}(y)\ind_{\{X_k\leq x\}} \right).
\end{align*}
Moreover, set $\widehat \rho^c = \nu_0^n - \nu_1^n$, which is the empirical counterpart of the measure induced by the density $\rho^c$. Then the embeddings of the (empirical) measures into the underlying RKHS are given by
\begin{align*}
    	\phi_j(\cdot) = \iint_{x\leq y} \mathfrak K((x,y), \cdot) \nu_j(dx,dy) \quad \text{ and }\quad \phi^n_j(\cdot) = \iint_{x\leq y} \mathfrak K((x,y), \cdot) \nu^n_j(dx,dy).
\end{align*}
By straightforward calculations, we obtain
\begin{align*}
    \Psi_{c,n}^2 =  \sup_{\omega\in B_1(\mathcal{H})}\left( \iint_{x\leq y}\omega(x,y)\widehat\rho^c(dx,dy) \right)^2 = \left\| \phi_0^n - \phi_1^n \right\|^2_{\mathcal{H}}
\end{align*} 
and 
\begin{align*}
    \Psi_{c}^2 =  \sup_{\omega\in B_1(\mathcal{H})}\left( \iint_{x\leq y}\omega(x,y)\rho^c(dx,dy) \right)^2 = \left\| \phi_0 - \phi_1 \right\|^2_{\mathcal{H}}.
\end{align*}
Consequently, the first part of Theorem \ref{theo:consistency} follows from convergence of the aforementioned distances:
\begin{lemma}\label{lem:conv_embeddings}
    We have $\left\| \phi_0^n - \phi_1^n \right\|^2_{\mathcal{H}} \to \left\| \phi_0 - \phi_1 \right\|^2_{\mathcal{H}}$ in probability.
\end{lemma}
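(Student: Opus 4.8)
The plan is to prove Lemma~\ref{lem:conv_embeddings} by establishing that each empirical embedding $\phi_j^n$ converges to its population counterpart $\phi_j$ in the RKHS norm, in probability, and then invoking the continuity of the map $(\phi_0,\phi_1)\mapsto\|\phi_0-\phi_1\|_{\mathcal H}^2$. By the triangle inequality,
\begin{align*}
\big|\,\|\phi_0^n-\phi_1^n\|_{\mathcal H}^2 - \|\phi_0-\phi_1\|_{\mathcal H}^2\,\big|
\le \big(\|\phi_0^n-\phi_1^n\|_{\mathcal H}+\|\phi_0-\phi_1\|_{\mathcal H}\big)\big(\|\phi_0^n-\phi_0\|_{\mathcal H}+\|\phi_1^n-\phi_1\|_{\mathcal H}\big),
\end{align*}
so, since $\mathfrak K$ is bounded (hence all embeddings have norm bounded by a deterministic constant), it suffices to show $\|\phi_0^n-\phi_0\|_{\mathcal H}\to 0$ and $\|\phi_1^n-\phi_1\|_{\mathcal H}\to 0$ in probability. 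The boundedness of $\mathfrak K$ also guarantees Bochner integrability of the feature maps, so all these embeddings are well-defined elements of $\mathcal H$.

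For the first term, $\phi_0^n(\cdot)=\frac1n\sum_{i=1}^n\Delta_i\widehat\pi^c(X_i,T_i)\mathfrak K((X_i,T_i),\cdot)$, and I would first replace $\widehat\pi^c(X_i,T_i)$ by $\pi^c(X_i,T_i)$: the resulting RKHS error is controlled in norm by $\frac1n\sum_i\Delta_i\ind_{\{X_i<T_i\}}|\widehat\pi^c(X_i,T_i)-\pi^c(X_i,T_i)|\cdot\sup\|\mathfrak K((x,y),\cdot)\|_{\mathcal H}$, and by Cauchy--Schwarz this is at most $\sqrt{c\,A_n}$ in the notation of Proposition~\ref{prop:sumAsmall}, which tends to $0$ a.s. The remaining term $\frac1n\sum_i\Delta_i\pi^c(X_i,T_i)\mathfrak K((X_i,T_i),\cdot)$ is an i.i.d.\ average of bounded $\mathcal H$-valued random elements with mean $\mathbb E(\Delta_1\pi^c(X_1,T_1)\mathfrak K((X_1,T_1),\cdot))$; one checks this mean equals $\phi_0$ by unfolding $\pi_1^c(dx,dy)=\mathbb E(\Delta\,\delta_{(X,T)}(dx,dy))$, i.e.\ $\phi_0=\iint\pi^c(x,y)\mathfrak K((x,y),\cdot)\pi_1^c(dx,dy)$. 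A law of large numbers in Hilbert space (or, since the summands are uniformly bounded, a second-moment argument showing $\mathbb E\|\cdot\|_{\mathcal H}^2\to0$) then gives convergence in probability.

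For the second term, $\phi_1^n(\cdot)=\frac1{n^2}\sum_{i,k}\Delta_k\ind_{\{X_k\le X_i\le T_k\le T_i\}}\mathfrak K((X_i,T_k),\cdot)$ is a $V$-statistic of order $2$ taking values in $\mathcal H$, with bounded symmetrizable kernel, so by the Hilbert-space law of large numbers for $V$-statistics it converges in probability (indeed a.s.) to the expectation of the kernel over two independent copies; identifying this limit with $\phi_1=\iint_{x\le y}\mathfrak K((x,y),\cdot)\,\pi^c(dx,y)\pi_1^c(x,dy)$ is again a matter of unfolding the definitions of $\nu_1$ and writing $\pi^c(dx,y)$ and $\pi_1^c(x,dy)$ as expectations over the two independent copies $(X_1',T_1')$ and $(X_2',T_2',\Delta_2')$. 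Combining the two pieces with the displayed triangle-inequality bound finishes the proof.

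The main obstacle is the handling of $\phi_1^n$: it is a genuine (degenerate-in-form) $V$-statistic whose kernel $\mathfrak K((X_i,T_k),\cdot)$ couples the two sample indices through the indicator $\ind_{\{X_k\le X_i\le T_k\le T_i\}}$, so one cannot simply invoke a scalar SLLN but must appeal to a vector-valued (or, via $\|\cdot\|_{\mathcal H}^2$, a scalar but higher-order) law of large numbers for $V$-statistics, and then carefully verify that the limiting mean really is the measure-theoretic object $\phi_1$ defined through $\nu_1$. The replacement-of-$\widehat\pi^c$ step is comparatively routine given Proposition~\ref{prop:sumAsmall}, but care is needed to state it as an RKHS-norm bound rather than a pointwise one.
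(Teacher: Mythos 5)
Your proof is correct, but it is organised differently from the paper's. The paper never shows convergence of the individual embeddings: it expands the squared distance directly, $\|\phi_0^n-\phi_1^n\|^2_{\mathcal H}=V_{0,0}-2V_{0,1}+V_{1,1}$, uses Proposition \ref{prop:sumAsmall} (in scalar form, exactly as you do, via a Cauchy--Schwarz step) to replace $\widehat\pi^c$ by $\pi^c$, and then applies the ordinary scalar strong law of large numbers for $V$-statistics of orders $2$, $3$ and $4$ to identify the limits of the three inner products with $\|\phi_0\|^2_{\mathcal H}$, $\langle\phi_0,\phi_1\rangle_{\mathcal H}$ and $\|\phi_1\|^2_{\mathcal H}$. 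You instead prove the stronger statements $\|\phi_0^n-\phi_0\|_{\mathcal H}\to0$ and $\|\phi_1^n-\phi_1\|_{\mathcal H}\to0$ and finish by continuity of the norm, which requires a vector-valued law of large numbers: an i.i.d.\ Hilbert-space LLN for $\phi_0^n$ and an $\mathcal H$-valued order-two $V$-statistic LLN for $\phi_1^n$ (or, as you note, a reduction to scalar second moments, which would essentially reproduce the paper's expansions). Your identifications of the limiting means with $\phi_0$ and $\phi_1$ through $\pi_1^c(dx,dy)=\E(\Delta\,\delta_{(X,T)}(dx,dy))$ and $\pi^c(dx,y)\,\pi_1^c(x,dy)$ as expectations over two independent copies are exactly right, and your triangle-inequality bound closes the argument correctly. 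What your route buys is a cleaner, more modular statement (convergence of the embeddings themselves, which is of independent interest) at the price of invoking Bochner/vector-valued LLN machinery and, implicitly, separability of the RKHS; the paper's route stays entirely scalar and only needs the classical $V$-statistic SLLN, at the price of manipulating three separate multi-index sums.
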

The proof of Lemma \ref{lem:conv_embeddings} is given below. For the second part of Theorem \ref{theo:consistency}, recall that by assumption the chosen kernel $\mathfrak K$ is $c_0$-universal and, thus, the embedding of finite signed Borel measures is injective, see \cite{sriperumbudur2010hilbert} for details. In particular, $\Psi_{c}^2= \left\| \phi_0 - \phi_1 \right\|^2_{\mathcal{H}}$ equals zero if and only if $\nu_0\equiv \nu_1$ , or equivalently $\rho^c(x,y)=0$ for almost all $x\leq y$. Consequently, it remains to verify the following lemma, which is proven below.
\begin{lemma}\label{lem:conv_rho=0_iff}
    $\rho^c(x,y) = 0$ for almost all $x \leq y$ if and only if the null hypothesis of quasi independence is fulfilled.
\end{lemma}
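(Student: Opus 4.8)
The plan is to prove both directions, with the easy direction being "$H_0 \Rightarrow \rho^c = 0$", which is already established as Proposition \ref{prop:psi_c=0}; so the real content is the converse. Assume $\rho^c(x,y)=0$ for almost all $x\leq y$ with $S_{T|X=x}(y)>0$. As stated in the excerpt, this is equivalent to Equation \eqref{eqn:id}:
\begin{align*}
\frac{\partial^2\pi^c_1(x,y)}{\partial x\partial y}=\frac{1}{\pi^c(x,y)}\frac{\partial\pi^c(x,y)}{\partial x}\frac{\partial\pi_1^c(x,y)}{\partial y},
\end{align*}
valid on the region where $\pi^c(x,y)>0$. The first step is to record the explicit forms of the relevant partial derivatives using the integral representations from the proof of Proposition \ref{prop:psi_c=0}: namely $\frac{\partial^2\pi^c_1}{\partial x\partial y}(x,y) = S_{C|X=x}(y)f_{XY}(x,y)$, $\frac{\partial \pi^c}{\partial x}(x,y) = S_{C|X=x}(y)S_{Y|X=x}(y)f_X(x)$, and $\frac{\partial \pi^c_1}{\partial y}(x,y) = -\int_0^x S_{C|X=t}(y)f_{XY}(t,y)\,dt$.

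The key step is to turn \eqref{eqn:id} into a separable ODE. Substituting the above, and writing $g(x,y) := S_{C|X=x}(y)f_{XY}(x,y)$ and $G(x,y) := \int_0^x g(t,y)\,dt = -\partial_y \pi_1^c(x,y)$, one has $\partial_x G = g$, and \eqref{eqn:id} becomes, after dividing by $S_{C|X=x}(y)$ where it is positive,
\begin{align*}
f_{XY}(x,y) = \frac{S_{Y|X=x}(y)f_X(x)}{\pi^c(x,y)}\, G(x,y) = \frac{\partial_x \pi^c(x,y)}{S_{C|X=x}(y)\,\pi^c(x,y)}\,G(x,y).
\end{align*}
Multiplying through by $S_{C|X=x}(y)$ gives $\partial_x G(x,y) = \frac{\partial_x \pi^c(x,y)}{\pi^c(x,y)} G(x,y)$, i.e.\ $\partial_x \log G(x,y) = \partial_x \log \pi^c(x,y)$ for fixed $y$. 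Integrating in $x$ yields $G(x,y) = \pi^c(x,y)\,h(y)$ for some function $h$ depending only on $y$ (on each maximal $x$-interval where $\pi^c(\cdot,y)>0$; since $\pi^c(\cdot,y)$ is monotone in $x$ this interval is of the form $(x_0(y),\infty)$ or all of $\R_+$, so no gluing issue arises). Differentiating back in $x$ gives $g(x,y) = \partial_x\pi^c(x,y)\,h(y)$, i.e.
\begin{align*}
S_{C|X=x}(y)f_{XY}(x,y) = S_{C|X=x}(y)S_{Y|X=x}(y)f_X(x)\,h(y),
\end{align*}
so $f_{XY}(x,y) = \big(S_{Y|X=x}(y)f_X(x)\big)\,h(y)$ wherever $S_{C|X=x}(y)>0$. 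It remains to argue that the factor $S_{Y|X=x}(y)f_X(x)$ in fact depends on $x$ only: this follows because $f_{XY}(x,y)=f_{Y|X=x}(y)f_X(x)$ while the right-hand side is a product of a function of $x$ and a function of $y$, so $f_{Y|X=x}(y) = \tilde f_Y(y)\,\tilde f_X(x)/f_X(x)$; integrating over $y$ on the observable set and using that $\int f_{Y|X=x}(y)\,dy$ contributes a factor independent of $x$ up to the censoring-support constraint, one pins down the decomposition $f_{XY}(x,y)=\tilde f_X(x)\tilde f_Y(y)$ for all $x\leq y$ with $S_{T|X=x}(y)>0$, which is exactly $H_0$ in the form \eqref{eqn:factorize_den}. (The reverse direction, $H_0\Rightarrow \rho^c=0$ a.e., is Proposition \ref{prop:psi_c=0}.)

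The main obstacle I anticipate is the bookkeeping around the support set $\{(x,y): x\leq y,\ S_{T|X=x}(y)>0\}$: the identity \eqref{eqn:id} only holds where $\pi^c>0$, the division by $S_{C|X=x}(y)$ only where that is positive, and one must check that integrating $\partial_x\log G = \partial_x\log\pi^c$ does not lose a $y$-dependent multiplicative constant in an uncontrolled way — this is handled by observing $\{x : \pi^c(x,y)>0\}$ is a half-line so the "constant of integration" $h(y)$ is genuinely a single function of $y$. A secondary technical point is justifying differentiation under the integral sign and the a.e.\ manipulations, which is routine given the continuity/density assumptions on $(X,Y,C)$ stated in Section \ref{sec:right-censoring}. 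Everything else is direct substitution and one separable ODE.
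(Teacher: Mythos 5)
Your first half is sound and essentially reproduces the paper's argument in different clothing: substituting the explicit derivatives into \eqref{eqn:id}, dividing out $S_{C|X=x}(y)$, and solving the separable equation $\partial_x \log G(x,y)=\partial_x\log \pi^c(x,y)$ to get $G(x,y)=\pi^c(x,y)h(y)$ is the same step the paper carries out via the ratio $Q(x,y)=\pi^c(x,y)/M(x,y)$ and the observation $\partial Q/\partial x=0$ (your $h$ is, up to sign, $Q^{-1}$), and your half-line remark about $\{x:\pi^c(x,y)>0\}$ handles the support issue acceptably. The gap is in the final step. From $f_{XY}(x,y)=h(y)\,S_{Y|X=x}(y)f_X(x)$ you assert that "the right-hand side is a product of a function of $x$ and a function of $y$" — but it is not, since $S_{Y|X=x}(y)$ is a genuinely bivariate function at this stage; asserting $f_{Y|X=x}(y)=\tilde f_Y(y)\tilde f_X(x)/f_X(x)$ here assumes exactly what must be proved. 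The suggested fix of integrating over $y$ also fails, because the observable region $\{y\geq x: S_{T|X=x}(y)>0\}$ depends on $x$, and in any case one cannot deduce factorization of a bivariate density from the behaviour of its $y$-integral.

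What is missing is the second half of the paper's proof. Dividing your identity by $f_X(x)$ gives $f_{Y|X=x}(y)=h(y)S_{Y|X=x}(y)$, i.e.\ the conditional hazard rate $\lambda_{Y|X=x}(y)$ equals $h(y)$ and so does not depend on $x$. One then integrates this hazard using the boundary condition $S_{Y|X=x}(x)=1$ (which comes from the left-truncation $Y\geq X$) to obtain
\begin{align*}
S_{Y|X=x}(y)=\exp\Bigl(-\int_x^y h(s)\,ds\Bigr)=\frac{g^*(y)}{g(x)},\qquad g(x)=\exp\Bigl(-\int_{l_X}^x h(s)\,ds\Bigr),
\end{align*}
so the $x$-dependence of $S_{Y|X=x}(y)$ separates multiplicatively; only then does $f_{XY}(x,y)=\bigl(f_X(x)/g(x)\bigr)\bigl(h(y)g^*(y)\bigr)$ give the form \eqref{eqn:factorize_den}. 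The paper additionally takes care to define $\tilde f_Y$ piecewise via a sequence $t_n\downarrow l_X$ (since the reference point $t$ in $f_{XY}(t,y)$ must satisfy $t<y$), and restricts throughout to observable pairs with $S_{C|X=x}(y)f_{XY}(x,y)>0$. Note that the lower limit $x$ in the hazard integral is precisely why the conclusion is quasi-independence with $\tilde f_X,\tilde f_Y$ differing from the true marginals, rather than genuine independence; your proposal skips the step where this structure appears.
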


\subsection{Proof of Lemma \ref{lem:conv_embeddings}}
First, observe that
\begin{align*}
	\Psi_{c,n}^2 = \sup_{\omega\in B_1(\mathcal{H})}\left( \iint_{x\leq y}\omega(x,y)\widehat\rho^c(dx,dy) \right)^2  = \left \| \phi_0^n - \phi_1^n \right\|^2_{\mathcal{H}} = V_{0,0} - 2V_{0,1} + V_{1,1},
\end{align*}
where
\begin{align*}
	&V_{0,0} = \left\| \phi_0^n \right\|^2_{\mathcal{H}} = \frac{1}{n^2} \sum_{j,i=1}^n  \mathfrak{K}((X_i,T_i),(X_j,T_j))\Delta_i\Delta_j \widehat\pi^c(X_i,T_i)\widehat\pi^c(X_j,T_j),\\
	&V_{0,1} = \langle \phi_0^n, \phi_1^n \rangle_{\mathcal H} = \frac{1}{n^3} \sum_{i,j,k = 1}^n \mathfrak{K}((X_i,T_i),(X_j,T_k)) \Delta_i \Delta_k \widehat\pi^c(X_i,T_i) \ind_{\{X_k\leq X_j < T_k \leq T_j\}}, \\
	&V_{1,1} = \left\| \phi_1^n \right\|^2_{\mathcal{H}} = \frac{1}{n^4} \sum_{i,j,k,\ell = 1}^n \mathfrak{K}((X_i,T_\ell),(X_j,T_k))  \Delta_k \Delta_\ell\ind_{\{X_\ell\leq X_i < T_\ell \leq T_i\}} \ind_{\{X_k\leq X_j < T_k \leq T_j\}}.
\end{align*}
By Proposition \ref{prop:sumAsmall} we can replace $\widehat\pi_c$ by $\pi_c$ for all asymptotic considerations, a detailed explanation for $V_{0,0}$ is given below. Thus, $V_{0,0}$, $V_{0,1}$ and $V_{1,1}$ are asymptotically equivalent to $V$-statistics of order 2, 3 and 4, respectively. For the desired statement, it remains to show that (i) $V_{0,0} \to \left\| \phi_0 \right\|^2_{\mathcal{H}}$ (ii) $V_{0,1} \to \langle \phi_0, \phi_1\rangle_{\mathcal H}$ (iii) $V_1 \to  \left\| \phi_1 \right\|^2_{\mathcal{H}}$. All three convergences follow from the strong law of large numbers for $V$-statistics and Proposition \ref{prop:sumAsmall}, as explained exemplary for (i): 

Since the kernel $\mathfrak K$ and $\widehat \pi^c$ are bounded by some $c_1>0$ and $1$, respectively, we can deduce from Proposition \ref{prop:sumAsmall} and the triangular inequality that almost surely
\begin{align*}
	&\Bigl | \frac{1}{n^2} \sum_{j,i=1}^n  \mathfrak{K}((X_i,T_i),(X_j,T_j))\Delta_i\Delta_j \Bigl( \widehat\pi^c(X_i,T_i)\widehat\pi^c(X_j,T_j) - \pi^c(X_i,T_i) \pi^c(X_j,T_j) \Bigr) \Bigr| \\
	& \leq c_1 \frac{1}{n^2} \sum_{j,i=1}^n \Delta_i\Delta_j \left(\left| \widehat\pi^c(X_i,T_i) - \pi^c(X_i,T_i)\right|  + \left|\widehat\pi^c(X_j,T_j) - \pi^c(X_j,T_j) \right|\right) \\
	& \leq  \frac{2 c_1}{n^2} \sum_{j,i=1}^n \Delta_i\Delta_j \Bigl | \widehat\pi^c(X_i,T_i) - \pi^c(X_i,T_i) \Bigr | \\
	& \leq \frac{2 c_1}{n} \sum_{i=1}^n \Delta_i \Bigl | \widehat\pi^c(X_i,T_i) - \pi^c(X_i,T_i) \Bigr | \to 0.
\end{align*}
Thus, we can replace for further asymptotic investigations $\widehat \pi^c$ by $\pi^c$. Finally, by the strong law of large numbers
\begin{align*}
	V_{0,0}& \to \E\Bigl( \mathfrak{K}((X_1,T_1),(X_2,T_2)) \Delta_1\Delta_2 \pi^c(X_1,T_1)\pi^c(X_2,T_2) \Bigr) \\
	& = \iint_{x_1<t_2} \iint_{x_2<t_2} \mathfrak{K}((x_1,t_1),(x_2,t_2)) d\nu_0(x_1,t_1) d\nu_0(x_2,t_2).
\end{align*}

\subsection{Proof of Lemma \ref{lem:conv_rho=0_iff}}
The first implication was already shown in the proof of Proposition \ref{prop:psi_c=0}. Now, assume that $\rho^c=0$. Then
\begin{align}\label{eqn:rho=0}
	\pi^c(x,y) \frac{\partial^2 \pi_1^c(x,y)}{\partial x \partial y} = \frac{ \partial\pi^c(x,y)}{\partial x}\frac{ \partial\pi_1^c(x,y)}{\partial y}.
\end{align}
Define $M(x,y) = \frac{\partial \pi_1^c(x,y)}{\partial y} = \int_0^x S_{C|X=x'}(y)f_{XY}(x',y)dx'$, then Equation \eqref{eqn:rho=0} can be rewritten as
\begin{align}\label{eqn:M/dm=pi/dpi}
	\pi^c(x,y) \frac{\partial M(x,y)}{\partial x} = \frac{\partial\pi^c(x,y)}{\partial x} M(x,y).
\end{align}
Set $Q(x,y) = \ind_{\{M(x,y)\neq 0\}}\pi^c(x,y)/M(x,y)$. From \eqref{eqn:rho=0} we can conclude that $M(x,y)=0$ implies  $\pi^c(x,y)= 0$ or
\begin{align*}
	0 = \frac{\partial^2 \pi_1^c(x,y)}{\partial x \partial y} = - S_{C|X=x}(y)f_{XY}(x,y).
\end{align*}
But, the right-hand side of the equation is positive  for all observable $(x,y)$, i.e. such that $S_{C|X=x}(y),f(x,y),f(x)>0$. Note that only these pairs are relevant and, thus, we restrict to them subsequently. Thus, $\pi^c(x,y) = Q(x,y) M(x,y)$ and differentiation with respect to $x$ leads to 
\begin{align*}
	\frac{\partial\pi^c(x,y)}{\partial x} &= \frac{\partial Q(x,y)}{\partial x} M(x,y) + Q(x,y) \frac{ \partial M(x,y)}{\partial x} \\
	 &= \frac{\partial Q(x,y)}{\partial x} M(x,y) +\frac{\pi^c(x,y)}{M(x,y)}\frac{\partial M(x,y)}{\partial x} \\
	 \,\,&= \frac{\partial Q(x,y)}{\partial x} M(x,y) + \frac{ \partial \pi^c(x,y)}{\partial x}.
\end{align*}
Thus, $\partial Q(x,y)/\partial x = 0$ for all (observable) $x \leq y$. In particular, $Q$ does not depend on $x$, and we can write $Q(y)$ instead of $Q(x,y)$. Consequently, we can deduce from the definitions of $Q$, $M$ and $\pi^c$ that 
\begin{align*}
	- Q(y)\int_0^x S_{C|X=x'}(y)f_{XY}(x',y)dx' = \int_0^x S_{Y|X=x'}(y)S_{C|X=x'}(y)f_{X}(x')dx'.
\end{align*}
In particular, we can deduce that for all observable $x \leq y$
\begin{align*}
	- Q(y) S_{C|X=x}(y)f_{XY}(x,y) =  S_{Y|X=x}(y)S_{C|X=x}(y)f_{X}(x).
\end{align*}
From this we obtain
\begin{align*}
	f_{XY}(t,y) &= - Q(y)^{-1} S_{Y|X=t}(y)f_{X}(t)\\
	\Leftrightarrow 
	f_X(t) f_{Y|X=t}(y) &= - Q(y)^{-1} S_{Y|X=t}(y)f_{X}(t) \\
	\Leftrightarrow 
	\lambda_{Y|X=x}(y) &= - Q(y)^{-1},
\end{align*}
where $\lambda_{Y|X=x}$ denotes the hazard rate function, which does not depend on $x$. Note that $S_{Y|X=x}(x)=1$ and
\begin{align*}
	S_{Y|X=x}(y) = \frac{ S_{Y|X=x}(y) }{ S_{Y|X=x}(x)} = \exp\Bigl(   \int_x^y Q(s)^{-1} ds \Bigr).
\end{align*}
Moreover, for $t<x< y$
\begin{align*}
	\frac{ S_{Y|X=x}(y) }{S_{Y|X=t}(y)} &= \exp\Bigl(  \int_x^y Q(s)^{-1}ds - \int_t^y Q(s)^{-1}ds  \Bigr) = \frac{g(t)}{g(x)},
\end{align*}
where $g(x) = \exp(\int_{l_{X}}^x Q(s)^{-1}ds)$ and $l_X = \inf\{s\geq 0:f_X(s)>0\}$ is the lower bound of the support of $X$ (given $X\leq Y$). Differentiation with respect to $y$ leads to
\begin{align*}
	f_{Y|X=x}(y) = f_{Y|X=t}(y) \frac{g(t)}{g(x)}
\end{align*}
and, thus,
\begin{align}\label{eqn:f_XY_final_proof}
	f_{XY}(x,y) = \frac{f_{XY}(t,y)g(t)}{f_X(t)} \frac{f_X(x)}{g(x)}.
\end{align}
Now, let $(t_n)_{n\in\N}$ be a strictly decreasing sequence with $f(t_n)>0$ and $t_n\to l_X$ as $n\to \infty$. Set $t_0=\infty$. Then we can deduce from Equation \eqref{eqn:f_XY_final_proof} that 
\begin{align*}
	&f_{XY}(x,y) = \widetilde f_Y(y) \widetilde f_X(x),
\end{align*}
where
\begin{align*}
	 \widetilde f_Y(y) = \sum_{n=1}^\infty \frac{f_{XY}(t_n,y)g(t_n)}{f_X(t_n)} \ind_{\{y\in(t_{n}, t_{n-1})\}},\quad
	 \widetilde f_X(x)  = \frac{f_X(x)}{g(x)}.
\end{align*}

\section{Efficient implementation of wild bootstrap}\label{sec:efficient}
Similarly to the work of \cite{chwialkowski2014wild}, we can implement our Wild-Bootstrap efficiently by considering the identity $\text{trace}(\boldsymbol A \boldsymbol B) = \sum_{ij} (\boldsymbol{A} \odot \boldsymbol{B})_{ij}$, where $\boldsymbol A$ and $\boldsymbol B$ denote $n\times n$ matrices, and $\odot$ denotes the element-wise product. By using this identity our test-statistic can be written as 
\begin{align*}
 \Psi_{c,n}^2
&=\frac{1}{n^2}\text{trace}(\boldsymbol{K}\boldsymbol{\widehat{\pi}}^c\boldsymbol{\tilde L}\boldsymbol{\widehat{\pi}}^c-2{\boldsymbol K}\boldsymbol{\widehat{\pi}}^c{\boldsymbol{\tilde L}}\boldsymbol{{B}}^\intercal+\boldsymbol{K}\boldsymbol{ B}\boldsymbol{{ L}}\boldsymbol{{B}}^\intercal)\\
&=\sum_{ij}\left(\boldsymbol K\odot (\boldsymbol{\widehat{\pi}}^c\boldsymbol{\tilde L}\boldsymbol{\widehat{\pi}}^c-2\boldsymbol{\widehat{\pi}}^c{\boldsymbol{\tilde L}}\boldsymbol{{B}}^\intercal+\boldsymbol{ B}\boldsymbol{{ L}}\boldsymbol{{B}}^\intercal)\right)_{ij}\\
&=\sum_{ij}\boldsymbol{M}_{ij},
\end{align*}
where $\boldsymbol M= \boldsymbol K\odot (\boldsymbol{\widehat{\pi}}^c\boldsymbol{\tilde L}\boldsymbol{\widehat{\pi}}^c-2\boldsymbol{\widehat{\pi}}^c{\boldsymbol{\tilde L}}\boldsymbol{{B}}^\intercal+\boldsymbol{ B}\boldsymbol{{ L}}\boldsymbol{{B}}^\intercal)$ is a $V$-statistic matrix. Then, the wild-bootstrap version of the preceding $V$-statistic is $(\Psi_{c,n}^{\text{WB}})^2=W^\intercal \boldsymbol M W $ where $W = (W_1, \dots, W_n) \in \R^n$ are the wild bootstrap weights. In this way, we only need to compute $O(n^2)$ sum once, for each wild bootstrap, instead of computing several (actually 6 times) $O(n^3)$ matrix multiplications and two $O(n^2)$ matrix multiplications for $K^W$.

\section{Review of related quasi independence tests}\label{sec:method_review}
In this section, we review the quasi-independence tests implemented in Section \ref{sec:Experiments} of the main text.

\textbf{WLR} refers to the weighted log-rank test discussed in \cite{EmuraWang2010}, which is defined as
\begin{align*}
    L_W = \int_{x\leq y} W(x,y)\left \{ N_{11}(dx,dy) - \frac{N_{1 {\bullet}}(dx,y) N_{{\bullet}1}(x,dy)}{R(x,y) } \right\},
\end{align*}
where  
$$
N_{11}(dx,dy) = \sum_j \ind(X_j=x, T_j= y, \Delta_j=1),
\quad
N_{{\bullet}1}(x,dy) = \sum_j \ind(X_j \leq x, T_j = y, \Delta_j=1),
$$
$$
N_{1 {\bullet}}(dx,y) = \sum_j \ind(X_j = x, T_j\geq y),
\quad
R(x,y) = \sum_j \ind(X_j\leq x, T_j \geq y),$$
and $W:\R_+^2\to\R$ is the weight function given by $W(x,y)=R(x,y)$. We note that, $R(x,y)=n\widehat\pi^c(x,y)$ defined in our notation. It is straightforward to see $\Psi_{c,n}^2=\frac{1}{n^2}L_W^2$ in the case $\mathfrak K =1 $.

\textbf{WLR\_SC} refers to the previous log-rank test with weight $W$ given by $W(x,y)=\int_0^{x} \widehat{S}_{C_R}((y-u)-)^{-1}\widehat\pi^c(du,y)$, where $\widehat S_{C_R}$ is the Kaplan-Meier estimator based on the data $((C_i-X_i,1-\Delta_i))_{i=1}^n$. tis specific test was proposed to the general assumption $Y_i\perp C_i|X_i$.

\textbf{M\&B} refers to the conditional Kendall's tau statistic in discussed in \cite{MartinBetensky2005}. Let 
\begin{align*}
B_{ij} &= \{\max(X_i,X_j) \leq \min(T_i,T_j)\}\\
&\quad\cap \{(\Delta_i=\Delta_j=1) \cup (T_j>T_i,\Delta_i=1,\Delta_j=0)\cup(T_i>T_j,\Delta_i=1,\Delta_j=0)\}.    
\end{align*}
The conditional Kendall's tau statistic is given by 
\begin{align*}
\widehat\tau_b&=\sum_{i<j}\ind_{\{B_{ij}\}}\text{sign}((X_i-X_j)(T_i-T_j)).
\end{align*}
\textbf{MinP1} and \textbf{MinP2} refers to the minimal p-value selection tests which are permutation based methods proposed in \cite{chiou2018permutation}. These tests are based on the underlying principle that, under quasi-independence, the distributions of $Y|X\leq t$ and $Y|X>t$ should not differ, where $t$ denotes some cut-point. Given a collection of possible cut-points $t$, the authors perform several two-sample log-rank tests for comparing $\{(T_i,\Delta_i):X_i\leq t\}$ and $\{(T_i,\Delta_i):X_i> t\}$ (under right-censored data), and set as their test-statistic the minimum log-rank $p$-value obtained. To guarantee meaningful comparisons, the authors consider cut-points that yield at least $E$ events in each group.

The first test proposed is the following:

\tb{MinP1}: 
\begin{itemize}
    \item[1] Set $m=0$
    \item[2] Set $m=m+1$ and split the data into two groups $\{i:X_i\leq X_m\}$ and $\{i:X_i> X_m\}$.
    \item[3]Check the groups are admissible by verifying $E\leq \sum_{i=1}^n\Delta_i\ind_{\{X_i<X_m\}}\leq n-E$. If the latter holds, perform a two-sample log-rank test for comparing $\{(T_i,\Delta_i):X_i\leq X_m\}$ and $\{(T_i,\Delta_i):X_i> X_m\}$, and record the $p$-value obtained. If the condition is not satisfied, record a $p$-value equal to 1.
    \item[4] If $m<n$ return to Step 2
    \item[5] Set as test-statistic $minp_1$ the smallest p-value obtained. 
\end{itemize}

Alternatively, the authors propose a second test, which splits the data according to whether or not, the entry times belong to the interval $(t-\epsilon,t+\epsilon)$, where $t$, again, denotes a cut-point  and $\epsilon>0$. Similarly to the previous case, we need to ensure that each group contains at least $R$ data points, this can be done by choosing a suitable $\epsilon>0$.

\tb{MinP2}: 
\begin{itemize}
    \item[1] Set $m=0$
    \item[2] Set $m=m+1$ and split the data into two groups $\{i:X_i\in (X_m-\epsilon_m,X_m+\epsilon_m)\}$ and $\{i:X_i\not\in (X_m-\epsilon_m,X_m+\epsilon_m)\}$, where $\epsilon_m$ is the smallest $\epsilon>0$ such that there are at least $E$ data-points in each group. Record the value $\epsilon_m$.
    \item[3]If $m<n$ return to Step 2. 
    \item[4]Set $\epsilon=\max_{m}\epsilon_m$ and $m=0$
    \item[5]Set $m=m+1$. Verify $E\leq \sum_{i=1}^n\Delta_i\ind_{\{T_m-\epsilon<T_i<T_m+\epsilon\}}\leq n-E$ which checks that the partition of the data is admissible (under right-censoring). If the latter holds, perform a two-sample log-rank test  for comparing each group and record the $p$-value. If the partition is not admissible record a $p$-value equal to 1.
    \item[6] If $m<n$ return to Step 5.
    \item[7] Set as test-statistic $minp_2$ the smallest p-value obtained.
\end{itemize}

The rejection regions for these tests are computed by using a permutation approach.

\section{Additional discussions for empirical results}\label{app:experiments}
In this section, we provide additional information and discussions on our empirical findings.

\subsection{Computational runtime}

As shown in Table \ref{tab:run_time}, our proposed test, implemented as described in Appendix \ref{sec:efficient}, has a significantly lower runtime when compared with the permutation approaches which require much longer run-time. M\&B 
 implements the conditional Kendall's tau statistic which has a closed-form expression for the null distribution, therefore the runtime is much lower again.

\begin{table}[h!]
    \centering
    \begin{tabular}{lrrrrrrrrr}
\toprule
{\quad n} &     100 &     200 &     300 &      400 &      500 &      600 &      700 &      800 &      900 \\
\midrule
KQIC &   0.012 &   0.019 &   0.031 &    0.041 &    0.063 &    0.085 &    0.130 &    0.152 &    0.200 \\
MinP1 &  15.77 &  41.62 &  56.61 &   90.52 &  113.7 &  154.4 &  254.4 &  299.2 &  389.1 \\
MinP2 &  20.33 &  35.08 &  59.09 &  101.4 &  123.7 &  174.3 &  242.4 &  300.9 &  354.2 \\
M\&B   &   0.002 &   0.002 &   0.002 &    0.003 &    0.004 &    0.006 &    0.006 &    0.009 &    0.021 \\
\bottomrule
\end{tabular}
\vspace{0.3cm}
\caption{The runtime, in seconds, for a single trial using $500$ wild bootstrap samples for KQIC and $500$ permutations for MinP1 and MinP2. M\&B does not require to approximate the null distribution.}
    \label{tab:run_time}
\end{table}

\subsection{Kernel choice}
\paragraph{Parameter selection} In kernel-based hypothesis testing, test power (i.e.,  the probability of rejecting $H_0$ when it is false) can vary for different choices of kernel parameters, such as the bandwidth in Gaussian kernels \cite{gretton2012optimal}.
Previous works \cite{gretton2012optimal,jitkrittum2018informative,jitkrittum2016interpretable,jitkrittum2017linear,sutherland2016generative} have proposed  to choose the kernel parameters by maximizing a proxy for the test power. In the uncensored setting, the test power is (to a good approximation) increased by maximising the ratio of the test statistic to its standard deviation under the alternative.  We conjecture that the same ratio represents a good criterion in the setting of left-truncation and right-censoring, for which we have strong empirical evidence.  A formal proof remains a topic for future work.

In the censored case, the test power criterion takes the form $\frac{\Psi_{c}^2}{\sigma_{H_1}}$, where $\sigma_{H_1}$ is the standard deviation of $\Psi_{c}^2$ under the alternative hypothesis $H_1$.  Thus, to maximise the test power, we choose the kernel parameter $\theta$ by
$$
\theta^{\ast} = \argmax_{\theta} \frac{{\Psi_{c}^2}}{\sigma_{H_1}}.
$$
In practice, we use part of the data to  compute  ${\Psi_{c,n}^2}/{(\widehat{\sigma}_{H_1}+\lambda)}$, where $\widehat{\sigma}_{H_1}$ is an empirical estimate of ${\sigma}_{H_1}$ and a regularisation parameter $\lambda>0$ is added for numerical stability. 
We then perform the test on the remaining data with the selected $\theta^{\ast}$. 
A $20/80$ train-test split is suggested in \cite{jitkrittum2017linear}  for learning the parameter. We use the regulariser $\lambda=0.01$. 

We next give our empirical estimate  for the variance $\widehat\sigma^2_{H_1}$. First,  $\Psi_{c,n}^2$ can be written as $\Psi_{c,n}^2=\frac{1}{n^2}\sum_{i=1}^n\sum_{j=1}^n J_n((T_i,\Delta_i,X_i),(T_j,\Delta_j,X_j))$, where $J_n$ is defined by
\begin{align*}
J_n((T_i,\Delta_i,X_i),(T_j,\Delta_j,X_j))&=\Delta_i\Delta_j L(T_i,T_j)g_n(X_i,X_j),
\end{align*}
where 
\begin{align*}
    g_n(X_i,X_j)=K(X_i,X_j)\boldsymbol{\widehat\pi^c}_{ii}\boldsymbol{\widehat\pi^c}_{jj}-2\sum_{l=1}^nK(X_i,X_l) \boldsymbol{\widehat\pi^c}_{ii} \boldsymbol{B}_{l,j}+\sum_{l=1}^n\sum_{k=1}^nK(X_k,X_l)\boldsymbol{B}_{k,i}\boldsymbol{ B}_{l,j},
\end{align*}
and $\boldsymbol{\widehat\pi^c}_{ii}=\widehat\pi^c(X_i,T_i)$ and $B_{k,i}=\ind_{\{X_i\leq X_k<T_i\leq T_k\}}/n$. This ``$V$-statistic" form suggests that the variance can be estimated by

\begin{align*}
 \widehat\sigma^2_{H_1}
 &=\frac{1}{n}\sum_{i=1}^n\left(\frac{1}{n}\sum_{j=1}^n J_n(i,j)\right)^2-\left(\frac{1}{n^2}\sum_{i=1}^n\sum_{j=1}^nJ_n(i,j)\right)^2,
\end{align*}
where $J_n(i,j)=J_n((T_i,\Delta_i,X_i),(T_j,\Delta_j,X_j)).$

Finally, some remarks on the performance of our kernel selection heuristic in experiments. For simple cases, our kernel selection procedure makes little difference, since a broad range of kernel bandwidths    yields good results, and the ``median heuristic'' (selection of the bandwidth as the pairwise inter-sample distance) is  adequate. On the other hand, our procedure results in large power improvements for more complex cases such as periodic dependency at high frequencies,  where the median distance between samples does not correspond to the lengthscale at which dependence occurs. Similar phenomena have also been observed previously in  \cite{sutherland2016generative}.

\paragraph{Inverse Multi-Quadratic (IMQ) kernel} We further study the performance of the IMQ kernel on our proposed test. The IMQ kernel has the form $k(x,y) = (c^2 + \|x-y\|^2)^b$, for constant $c>0$ and $b\in(-1,0)$. As proposed in \cite{gorham2017measuring}, we choose $b = -\frac{1}{2}$. We select the parameter $c$ by maximizing a heuristic proxy for test power, as discussed above. The controlled Type-I error is shown in Table \ref{tab:imq_h0}, where $X$ and $Y$ are independent samples from $\operatorname{Exp}(1)$. Truncation and right-censoring apply with censoring time independently generated from exponential distribution. We report the test power of KQIC with IMQ kernel in  later sections.

\begin{table}[h]
    \centering
\begin{tabular}{lrrrrrrrrrr}
\toprule
{\qquad n} &   50  &   100 &   150 &   200 &   250 &   300 &   350 &   400 &   450 &   500 \\
\midrule
KQIC\_IMQ &  0.08 &  0.05 &  0.03 &  0.05 &  0.04 &  0.05 &  0.05 &  0.07 &  0.07 &  0.05 \\
\bottomrule
\\
\end{tabular}
    \caption{Type-I error for IMQ kernels, with $\alpha=0.05$, censoring level $25\%$, $100$ trials, and increasing sample size $n$.}
    \label{tab:imq_h0}
\end{table}

\subsection{Periodic dependencies}\label{app:experiments_period}

As briefly mentioned in the main text, the parameter $\beta$ controls the frequency of  sinusoidal dependence. At a given  sample size,  dependence  becomes harder to detect  as the frequency $\beta$ increases, both for  our test  and for  competing methods. We illustrate the datasets visually in Figure \ref{fig:periodic_sample_by_size}. 
For a fixed sample size, the test power decreases as  frequency increases, which is observed in our results  in  Figure \ref{fig:periodic_res}. For high frequency cases, larger sample size is required to correctly reject the null as shown in Figure \ref{fig:periodic_res_supp}.

\begin{figure}[h!]
\centering
\caption{Samples from periodic dependency model w.r.t. frequency coefficient $\beta$.} \label{fig:periodic_sample_by_size}
\includegraphics[width=1.0\textwidth]{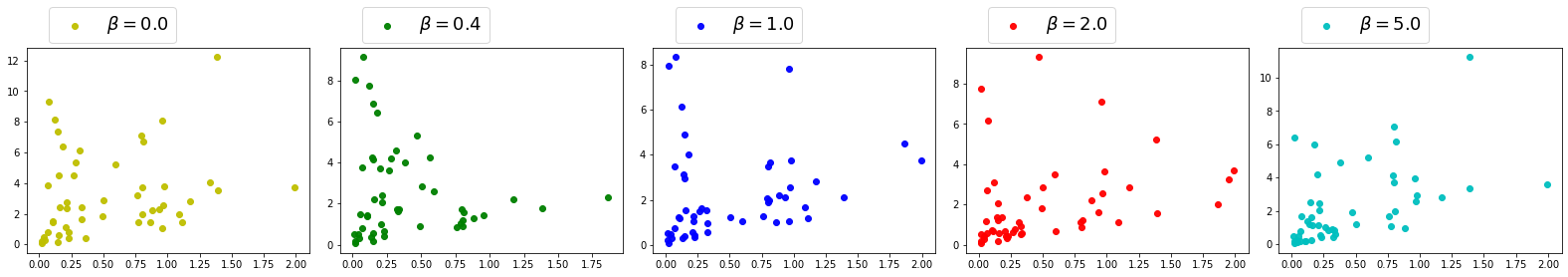}\subcaption{Sample size: n = 50}

\includegraphics[width=1.0\textwidth]{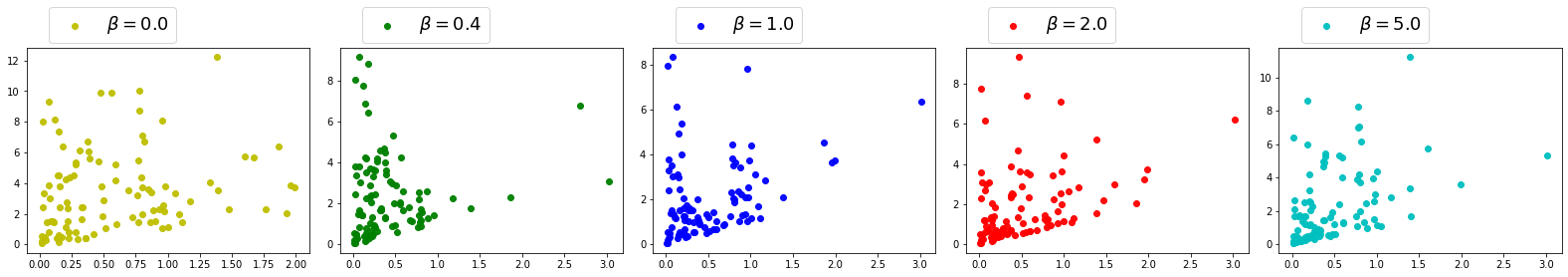}\subcaption{Sample size: n = 100}

\includegraphics[width=1.0\textwidth]{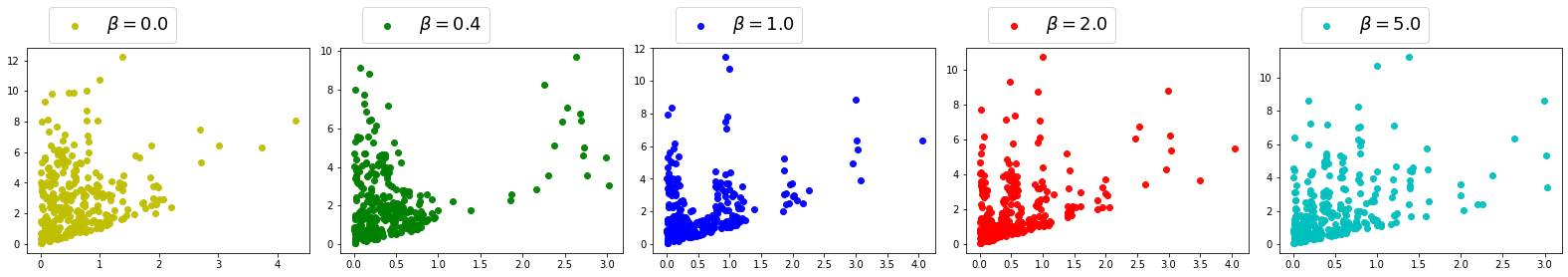}\subcaption{Sample size: n = 300}
\end{figure}

Type-I error is reported in Table \ref{tab:periodic_h0}, and is close to the desired level (subject to finite sample effects).

\begin{table}[h!]
    \centering
\begin{tabular}{lrrrrrrrrrr}
\toprule
{\qquad n } &   100  &   300  &   500  &   700  &   900  &   1100 &   1300 &  1500 &   1700 &  1900 \\
\midrule
KQIC\_Gauss &  0.045 &  0.060 &  0.055 &  0.040 &  0.045 &  0.045 &  0.040 &  0.030 &  0.045 &  0.050 \\
KQIC\_IMQ &  0.050 &  0.055 &  0.045 &  0.030 &  0.020 &  0.040 &  0.025 &  0.020 &  0.015 &  0.020 \\
WLR   &  0.030 &  0.045 &  0.050 &  0.025 &  0.045 &  0.015 &  0.015 &  0.030 &  0.025 &  0.040 \\
WLR\_SC &  0.035 &  0.060 &  0.030 &  0.025 &  0.060 &  0.070 &  0.045 &  0.055 &  0.050 &  0.060  \\
\bottomrule
\\
\end{tabular}
    \caption{Type-I error with $\alpha=0.05$, censoring level $25\%$, $200$ trials, and increasing sample size $n$.}
    \label{tab:periodic_h0}
\end{table}

\subsection{Dependent censoring}\label{app:dependent censoring}
In this section we show that our test achieves correct Type-I error under the null hypothesis even when considering dependent censoring times $C$. As stated in Assumption \ref{Assu:1}, we only require $Y\perp C|X$, which is a standard assumption, as also  considered in \cite{EmuraWang2010}.

\begin{figure}[t!]
    \centering
    \includegraphics[width=1.0\textwidth]{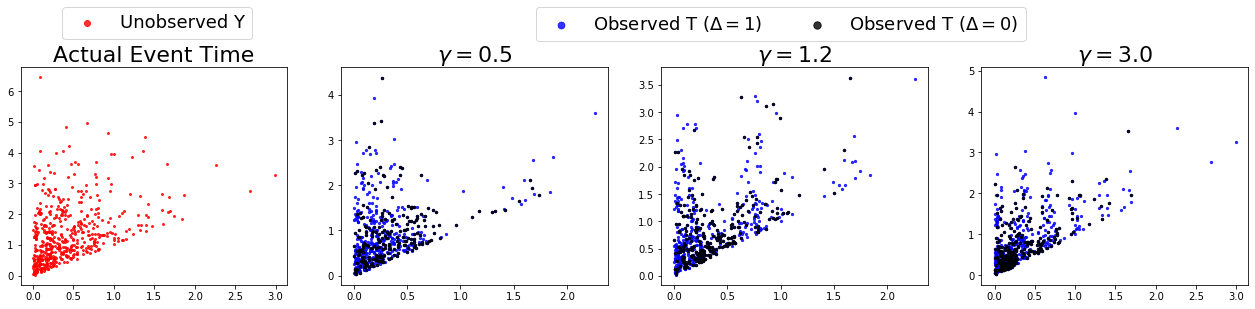}
    \caption{Samples generated from $H_0$ with periodic dependent censoring distributions.}
    \label{fig:depend_censor}
\end{figure}

We generate the data as follows: Sample $X_i \sim \textnormal{Exp}(1)$, then generate $Y_i \sim \textnormal{Exp}(1)$ (independent of $X_i$) and $C_i|X_i \sim \textnormal{Exp}(e^{\cos (2 \pi \gamma  X_i})$. Generate the observed data point $(T_i,\Delta_i, X_i)$, where $T_i=\min\{Y_i,C_i\}$ and $\Delta_i=\ind_{\{T_i=Y_i\}}$ and keep it as a valid sample only if $T_i\geq X_i$. Notice that in this case both left truncation and right-censoring are present in the data. Also, notice that the null hypothesis holds since the survival times $Y_i$ are quasi-independent of the entry times $X_i$. In Figure \ref{fig:depend_censor}, we show the unobserved pairs $(X,Y)$ and the observed pairs $(X,T)$ where the censoring variable is generated using different censoring frequencies $\gamma$. From the plot, we see that the entry times $X$ and survival times $Y$ look quasi-independent, but, due to the periodic dependency of the censoring distribution, the observed data $(X,T)$ show a periodic trend, which looks similar to the observations in Figure \ref{fig:periodic_sample_by_size}. However, since this dependency is due to the censoring times $C$ instead of the survival times $Y$, our tests are able to recover $H_0$ and achieve correct test level, as shown in Table \ref{tab:depend_censor_h0}. The tests proposed in \cite{EmuraWang2010} are also valid under Assumption \ref{Assu:1}, thus we include the results for WLR and WLR\_SC as well. From Table \ref{tab:depend_censor_h0}, we observe that KQIC with both Guassian and IMQ kernels, as well as WLR achieve the correct test level; however, WLR\_SC has slightly higher type-I errors when sample size is small and achieves correct test-level when sample size becomes large (recall that WLR\_SC uses a data dependent weight, thus convergence in this case might be slower).

\begin{table}[t!]
    \centering
\caption{ Type-I error for periodic dependent censoring distributions, with $\alpha=0.05$ and $100$ trials.
}\label{tab:depend_censor_h0}

\vspace{0.3cm}

\begin{tabular}{lrrrrrrrrrr}
\toprule
{\qquad n} &     100 &     200 &     300 &     400 &     500&     600 &     700 &     800 &     900 &     1000 \\
\midrule
KQIC\_Gauss &  0.07 &  0.06 &  0.03 &  0.03 &  0.06 &  0.05 &  0.04 &  0.04 &  0.03 &  0.07 \\
KQIC\_IMQ &  0.07 &  0.06 &  0.04 &  0.01 &  0.03 &  0.04 &  0.05 &  0.05 &  0.06 &  0.07 \\
WLR &  0.07 &  0.05 &  0.03 &  0.01 &  0.03 &  0.04 &  0.05 &  0.04 &  0.03 &  0.07 \\
WLR\_SC &  0.10 &  0.08 &  0.09 &  0.13 &  0.13 &  0.04 &  0.09 &  0.05 &  0.04 &  0.06 \\
\bottomrule
\end{tabular}
\subcaption{Censoring frequency $\gamma = 0.5$. Censoring level $30\%$}\label{tab:depend_censor_h0_5}
\begin{tabular}{lrrrrrrrrrr}
\toprule
{\qquad n} &     100 &     200 &     300 &     400 &     500&     600 &     700 &     800 &     900 &     1000 \\
\midrule
KQIC\_Gauss &  0.03 &  0.02 &  0.01 &  0.04 &  0.05 &  0.06 &  0.05 &  0.04 &  0.06 &  0.04 \\
KQIC\_IMQ &  0.02 &  0.03 &  0.03 &  0.03 &  0.04 &  0.05 &  0.05 &  0.04 &  0.04 &  0.04 \\
WLR &  0.02 &  0.02 &  0.03 &  0.05 &  0.04 &  0.04 &  0.04 &  0.04 &  0.05 &  0.05 \\
WLR\_SC &  0.06 &  0.12 &  0.17 &  0.15 &  0.10 &  0.11 &   0.06 &  0.04 &  0.05 &  0.05 \\
\bottomrule
\end{tabular}
\subcaption{Censoring frequency $\gamma = 1.2$. Censoring level $35\%$}\label{tab:depend_censor_h0_1}
\begin{tabular}{lrrrrrrrrrr}
\toprule
{\qquad n} &     100 &     200 &     300 &     400 &     500&     600 &     700 &     800 &     900 &     1000 \\
\midrule
KQIC\_Gauss &  0.06 &  0.04 &  0.06 &  0.02 &  0.02 &  0.04 &  0.03 &  0.03 &  0.05 &  0.04 \\
KQIC\_IMQ &  0.05 &  0.04 &  0.05 &  0.01 &  0.02 &  0.04 &  0.04 &  0.03 &  0.03 &  0.04 \\
WLR &  0.04 &  0.02 &  0.04 &  0.02 &  0.02 &  0.04 &  0.04 &  0.03 &  0.05 &  0.03 \\
WLR\_SC &  0.09 &  0.10 &  0.13 &  0.15 &  0.10 &  0.08 &  0.05 &  0.03 &  0.03 &  0.04 \\
\bottomrule
\end{tabular}
\subcaption{Censoring frequency $\gamma = 3.0$. Censoring level $40\%$}\label{tab:depend_censor_h0_3}
\vspace{0.3cm}
\end{table}

\subsection{Test performance w.r.t. censoring level}\label{app:censor_level}


We report the Type-I error for different censoring percentages,  see Table \ref{tab:censor_level_h0}. With reasonable censoring level (e.g. < 90\%), the Type-I errors are well controlled. WLR\_SC has higher Type-I with small sample sizes, which is similarly observed in Table \ref{tab:depend_censor_h0}.  However, the Type-I error is less controlled at extremely high censoring percentages, due to the lack for useful information obtained. In practise, we may need to be careful dealing with extremely high censoring when applying the quasi-independence tests.



\begin{table}[ht!]
\centering
\begin{tabular}{lrrrrrrr}
\toprule
{$\%$ censored} & 20 &  35 &  50 &  70 &  85 &  92 &  95 \\
\midrule
{$n=200$} \\
\midrule
KQIC\_Gauss &  0.040 &  0.025 &  0.015 &  0.045 &  0.035 &  0.085 &  0.115 \\
KQIC\_IMQ   &  0.040 &  0.060 &  0.050 &  0.055 &  0.070 &  0.100 &  0.185 \\
WLR        &  0.055 &  0.035 &  0.040 &  0.050 &  0.030 &  0.075 &  0.120 \\
WLR\_SC     &  0.045 &  0.105 &  0.075 &  0.120 &  0.060 &  0.035 &  0.075 \\
\midrule
\end{tabular}

\begin{tabular}{lrrrrrrr}
{$n=300$} \\
\midrule
KQIC\_Gauss &  0.055 &  0.040 &  0.055 &  0.045 &  0.060 &  0.090 &  0.065 \\
KQIC\_IMQ   &  0.045 &  0.050 &  0.070 &  0.050 &  0.050 &  0.105 &  0.115 \\
WLR        &  0.030 &  0.055 &  0.055 &  0.040 &  0.050 &  0.075 &  0.065 \\
WLR\_SC     &  0.080 &  0.120 &  0.140 &  0.095 &  0.125 &  0.095 &  0.025 \\
\midrule
\end{tabular}

\begin{tabular}{lrrrrrrr}
{$n=500$} \\
\midrule
KQIC\_Gauss &  0.040 &  0.050 &  0.035 &  0.030 &  0.030 &  0.050 &  0.090 \\
KQIC\_IMQ   &  0.065 &  0.030 &  0.050 &  0.040 &  0.080 &  0.100 &  0.050 \\
WLR        &  0.035 &  0.035 &  0.050 &  0.035 &  0.040 &  0.060 &  0.075 \\
WLR\_SC     &  0.060 &  0.035 &  0.055 &  0.075 &  0.065 &  0.035 &  0.015 \\
\midrule
\end{tabular}

\begin{tabular}{lrrrrrrr}
{$n=800$} \\
\midrule
KQIC\_Gauss &  0.045 &  0.030 &  0.030 &  0.065 &  0.030 &  0.065 &  0.080 \\
KQIC\_IMQ   &  0.065 &  0.050 &  0.050 &  0.060 &  0.060 &  0.090 &  0.140 \\
WLR        &  0.015 &  0.010 &  0.025 &  0.055 &  0.065 &  0.085 &  0.100 \\
WLR\_SC     &  0.095 &  0.040 &  0.065 &  0.080 &  0.075 &  0.045 &  0.025 \\
\bottomrule
\end{tabular}
\vspace{0.3cm}
    \caption{Type-I error for different censoring level, with $\alpha=0.05$ and $200$ trials,}\label{tab:censor_level_h0}
    \vspace{0.3cm}
\end{table}

\end{document}